\titleformat{\chapter}[display]
{\bfseries\Large}
{\filcenter\MakeUppercase{\chaptertitlename} \Huge\thechapter}
{4ex}
{\titlerule
\vspace{2ex}%
\filcenter}
  [\vspace{2ex}%
\titlerule]
\titlespacing*{\chapter}{0pt}{-40pt}{40pt}
\newcommand{\xtl}{{X_{t_{i-1}^n}}}
\newcommand{\xtr}{{X_{t_{i}^n}}}
\newcommand{\wtl}{{W_{t_{i-1}^n}}}
\newcommand{\wtr}{{W_{t_{i}^n}}}
\newcommand{\ftl}{{\mathcal{F}_{t_{i-1}^n}}}
\newcommand{\thetan}{{\theta_0}}
\newcommand{\EE}{{\mathbb E}}
\newcommand{\PP}{{\mathbb P}}
\newcommand{\RR}{{\mathbb R}}
\newcommand{\NN}{{\mathbb N}}
\newcommand{\B}{{\mathbf B}}
\newcommand{\N}{{\mathbf N}}
\newcommand{\W}{{\mathbf W}}
\newcommand{\X}{{\mathbf X}}
\newcommand{\cc}{{\mathcal{C}}} 
\newcommand{\dd}{{\mathcal{D}}}
\newcommand{\ff}{{\mathcal{F}}}  
\renewcommand{\gg}{{\mathcal{G}}} 
\newcommand{\ii}{{\mathcal{I}}}  
\renewcommand{\ll}{{\mathcal{L}}} 
\newcommand{\nn}{{\mathcal{N}}}  
\newcommand{\pp}{{\mathcal{P}}}
\newcommand{\xx}{{\mathcal{X}}}
\declaretheorem[numberwithin=section]{theorem}
\declaretheorem[numberlike=theorem]{lemma}
\declaretheorem[numberlike=theorem]{corollary}
\declaretheorem[numberlike=theorem]{assumption}
\declaretheorem[numberlike=theorem, style = definition]{definition}
\declaretheorem[numberlike=theorem, style = remark]{remark}
\newcommand\xqed[1]{%
  \leavevmode\unskip\penalty9999 \hbox{}\nobreak\hfill
  \quad\hbox{#1}}
\newcommand\asqed{\xqed{$\diamond$}}
\newcommand\defqed{\xqed{$\diamond$}}
\newcommand\theoqed{\xqed{$\diamond$}}
\newcommand\reqed{\xqed{$\circ$}}
\numberwithin{equation}{section}
\numberwithin{theorem}{section}
\title{Efficient Estimation for Diffusions Sampled at High Frequency Over a Fixed
  Time Interval}
\author{Nina Munkholt Jakobsen \\ 
{\small Department of Mathematical Sciences \vspace{-1.5mm}} \\
{\small University of Copenhagen\vspace{-1.5mm}} \\
{\small Universitetsparken 5\vspace{-1.5mm}} \\
{\small DK-2100 Copenhagen {\O}\vspace{-1.5mm}} \\
{\small Denmark\vspace{-1.5mm}} \\ 
{\small munkholt@math.ku.dk}
\and
Michael S\o rensen$^*$ \\ 
{\small Department of Mathematical Sciences \vspace{-1.5mm}} \\
{\small University of Copenhagen\vspace{-1.5mm}} \\
{\small Universitetsparken 5\vspace{-1.5mm}} \\
{\small DK-2100 Copenhagen {\O}\vspace{-1.5mm}} \\
{\small Denmark\vspace{-1.5mm}} \\ 
{\small michael@math.ku.dk}}
\date{\today}
\begin{document} 

\maketitle

\begin{abstract}
Parametric estimation for diffusion processes is considered for high
frequency observations over a fixed time interval. The processes 
solve stochastic differential equations with an unknown parameter 
in the diffusion coefficient. We find easily verified conditions on approximate
martingale estimating functions under which estimators are consistent,
rate optimal, and efficient under high frequency (in-fill)
asymptotics. The asymptotic distributions of the estimators are shown to
be normal variance-mixtures, where the mixing distribution
generally depends on the full sample path of the diffusion process over 
the observation time interval. Utilising the concept of stable
convergence, we also obtain the more easily applicable
result that for a suitable data dependent normalisation, the estimators
converge in distribution to a standard normal distribution. The theory
is illustrated by a simulation study comparing an efficient and
a non-efficient estimating function for an ergodic and a non-ergodic model. \\ \\
{\bf Key words:} Approximate martingale estimating functions, 
discrete time sampling of diffusions, in-fill asymptotics, 
normal variance-mixtures, optimal rate, random Fisher information,
stable convergence, stochastic differential equation. \\ \\
{\bf Running title:} Efficient Estimation for High Frequency SDE Data. 
\end{abstract}

\newpage

\section{Introduction}
Diffusions given by stochastic differential equations
find application in a number of fields where they are used to
describe phenomena which evolve continuously in time. Some examples
include agronomy \citep{pedersen2000}, 
biology \citep{favetto2010}, finance \citep{merton1971, vasicek1977,
  CIR1985, larsen} and neuroscience \citep{ditlevsen2006, picchini2008, bibbona2010}.
\medskip

While the models have continuous-time dynamics, data are
only observable in discrete time, thus creating a demand for
statistical methods to analyse such data. With the exception of some
simple cases, the likelihood function is not explicitly known, and
a large variety of
alternate estimation procedures have been proposed in the
literature, see e.g.\ \citet{Helle2004} and
\citet{groenbog}. Parametric methods include the following. Maximum
li\-ke\-li\-hood-type estimation,  
primarily using Gaussian approximations to the likelihood
function, was considered by \citet{prao1983}, \citet{fz1989}, 
\citet{yoshida1992}, \citet{gc_jacod}, \citet{kessler_ergodic},
\citet{jacod2006}, \citet{gloter_ms} and
\citet{uchidayoshida}. Analytical expansions of the 
transition densities were investigated by \citet{asahalia2002,
  asahalia2008} and \citet{li2013}, while approximations to the score
function were studied by \citet{bibby1995}, \citet{kessler&ms}, \citet{jacobsen2001,
  jacobsen2002}, \citet{uchida2004}, and \citet{efficient}.
Simulation-based likelihood methods were developed by
\citet{pedersen1995}, \citet{roberts2001}, \citet{durham2002},
\citet{beskos2006, beskos2009}, \citet{golightly&wilkinson3, 
golightly&wilkinson4}, \citet{bladt2014}, and \citet{bladt15}.
\medskip


A large part of the parametric estimators proposed in the literature
can be treated within the framework of approximate martingale
estimating functions, see the review in \citet{MSbog}. In this paper,
we derive easily verified conditions on such estimating functions
that imply rate optimality and efficiency under a high frequency
asymptotic scenario, and thus contribute to providing clarity and a
systematic approach to this area of statistics.

\medskip

Specifically, the paper concerns parametric estimation for stochastic 
differential equations of the form
\begin{align}
dX_t=a(X_t)\, dt+b(X_t;\theta)\, dW_t\,,
\label{article1:SDE}
\end{align}
where $(W_t)_{t\geq 0}$ is a standard Wiener process. The drift and
diffusion coefficients $a$ and $b$ are deterministic functions,
and $\theta$ is the unknown parameter to be estimated. The drift
function $a$ needs not be known, but as examples in this paper show,
knowledge of $a$ can be used in the construction of estimating
functions. For ease of exposition,  $X_t$ and $\theta$ are both
assumed to be one-dimensional. The extension of our results to a
multivariate parameter is straightforward, and it is expected that
multivariate diffusions can be treated in a similar way. For $n\in \NN$, 
we consider observations $(X_{t_0^n}, X_{t_1^n},\ldots,X_{t_n^n})$ in
the time interval $[0,1]$, at discrete, equidistant time-points
$t_i^n = i/n,$ $i=0,1,\ldots,n$. We investigate the high frequency
scenario where $n\to \infty$. 
The choice of the time-interval $[0,1]$ is not restrictive since
results generalise to other compact intervals by suitable rescaling of
the drift and diffusion coefficients. The drift coefficient does not
depend on any parameter, because parameters that appear only in the
drift cannot be estimated consistently in our asymptotic scenario. 
\medskip

It was shown by \citet{dohnal1987} and  \citet{gobet2001} that under the
asymptotic scenario considered here, the model (\ref{article1:SDE}) is
locally asymptotic mixed normal with rate $\sqrt{n}$ and random
asymptotic Fisher information 
\begin{align}
\mathcal{I}(\theta) &= 2\int_0^1 \left( \frac{\partial_\theta b(X_s;
  \theta)}{b(X_s; \theta)}\right)^2\, ds.
\label{I}
\end{align}
Thus, a consistent estimator $\hat{\theta}_n$ is rate
optimal if $\sqrt{n}(\hat{\theta}_n - \thetan)$ converges in
distribution to a non-degenerate random variable as $n\to
\infty$, where $\thetan$ is the true parameter value.
The estimator is efficient if the limit may be written on the form
$\ii(\thetan)^{-1/2}Z$, where $Z$ is standard normal
distributed and independent of $\ii(\thetan)$. The concept of local
asymptotic mixed normality was introduced by \citet{jeganathan1982},
and is discussed in e.g. \citet[Chapter 6]{lecam2000} and
\citet{jacod2010}. 

\medskip

Estimation for the model (\ref{article1:SDE}) under the high
frequency asymptotic scenario described above was
considered by \citet{gc_jacod, gc1994}. These authors
proposed estimators based on a class of contrast
functions that were only allowed to depend on the observations
and the parameter through $b^2(\xtl; \theta)$ and
$\Delta_n^{-1/2}(\xtr-\xtl)$. The estimators were shown to be rate
optimal, and an efficient contrast function was identified. 
\cite{dohnal1987} gave estimators for particular cases of the model
(\ref{article1:SDE}). Apart from one instance, these estimators are
not of the type investigated by \citet{gc_jacod, gc1994}, but all apart
from one are covered by the theory in the present paper.
\medskip

In this paper, we investigate estimators based on the extensive class
of approximate martingale estimating functions
\begin{align*}
G_n(\theta)=\sum_{i=1}^n g(\Delta_n,\xtr,\xtl; \theta)\,
\end{align*}
with $\Delta_n = 1/n$, where the real-valued function $g(t,y,x; \theta)$ satisfies that
$\EE_\theta(g(\Delta_n,\xtr,\xtl; \theta)\mid \xtl)$ is of order
$\Delta_n^\kappa$ for some $\kappa \geq 2$. Estimators are obtained as
solutions to the estimating equation $G_n(\theta)=0$ and are referred
to as $G_n$-estimators. Exact martingale estimating functions, where
$G_n(\theta)$ is a martingale, constitute a particular case that is not
covered by the theory in \citet{gc_jacod, gc1994}. An example is the 
maximum likelihood estimator for the Ornstein-Uhlenbeck process with
$a(x) = -x$ and $b(x; \theta) = \sqrt{\theta}$, for
which $g(t,y,x;\theta) = (y-e^{-t}x)^2-\frac12 \theta (1-e^{-2t})$. A
simpler example of an estimating function for the same Ornstein-Uhlenbeck
process that is covered by our theory, but is not of the Genon-Catalot
\& Jacod-type, is given by $g(t,y,x;\theta) = (y-(1-t)x)^2-\theta t$. 

\medskip

The class of approximate martingale estimating functions was also studied by
\citet{efficient}, who considered high frequency observations in an
increasing time interval for a model like (\ref{article1:SDE}) where
also the drift coefficient depends on a parameter. Specifically,
the observation times were $t_i^n = i \Delta_n$ with $\Delta_n  \to 0$ 
and $n \Delta_n \to \infty$. Simple conditions on $g$ for rate
optimality and efficiency were found under the infinite horizon high frequency
asymptotics. To some extent, the methods of proof in the present paper are
similar to those in \citet{efficient}. However, while ergodicity of
the diffusion process played a central role in that paper, this property
is not needed here. Another important difference is that expansions of
a higher order are needed in the present paper, which complicates the proofs
considerably. Furthermore, the theory in the current paper requires a
more complicated version of the central limit theorem for martingales, and we need the
concept of stable convergence in distribution, in order to obtain
practically applicable convergence results. 

\medskip

First, we establish results on existence and uniqueness of consistent
$G_n$-estimators. We show that $\sqrt{n}(\hat{\theta}_n-\theta_0)$
converges in distribution to a normal variance-mixture,
which implies rate optimality. The limit distribution may be
represented by the product $W(\theta_0)Z$ of independent random
variables, where $Z$ is standard normal distributed. The
random variable $W(\theta_0)$ is generally non-degenerate, and
depends on the entire path of the diffusion process over the time-interval
$[0,1]$. Normal variance-mixtures
were also obtained as the asymptotic distributions of the
estimators of \citet{gc_jacod}. These distributions appear as
limit distributions in comparable non-parametric settings as well,
e.g. when estimating integrated volatility \citep{jacod_protter_1998,
  anova} or the squared diffusion coefficient \citep{fz1993,
  jacod2000kernel}. 
\medskip

Rate optimality is ensured by the condition that
\begin{align}
\partial_y g(0,x,x; \theta) = 0
\label{rate_opt_asymp}
\end{align}
for all $x$ in the state space of the diffusion process, and all parameter values
$\theta$. Here $\partial_y g(0,x,x; \theta)$ denotes the first
derivative of $g(0,y,x; \theta)$ with respect to $y$ evaluated in
$y=x$. The same condition was found in \citet{efficient} for rate
optimality of an estimator of the parameter in the diffusion
coefficient, and it is one of the conditions for small
$\Delta$-optimality; see \citet{jacobsen2001, jacobsen2002}. 
\medskip

Due to its dependence on $(X_s)_{s\in [0,1]}$, the limit distribution
is difficult to use for statistical applications, such as
constructing confidence intervals and test statistics. Therefore, we
construct a statistic $\widehat{W}_n$ that converges in
probability to $W(\theta_0)$. Using the stable convergence in
distribution of $\sqrt{n}(\hat{\theta}_n-\theta_0)$ towards
$W(\theta_0)Z$, we derive the more easily applicable result that
$\sqrt{n}\, \widehat{W}_n ^{-1}( \hat{\theta}_n -
  \theta_0)$ converges in distribution to a standard normal distribution.
\medskip

The additional condition that
\begin{align}
\partial^2_ y g(0,x,x; \theta) &=K_\theta \frac{ \partial_\theta b^2(x;
  \theta) }{b^4(x; \theta)}
\label{effcond}
\end{align}
($K_\theta\neq 0$) for all $x$ in the state space, and all parameter
values $\theta$, ensures efficiency of $G_n$-estimators. 
The same condition was obtained by
\cite{efficient} in his infinite horizon scenario for efficiency of
estimators of parameters in the diffusion coefficient. It is also
identical to a condition given by 
\citet{jacobsen2002} for small $\Delta$-optimality. The identity of the
conditions implies that examples of approximate martingale estimating
functions which are rate optimal and efficient in our asymptotic
scenario may be found in \citet{jacobsen2002} and \citet{efficient}.
In particular, estimating functions that are optimal in the sense of
\citet{gh1987} are rate optimal and efficient under weak regularity
conditions.

\medskip

\medskip

The paper is structured as follows: Section \ref{asymp:prelim}
presents definitions, notation and terminology used throughout the paper, as well as
the main assumptions. Section \ref{sec:asymp:main} states and
discusses our main results, while Section \ref{sec:asymp:sim} presents
a simulation study illustrating the results. Section
\ref{sec:mainlemmaproofs} contains main lemmas used to prove the main
theorem, and proofs of the main theorem and the lemmas.
Appendix \ref{sec:asymp:auxres} consists of auxiliary technical
results, some of them with proofs.
\section{Preliminaries}\label{asymp:prelim}

\subsection{Model and Observations}\label{asymp:setup:model}
Let $(\Omega,\ff)$ be a measurable space supporting a real-valued random
variable $U$, and an independent standard Wiener process $\W =
(W_t)_{t \geq 0}$. Let $(\ff_t)_{t\geq0}$ denote the filtration
generated by $U$ and $\W$. 
\medskip

Consider the stochastic differential equation
\begin{align} 
dX_t &= a(X_t)\,dt + b(X_t; \theta)\, dW_t\,,\quad X_0 = U\,,
\label{eqn:SDE}
\end{align}

for $\theta \in \Theta \subseteq \RR$. The state space of the solution
is assumed to be an open interval $\xx \subseteq \RR$, and the
drift and diffusion coefficients, $a: \xx \to \RR$ and $b: \xx \times
\Theta \to \RR$, are assumed to be known, deterministic functions.
Let $(\PP_\theta)_{\theta \in \Theta}$ be a family of probability
measures on $(\Omega,\ff)$ such that $\X =
(X_t)_{t \geq 0}$ solves (\ref{eqn:SDE}) under
$\PP_\theta$, and let $\EE_\theta$ denote expectation under $\PP_\theta$.
\medskip

Let $t_i^n = i\Delta_n$ with $\Delta_n = 1/n$ for $i\in \NN_0$, $n \in \NN$. For each $n\in \NN$, $\X$ is assumed
to be sampled at times $t_i^n$, $i=0,1,\ldots,n$, yielding the observations $(X_{t_0^n}, X_{t_1^n}, \ldots,
X_{t_n^n})$. Let $\gg_{n,i}$ denote the $\sigma$-algebra generated by
the observations $(X_{t_0^n}, X_{t_1^n}, \ldots,
X_{t_i^n})$, with $\gg_n = \gg_{n,n}$. 

\subsection{Polynomial Growth}\label{sec:poly_asymp}
In the following, to avoid cumbersome notation, $C$ denotes a
generic, strictly positive, real-valued constant. Often, the notation
$C_u$ is used to emphasise that the constant depends on $u$ in some
unspecified manner, where $u$ may be, e.g., a number or a set of
parameter values. Note that, for example, in an expression of the form
$C_u(1+|x|^{C_u})$, the factor $C_u$ and the exponent $C_u$ need not
be equal. Generic constants $C_u$ often depend (implicitly) on the unknown
true parameter value $\theta_0$, but never on the sample size $n$. 
\medskip

A function $f: [0,1] \times \xx^2 \times \Theta \to \RR$ is said to be
of polynomial growth in $x$ and $y$, uniformly for $t\in
  [0,1]$ and $\theta$ in compact, convex sets, if for each compact,
  convex set $K \subseteq \Theta$ there exist constants $C_K>0$ such that
\begin{align*}
\sup_{t \in [0,1],\, \theta \in K} \left|f(t,y,x; \theta)\right| \leq
C_K(1 + |x|^{C_K} + |y|^{C_K})
\end{align*}
for $x,y \in \xx$. 

\begin{definition}
$\cc^{\text{pol}}_{p,q,r}( [0,1]\times \xx^2 \times
  \Theta)$ denotes the class of real-valued functions $f(t,y,x;
  \theta)$ which satisfy that
\begin{enumerate}[label=(\roman{*}), ref=(\roman{*})]
\item\label{asymppol1} $f$ and the mixed partial derivatives
  $\partial^i_t \partial^j_y \partial^k_\theta f(t,y,x; \theta)$,
  $i=0,\ldots,p$, $j=0,\ldots,q$ and $k=0,\ldots,r$ exist and are
  continuous on $[0,1]\times \xx^2 \times
  \Theta$.
\item $f$ and the mixed partial derivatives from \ref{asymppol1} are of polynomial growth in $x$
  and $y$, uniformly for $t\in
  [0,1]$ and $\theta$ in compact, convex sets.
\end{enumerate}
Similarly, the classes $\cc^{\text{pol}}_{p,r}( [0,1] \times
  \xx \times \Theta)$, $\cc^{\text{pol}}_{q,r}( \xx^2
  \times \Theta)$, 
$\cc^{\text{pol}}_{q,r}( \xx \times \Theta)$ and $\cc^{\text{pol}}_{q}( \xx)$
are defined for functions of
the form $f(t,x; \theta)$,  $f(y,x; \theta)$, $f(y;\theta)$ and $f(y)$, respectively.
\defqed
\label{polydef}
\end{definition}
Note that in Definition \ref{polydef}, differentiability of $f$ with
respect to $x$ is never required.
\medskip

For the duration of this paper, $R(t, y,x; \theta)$ denotes a generic, real-valued function
defined on $[0,1] \times
\xx^2 \times \Theta$,
which is of polynomial
growth in $x$ and $y$ uniformly for $t \in [0,1]$ and $\theta$ in compact, convex
sets. The function $R(t, y,x; \theta)$ may depend (implicitly) on
$\thetan$. Functions $R(t,x; \theta)$, $R(y,x; \theta)$ and $R(t,x)$ are defined
correspondingly. The notation $R_\lambda(t,x; \theta)$ indicates that
$R(t,x; \theta)$ also depends on $\lambda \in \Theta$ in an unspecified way.

\subsection{Approximate Martingale Estimating Functions}\label{asymp:amef}

\begin{definition}
Let $g(t,y,x; \theta)$ be a real-valued function defined on $[0,1]
\times \xx^2 \times \Theta$. Suppose the existence of a constant
$\kappa \geq 2$, such that for all $n\in \NN$, $i=1,\ldots,n$, $\theta \in
\Theta$,
\begin{align}
\EE_\theta\left( g(\Delta_n,\xtr,\xtl; \theta) \mid \xtl\right) &=
                                                                  \Delta_n^\kappa
                                                                  R_\theta(\Delta_n,\xtl)\,.
\label{asymp_AMG}
\end{align}
Then, the function 
\begin{align}
G_{n}(\theta) &= \sum_{i=1}^n g(\Delta_n,\xtr,\xtl;
                \theta)
\label{AMEF_asymp}
\end{align}  
is called an \emph{approximate martingale
estimating function}. In particular, when (\ref{asymp_AMG}) is
satisfied with $R_\theta(t,x)\equiv 0$, (\ref{AMEF_asymp}) is referred
to as a \emph{martingale estimating function}. \defqed
\label{def:AMEF_asymp}
\end{definition}
By the Markov property of $\X$, it follows that if $R_\theta(t,x)\equiv 0$,
then $(G_{n,i})_{1\leq i \leq n}$ defined by
\begin{align*}
G_{n,i}(\theta) &= \sum_{j=1}^i g(\Delta_n,X_{t_j^n}, X_{t_{j-1}^n}; \theta)
\end{align*}
 is a zero-mean, real-valued $(\gg_{n,i})_{1\leq i
  \leq n}$-martingale under $\PP_\theta$ for each $n \in \NN$. 
The score function of the observations $(X_{t_0^n}, X_{t_1^n}, \ldots,
X_{t_n^n})$ is a martingale estimating function under weak regularity
conditions, and an approximate martingale estimating function 
can be viewed as an approximation to the score function.  
\medskip

A \emph{$G_n$-estimator} $\hat{\theta}_n$ is essentially obtained as a
solution to the estimating equation $G_n(\theta) = 0$. A more precise
definition is given in the following Definition \ref{def:GnEst_asymp}. 
Here we make the $\omega$-dependence explicit by writing $G_n(\theta,
\omega)$ and $\hat{\theta}_n(\omega)$. 
\begin{definition}
Let $G_n(\theta,\omega)$ be an approximate martingale estimating
function as defined in Definition \ref{def:AMEF_asymp}. Put
$\Theta_\infty = \Theta \cup \{\infty\}$ and let
\begin{align*}
D_n &= \{\omega \in \Omega \mid G_n(\theta, \omega)=0 \text{ has at
  least one solution } \theta \in \Theta\}\,.
\end{align*}

A \emph{$G_n$-estimator} $\hat{\theta}_n(\omega)$ is
any $\gg_n$-measurable function $\Omega \to \Theta_\infty$
which satisfies that for $\PP_\thetan$-almost all $\omega$,
$\hat{\theta}_n(\omega) \in \Theta$ and $G_n(\hat{\theta}_n(\omega),\omega)=0$
if $\omega \in D_n$, while $\hat{\theta}_n(\omega)=\infty$ if $\omega
\notin D_n$.\defqed
\label{def:GnEst_asymp} 
\end{definition}

For any $M_n\neq 0$, the estimating functions $G_n(\theta)$ and
$M_nG_n(\theta)$ yield identical estimators of $\theta$ and are therefore
referred to as \emph{versions} of each other. For any given 
estimating function, it is sufficient that there exists a version of
the function which satisfies the assumptions of this paper, in order
to draw conclusions about the resulting estimators. In particular, we
can multiply by a function of $\Delta_n$.

\subsection{Assumptions}\label{asymp:ass}

We make the following assumptions about the stochastic differential
equation.

\begin{assumption}
The parameter set $\Theta$ is a non-empty, open subset of $\RR$. 
Under the probability measure $\PP_\theta$, the continuous,
$(\ff_t)_{t\geq 0}$-adapted Markov process $\X = (X_t)_{t\geq 0}$
solves a stochastic differential equation of the form (\ref{eqn:SDE}),
the coefficients of which satisfy that 
\begin{align*}
a(y) \in \cc^{\text{pol}}_{6}\left(\xx\right) \quad\text{ and }\quad 
b(y; \theta) \in \cc^{\text{pol}}_{6,2}\left( \xx \times \Theta\right)\,.
\end{align*}
The following holds for all $\theta \in \Theta$.
\begin{enumerate}[label=(\roman{*}), ref=(\roman{*})]
\item 
For all $y\in \xx$, $b^2(y; \theta) > 0$.
\item\label{lips_cont} 
There exists a real-valued constant $C_{\theta}>0$ such that for all
$x,y \in \xx$,
\begin{align*}
\vert
a(x)-a(y)\vert + \vert b(x; \theta) - b(y; \theta)\vert &\leq
C_{\theta}\,\vert x-y\vert\,.
\end{align*}
\item 
$U$ has moments of any order. 
\end{enumerate}\asqed
\label{assumptions_on_X}
\end{assumption}
The global Lipschitz condition, Assumption \ref{assumptions_on_X}.\ref{lips_cont}, 
ensures that a unique solution $\X$ exists. 
The Lipschitz condition and (iii) imply that $\sup_{t \in [0,1]}
\EE_\theta\left(|X_t|^m\right) < \infty$ for all $m\in \NN$.
Assumption \ref{assumptions_on_X} is very similar to the corresponding
Condition $2.1$ of \citet{efficient}. However, an important difference
is that in the current paper, $\X$ is not required to be
ergodic. Here, law of large numbers-type results are proved by what is, in
essence, the convergence of Riemann sums.
\medskip

We make the following assumptions about the estimating function.

\begin{assumption}
The function $g(t,y,x; \theta)$ satisfies (\ref{asymp_AMG}) for some
$\kappa \geq 2$, thus defining an approximate martingale
estimating function by (\ref{AMEF_asymp}). Moreover,
\begin{align*}
g(t,y,x; \theta)& \in \cc^{\text{pol}}_{3,8,2}( [0,1] \times \xx^2
  \times \Theta)\,,
\end{align*}
and the following holds for all $\theta \in \Theta$.
\begin{enumerate}[label=(\roman{*}), ref=(\roman{*}), resume]
\item \label{dyg0_cont}For all $x\in \xx$, $\partial_y g(0,x,x;
\theta) =0$.
\item 
The expansion 
\begin{align}
\begin{split}
g(\Delta, y, x; \theta) &= g(0,y,x; \theta) + \Delta g^{(1)}(y,x;
\theta) + \tfrac{1}{2}\Delta^2 g^{(2)}(y,x; \theta) + \tfrac{1}{6}
\Delta^3 g^{(3)}(y,x; \theta) \\
&\hspace{5mm} + \Delta^4 R(\Delta,y ,x; \theta)
\label{eqn:g_taylor}
\end{split}
\end{align}
holds for all $\Delta \in [0,1]$ and $x,y \in \xx$, where
$g^{(j)}(y,x; \theta)$ denotes the $j'$th partial derivative of
$g(t,y,x; \theta)$ with respect to $t$, evaluated in $t=0$.

\asqed
\end{enumerate}

\label{assumptions_on_g}
\end{assumption}

Assumption \ref{assumptions_on_g}.\ref{dyg0_cont} was referred to by
\citet{efficient} as \emph{Jacobsen's condition}, as it is one of the
conditions for small $\Delta$-optimality in the sense of
\citet{jacobsen2001}, see \citet{jacobsen2002}. The assumption ensures
rate optimality of the estimators in this paper, and of the estimators of
the parameters in the diffusion coefficient in \citet{efficient}.
\medskip

The assumptions of polynomial growth and existence and boundedness of
all moments serve to simplify the exposition and proofs, and could be
relaxed. 

\subsection{The Infinitesimal Generator}\label{asymp:infini}

For $\lambda \in \Theta$, the infinitesimal generator $\ll_\lambda$ is
defined for all functions $f(y) \in \cc_2^{\text{pol}}(\xx)$ by
\begin{align*}
\ll_{\lambda}f(y) &= a(y)\partial_y f(y) + \tfrac{1}{2}
b^2(y;\lambda) \partial^2_y f(y)\,.
\end{align*}
For $f(t, y,x, \theta) \in \cc^{\text{pol}}_{0,2,0,0}(
[0,1] \times \xx^2 \times \Theta)$, let
\begin{align}
\ll_{\lambda}f(t,y,x; \theta) &= a(y)\partial_y f(t,y,x;\theta) + \tfrac{1}{2}
b^2(y;\lambda) \partial^2_y f(t, y, x; \theta)\,.
\label{L4var}
\end{align}
Often, the notation $\ll_{\lambda}f(t,y,x; \theta) = \ll_\lambda (f(t;
\theta))(y,x)$ is used, so e.g. $\ll_\lambda (f(0;
\theta))(x,x)$ means $\ll_{\lambda}f(0,y,x; \theta)$ evaluated in $y=x$. 
In this paper the infinitesimal generator is particularly useful
because of the following result. 

\begin{lemma}
Suppose that Assumption \ref{assumptions_on_X} holds, and that for
some $k\in \NN_0$,
\begin{align*}
a(y) \in \cc^{\text{pol}}_{2k}(\xx)\,, \quad b(y; \theta)
\in \cc^{\text{pol}}_{2k,0}(\xx \times \Theta)\quad \text{ and }\quad f(y,x; \theta) \in \mathcal{C}^{\text{pol}}_{2(k+1),0}(
  \xx^2 \times \Theta)\,.
\end{align*}
Then, for $0 \leq t \leq t+ \Delta\leq 1$ and
$\lambda \in \Theta$,
\begin{align*}
\begin{split}
& \EE_{\lambda} \left( f(X_{t+\Delta}, X_t; \theta) \mid X_t\right) \nonumber\\
&= \sum_{i=0}^k \frac{\Delta^i}{i!} \ll_{\lambda}^i f(X_t, X_t; \theta) + \int_0^\Delta
\int_0^{u_1} \cdots \int_0^{u_k} \EE_{\lambda} \left(
  \ll_{\lambda}^{k+1}f(X_{t+u_{k+1}}, X_t; \theta)\mid X_t\right) 
\, d u_{k+1} \cdots du_1
\end{split}
\end{align*}
where, furthermore, 
\begin{align*}
\int_0^\Delta\int_0^{u_1} \cdots \int_0^{u_k} \EE_{\lambda} \left(
  \ll_{\lambda} ^{k+1}f(X_{t+u_{k+1}}, X_t; \theta)\mid X_t\right) 
\, d u_{k+1} \cdots du_1 &= \Delta^{k+1}R_\lambda(\Delta,X_t; \theta)\,.
\end{align*}\theoqed
\label{stoch_taylor}
\end{lemma}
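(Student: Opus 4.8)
The plan is to establish the expansion by iterating a single-step Dynkin formula $k+1$ times and then to control the resulting iterated-integral remainder using the polynomial growth of $\ll_\lambda^{k+1}f$ together with moment bounds for $\X$. First I would fix $t$ and $\lambda$, freeze $x = X_t$ in the second argument, and apply It\^o's formula to the map $y \mapsto f(y,x;\theta)$ along the solution $(X_{t+u})_{u\geq 0}$. Since $f(\cdot,x;\theta)$ is $C^2$ in $y$ by hypothesis, this gives
\begin{align*}
f(X_{t+u}, X_t;\theta) &= f(X_t, X_t;\theta) + \int_0^u \ll_\lambda f(X_{t+s}, X_t;\theta)\,ds \\
&\quad + \int_0^u \partial_y f(X_{t+s}, X_t;\theta)\, b(X_{t+s};\lambda)\, dW_{t+s}.
\end{align*}
Taking conditional expectation given $X_t$, the crucial point is that the stochastic integral has vanishing conditional mean. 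I would justify this by noting that its integrand is of polynomial growth and that $\sup_{s\in[0,1]}\EE_\lambda(|X_s|^m) < \infty$ for all $m$, so the integrand lies in $L^2([0,\Delta]\times\Omega)$; for full rigour one localises with the stopping times $\tau_N = \inf\{s : |X_{t+s}|\geq N\}$ and passes to the limit by dominated convergence. This yields the single-step identity $\EE_\lambda(f(X_{t+u},X_t;\theta)\mid X_t) = f(X_t,X_t;\theta) + \int_0^u \EE_\lambda(\ll_\lambda f(X_{t+s},X_t;\theta)\mid X_t)\,ds$.

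Next I would verify the smoothness bookkeeping so that the same identity applies to $\ll_\lambda^j f$ in place of $f$. Each application of $\ll_\lambda$ consumes two $y$-derivatives and multiplies by $a$ and $b^2$; hence under $a \in \cc^{\text{pol}}_{2k}(\xx)$, $b \in \cc^{\text{pol}}_{2k,0}(\xx\times\Theta)$ and $f \in \cc^{\text{pol}}_{2(k+1),0}(\xx^2\times\Theta)$, every iterate $\ll_\lambda^j f$ with $0\leq j \leq k+1$ is well defined, continuous, and of polynomial growth in $x$ and $y$, and for $j \leq k$ it remains $C^2$ in $y$ so Dynkin's formula may be reapplied to it. Iterating the single-step identity $k+1$ times, and using that the integral of the constant $\ll_\lambda^i f(X_t,X_t;\theta)$ over the simplex $\{0 \leq u_{i} \leq \cdots \leq u_1 \leq \Delta\}$ equals $\tfrac{\Delta^i}{i!}$ times that constant, I obtain the stated sum $\sum_{i=0}^k \tfrac{\Delta^i}{i!}\ll_\lambda^i f(X_t,X_t;\theta)$ together with the $(k+1)$-fold iterated-integral remainder involving $\ll_\lambda^{k+1}f$.

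Finally, to identify the remainder as $\Delta^{k+1}R_\lambda(\Delta, X_t;\theta)$, I would use that $\ll_\lambda^{k+1}f$ is of polynomial growth, so $|\ll_\lambda^{k+1}f(y,x;\theta)| \leq C_\lambda(1+|x|^{C_\lambda}+|y|^{C_\lambda})$, combined with the conditional moment bound $\EE_\lambda(|X_{t+u}|^m \mid X_t) \leq C(1+|X_t|^C)$ uniformly in $u\in[0,1]$, which follows from the standard Gronwall-type moment estimates for the SDE under the linear-growth consequence of the Lipschitz assumption (an auxiliary result). The conditional expectation of the integrand is then bounded by $C_\lambda(1+|X_t|^{C_\lambda})$, and integrating over the simplex of volume $\Delta^{k+1}/(k+1)!$ produces the factor $\Delta^{k+1}$ while leaving a function of $(\Delta, X_t;\theta)$ of polynomial growth in $X_t$ uniformly for $\Delta\in[0,1]$, that is, exactly $R_\lambda(\Delta, X_t;\theta)$. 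The main obstacle is the rigorous justification of the vanishing of the martingale term and of the conditional moment bound, both of which require care because the integrands grow only polynomially rather than boundedly; the differentiability bookkeeping and the simplex integration are routine.
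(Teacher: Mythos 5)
Your proposal is correct and follows essentially the same route as the paper, which proves the expansion by induction on $k$ via It\^o's formula (equivalently, your iterated Dynkin formula) and identifies the remainder by applying the conditional polynomial-growth bound of Corollary \ref{f_growth} to $\ll_\lambda^{k+1}f$ and integrating over the simplex. The points you flag as requiring care --- the vanishing of the martingale term and the conditional moment bound --- are exactly the ingredients the paper delegates to the cited references and to Lemma \ref{diff_growth} and Corollary \ref{f_growth}.
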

The expansion of
the conditional expectation in powers of $\Delta$ in the first part of
the lemma corresponds to Lemma 1 in \citet{fz1989} and Lemma 4 in
\citet{dc_fz}. It may be proven by induction on $k$ using It\^{o}'s
formula, see, e.g., the proof of \citet[Lemma 1.10]{MSbog}.
The characterisation of the remainder term follows by applying
Corollary \ref{f_growth} to $\ll_{\lambda}^{k+1}f$, see the proof of
\citet[Lemma $1$]{kessler_ergodic}. 
\medskip

For concrete models, Lemma \ref{stoch_taylor} is useful for
verifying the approximate martingale property (\ref{asymp_AMG}) and
for creating approximate martingale estimating functions. In
combination with (\ref{asymp_AMG}), the lemma is key to proving the
following Lemma \ref{lemma_properties}, which reveals two important
properties of approximate martingale estimating functions. 
\begin{lemma}
Suppose that Assumptions \ref{assumptions_on_X}
and \ref{assumptions_on_g} hold. Then 
\begin{align*}
g(0,x,x; \theta) = 0\quad\text{ and }\quad g^{(1)}(x,x; \theta) &=
-\ll_\theta(g(0,\theta))(x,x)
\end{align*}

for all $x \in \xx$ and $\theta \in \Theta$.\theoqed
\label{lemma_properties}
\end{lemma}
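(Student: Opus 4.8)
The plan is to expand the conditional expectation in the approximate martingale property (\ref{asymp_AMG}) as an explicit power series in the time step and to read off the coefficients of $\Delta_n^0$ and $\Delta_n^1$. Since the coefficients $a$ and $b$ of (\ref{eqn:SDE}) do not depend on time, $\X$ is time-homogeneous, so the Markov property gives $\EE_\theta(g(\Delta_n,\xtr,\xtl;\theta)\mid\xtl)=\phi(\Delta_n,\xtl;\theta)$, where $\phi(\Delta,x;\theta):=\EE_\theta(g(\Delta,X_\Delta,x;\theta)\mid X_0=x)$ is a deterministic function. First I would insert the Taylor expansion (\ref{eqn:g_taylor}) of $g$ in $\Delta$ and then apply Lemma \ref{stoch_taylor} term by term: to $g(0,\cdot,\cdot;\theta)$ with $k=1$, and to $g^{(1)},g^{(2)},g^{(3)}$ with $k=0$, the remainder $\Delta^4 R(\Delta,\cdot,\cdot;\theta)$ being absorbed via the polynomial-growth and moment bounds underlying Lemma \ref{stoch_taylor}. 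The differentiability and growth requirements of that lemma are met because Assumption \ref{assumptions_on_X} gives $a\in\cc^{\text{pol}}_{6}$ and $b\in\cc^{\text{pol}}_{6,2}$ while Assumption \ref{assumptions_on_g} gives $g\in\cc^{\text{pol}}_{3,8,2}$, so all the $y$-derivatives called for are available. Collecting terms yields
\begin{align*}
\phi(\Delta,x;\theta)=g(0,x,x;\theta)+\Delta\left[g^{(1)}(x,x;\theta)+\ll_\theta(g(0,\theta))(x,x)\right]+\Delta^2 R_\theta(\Delta,x;\theta).
\end{align*}

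Combining this with (\ref{asymp_AMG}) gives, $\PP_\theta$-almost surely, that $\phi(\Delta_n,\xtl;\theta)=\Delta_n^\kappa R_\theta(\Delta_n,\xtl)$ for every $n$. The next step is to upgrade this to a pointwise identity in $x$. Taking any $i\geq 2$, the conditioning variable $\xtl=X_{t_{i-1}^n}$ has $t_{i-1}^n>0$, so by non-degeneracy ($b^2>0$ on the interval $\xx$) the law of $\xtl$ charges every non-empty open subset of $\xx$; as $\phi(\Delta_n,\cdot;\theta)$ and $R_\theta(\Delta_n,\cdot)$ are continuous, the identity $\phi(\Delta_n,x;\theta)=\Delta_n^\kappa R_\theta(\Delta_n,x)$ then holds for all $x\in\xx$ and all $n$. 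Because $R_\theta$ is of polynomial growth uniformly in $t\in[0,1]$, for fixed $x$ the values $R_\theta(\Delta_n,x)$ stay bounded in $n$, so both $\Delta_n^2 R_\theta$ and $\Delta_n^\kappa R_\theta$ (recall $\kappa\geq 2$) tend to $0$. Letting $n\to\infty$ in the expansion with $x$ fixed yields the first claim $g(0,x,x;\theta)=0$; substituting this back, dividing by $\Delta_n$, and letting $n\to\infty$ once more yields $g^{(1)}(x,x;\theta)+\ll_\theta(g(0,\theta))(x,x)=0$, which is the second claim.

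The main obstacle is precisely this passage from the almost-sure conditional-expectation identity to a statement valid for every $x\in\xx$: one must decouple the point $x$ from the sample size $n$, which is what the support argument (using $i\geq 2$ so that $t_{i-1}^n>0$) together with continuity accomplishes. Once $x$ and $n$ are decoupled, matching the coefficients of $\Delta_n^0$ and $\Delta_n^1$ is immediate. The remaining work—checking that Lemma \ref{stoch_taylor} applies to each term and that every remainder is genuinely of the form $\Delta^2 R_\theta$—is routine bookkeeping given the stated regularity and the moment bounds on $\X$.
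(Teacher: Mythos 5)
Your proposal is correct and follows essentially the same route as the paper, which refers the reader to Lemma 2.3 of S{\o}rensen's earlier work: expand $\EE_\theta(g(\Delta_n,\xtr,\xtl;\theta)\mid\xtl)$ via the Taylor expansion (\ref{eqn:g_taylor}) and Lemma \ref{stoch_taylor}, compare with the order-$\Delta_n^\kappa$ bound from (\ref{asymp_AMG}), and match the coefficients of $\Delta_n^0$ and $\Delta_n^1$. Your extra care in passing from the almost-sure conditional identity to a pointwise statement in $x$ (via non-degeneracy of $b$ and continuity) is a reasonable way to fill in a step that the referenced proof handles in the same spirit.
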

Lemma \ref{lemma_properties} corresponds to Lemma 2.3 of \citet{efficient},
to which we refer for details on the proof.

\section{Main Results}\label{sec:asymp:main}
Section \ref{asymp:maintheo} presents the main theorem of this paper,
which establishes existence, uniqueness and asymptotic distribution
results for rate optimal estimators based on approximate martingale estimating
functions. In Section \ref{sec:asymp:eff} a condition is given, which
ensures that the rate optimal estimators are also efficient, and
efficient estimators are discussed.

\subsection{Main Theorem}\label{asymp:maintheo}
The final assumption needed for the main theorem is as follows.
\begin{assumption}
The following holds $\PP_\theta$-almost surely for all $\theta \in \Theta$.
\begin{enumerate}[label=(\roman{*}), ref = (\roman{*})]
\item \label{item:anot0}
For all $\lambda\neq \theta$,
\begin{align*}
\int_0^1 \big( b^2(X_s;
  \theta) - b^2(X_s; \lambda)\big) \partial^2_y g(0,X_s,X_s;
  \lambda)\, ds \neq 0\,,
\end{align*}
\item \label{item:bnot0}
\begin{align*}
\int_0^1 \partial_\theta b^2(X_s;\theta) \partial_y^2
  g(0,X_s, X_s;\theta)\, ds \neq 0\,,
\end{align*}
\item \label{item:cnot0}
\begin{align*}
\int_0^1 b^4(X_s; \theta) \left(\partial_y^2 g(0,X_s,X_s;
  \theta)\right)^2\, ds \neq 0\,.
\end{align*}\asqed
\end{enumerate}
\label{assumptions_for_maintheo}
\end{assumption}
Assumption \ref{assumptions_for_maintheo} can be
difficult to check in practice because it involves the full sample path of $\X$ over
the interval $[0,1]$. It requires, in particular, that for all $\theta \in \Theta$, with
$\PP_\theta$-probability one, $t\mapsto b^2(X_t; \theta)-b^2(X_t;
\lambda)$ is not Lebesgue-almost surely zero when $\lambda \neq
\theta$. As noted by \citet{gc_jacod}, this requirement holds
true (by the continuity of the function) if, for example, $X_0 = U$ is
degenerate at $x_0$, and $b^2(x_0; \theta) \neq b^2(x_0; \lambda)$ for
all $\theta \neq \lambda$. 
\medskip

For an efficient estimating function, Assumption
\ref{assumptions_for_maintheo} reduces to conditions on $\X$ with no
further conditions on the estimating function, see the next
section. Specifically, the conditions involve only the squared
diffusion coefficient $b^2(x;\theta)$ and its derivative
$\partial_\theta b^2$. 
\medskip

\begin{theorem}
Suppose that Assumptions \ref{assumptions_on_X},
\ref{assumptions_on_g} and \ref{assumptions_for_maintheo} hold. Then,
\begin{enumerate}[label=(\roman{*}), ref=(\roman{*})]
\item\label{item:existence_asymp}
 there exists a consistent $G_n$-estimator
  $\hat{\theta}_n$. Choose any compact, convex set $K\subseteq \Theta$
  with $\theta_0 \in \text{int}\, K$, where $\text{int}\, K$ denotes
  the interior of $K$. Then, the consistent $G_n$-estimator $\hat{\theta}_n$ is eventually
  unique in $K$, in the sense that for any $G_n$-estimator $\tilde{\theta}_n$ with 
 $\PP_{\theta_0}(\tilde{\theta}_n \in K) \to 1$ as $n\to \infty$, it
 holds that
 $\PP_{\theta_0}(\hat{\theta}_n\neq\tilde{\theta}_n) \to 0$
 as $n\to \infty$.
\item\label{item:main_asymp}
for any consistent $G_n$-estimator $\hat{\theta}_n$, it holds that
\begin{align}
\sqrt{n}( \hat{\theta}_n - \theta_0)
  \overset{\dd}{\longrightarrow} W(\theta_0)Z\,.
\label{rate_limit}
\end{align}
The limit distribution is a normal variance-mixture, where $Z$ is
standard normal distributed, and independent of 
$W(\theta_0)$ given by
\begin{align}
W(\theta_0) &= \frac{\left(\displaystyle\int_0^1\tfrac{1}{2} b^4(X_s;
\theta_0) \left(\partial^2_y g(0,X_s,X_s; \theta_0)\right)^2 \,
ds\right)^{1/2}}{
  \displaystyle\int_0^1 \tfrac{1}{2} \partial_\theta b^2(X_s;\theta_0) \partial_y^2
  g(0,X_s, X_s;\theta_0)\, ds}\,. 
\label{W}
\end{align}
\item \label{item:asymp_estvar}
for any consistent $G_n$-estimator $\hat{\theta}_n$, 
\begin{align}
\widehat{W}_n &= - \frac{\left( \displaystyle \frac{1}{\Delta_n} \sum_{i=1}^n
    g^2(\Delta_n, \xtr, \xtl;
    \hat{\theta}_n)\right)^{1/2}}{\displaystyle
  \sum_{i=1}^n \partial_\theta g(\Delta_n, \xtr, \xtl;
  \hat{\theta}_n)}
\label{What}
\end{align}
satisfies that $\widehat{W}_n \overset{\pp}{\longrightarrow} W(\theta_0)$, and
\begin{align*}
\sqrt{n} \,\widehat{W}_n^{-1} ( \hat{\theta}_n - \theta_0)
\overset{\dd}{\longrightarrow} \nn (0,1)\,.
\end{align*}
\end{enumerate}
\label{est_exist}
\theoqed\end{theorem}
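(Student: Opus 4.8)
The plan is to treat Theorem \ref{est_exist} as an estimating-equation problem and reduce all three parts to two limit theorems: a Riemann-sum law of large numbers for $G_n$ and its $\theta$-derivative, and a stable martingale central limit theorem for $\sqrt n\,G_n(\theta_0)$. The workhorse throughout is Lemma \ref{stoch_taylor}, which expands every conditional expectation $\EE_{\theta_0}(\,\cdot\mid\xtl)$ in powers of $\Delta_n$, together with the structural identities of Lemma \ref{lemma_properties} and their $\theta$-derivatives. From a mean-value expansion $0=G_n(\hat\theta_n)=G_n(\theta_0)+\partial_\theta G_n(\bar\theta_n)(\hat\theta_n-\theta_0)$, with $\bar\theta_n$ between $\hat\theta_n$ and $\theta_0$, I get $\sqrt n(\hat\theta_n-\theta_0)=-\sqrt n\,G_n(\theta_0)/\partial_\theta G_n(\bar\theta_n)$, so \eqref{rate_limit} will follow once the numerator converges stably and the denominator converges in probability to a nonzero, $\ff$-measurable limit, uniformly enough in $\theta$ to replace $\bar\theta_n$ by $\theta_0$.

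For the denominator and for part \ref{item:existence_asymp}, I first show that, for each fixed $\theta$ in a compact convex $K$ with $\theta_0\in\mathrm{int}\,K$, $G_n(\theta)\overset{\pp}{\longrightarrow}F(\theta):=\tfrac12\int_0^1\partial_y^2 g(0,X_s,X_s;\theta)(b^2(X_s;\theta_0)-b^2(X_s;\theta))\,ds$. This uses Assumption \ref{assumptions_on_g}.\ref{dyg0_cont} (so $g(0,y,x;\theta)$ vanishes to second order in $y-x$), the expansion $\EE_{\theta_0}((\xtr-\xtl)^2\mid\xtl)=b^2(\xtl;\theta_0)\Delta_n+O(\Delta_n^2)$ from Lemma \ref{stoch_taylor}, and the identity $g^{(1)}(x,x;\theta)=-\tfrac12 b^2(x;\theta)\partial_y^2 g(0,x,x;\theta)$ obtained by combining Lemma \ref{lemma_properties} with Assumption \ref{assumptions_on_g}.\ref{dyg0_cont}; the fluctuation about the conditional mean has summed conditional variance $O(\Delta_n)\to0$. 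By Assumption \ref{assumptions_for_maintheo}.\ref{item:anot0}, $F$ vanishes only at $\theta_0$, and differentiating gives $F'(\theta_0)=-\tfrac12\int_0^1\partial_\theta b^2(X_s;\theta_0)\partial_y^2 g(0,X_s,X_s;\theta_0)\,ds$, the negated denominator of \eqref{W}, nonzero by \ref{item:bnot0}. The crucial computation here is that, after differentiating the $g^{(1)}$ identity in $\theta$, the two $\tfrac12 b^2\,\partial_\theta\partial_y^2 g$ contributions to $\EE_{\theta_0}(\partial_\theta g\mid\xtl)$ cancel, leaving exactly $-\tfrac12\partial_\theta b^2\,\partial_y^2 g$. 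Upgrading these pointwise limits to uniform-in-$\theta$ convergence of $G_n$ and $\partial_\theta G_n$ (using the extra $\theta$-derivative in Assumption \ref{assumptions_on_g} for equicontinuity) and invoking a standard root-existence and consistency argument for estimating functions, as in \citet{efficient}, yields part \ref{item:existence_asymp} and gives $\partial_\theta G_n(\bar\theta_n)\overset{\pp}{\longrightarrow}F'(\theta_0)$.

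For part \ref{item:main_asymp} I center, setting $\xi_{n,i}=\sqrt n(g(\Delta_n,\xtr,\xtl;\theta_0)-\EE_{\theta_0}(g\mid\xtl))$, so $\sum_i\xi_{n,i}$ is a $(\gg_{n,i})$-martingale and $\sqrt n\,G_n(\theta_0)=\sum_i\xi_{n,i}+\sqrt n\sum_i\EE_{\theta_0}(g\mid\xtl)$, the last sum being $O(n^{3/2-\kappa})\to0$ precisely because $\kappa\ge2$ in \eqref{asymp_AMG}. I then apply a stable central limit theorem for triangular arrays of martingale differences. The leading stochastic part $g-\EE_{\theta_0}(g\mid\xtl)\approx\tfrac12\partial_y^2 g(0,\xtl,\xtl;\theta_0)b^2(\xtl;\theta_0)((\wtr-\wtl)^2-\Delta_n)$ yields $\sum_i\EE_{\theta_0}(\xi_{n,i}^2\mid\xtl)\overset{\pp}{\longrightarrow}\int_0^1\tfrac12 b^4(X_s;\theta_0)(\partial_y^2 g(0,X_s,X_s;\theta_0))^2\,ds$, the squared numerator of \eqref{W}, nonzero by \ref{item:cnot0}; the conditional Lyapunov bound $\sum_i\EE_{\theta_0}(\xi_{n,i}^4\mid\xtl)=O(\Delta_n)\to0$ holds; and, since the relevant odd Gaussian moments vanish, $\sum_i\EE_{\theta_0}(\xi_{n,i}(\wtr-\wtl)\mid\xtl)\overset{\pp}{\longrightarrow}0$. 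As the filtration is generated by $U$ and $\W$, this asymptotic orthogonality to $\W$ is the only nontrivial condition, and it makes the mixing variable (a functional of the path, hence $\ff$-measurable) independent of the limiting standard normal $Z$. Dividing the resulting stable limit by $F'(\theta_0)$ gives \eqref{rate_limit} with $W(\theta_0)$ as in \eqref{W}, a normal variance-mixture.

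For part \ref{item:asymp_estvar}, the same expansions give $\tfrac1{\Delta_n}\sum_i g^2(\Delta_n,\xtr,\xtl;\hat\theta_n)\overset{\pp}{\longrightarrow}\int_0^1\tfrac12 b^4(X_s;\theta_0)(\partial_y^2 g(0,X_s,X_s;\theta_0))^2\,ds$, where it is the approximate-martingale centering, i.e.\ the $g^{(1)}$ identity, that replaces the naive factor $\tfrac34$ coming from $\EE_{\theta_0}((\wtr-\wtl)^4\mid\xtl)=3\Delta_n^2$ by the correct $\tfrac12$ coming from $\mathrm{Var}_{\theta_0}((\wtr-\wtl)^2\mid\xtl)=2\Delta_n^2$, while $\sum_i\partial_\theta g(\Delta_n,\xtr,\xtl;\hat\theta_n)\overset{\pp}{\longrightarrow}F'(\theta_0)$; consistency of $\hat\theta_n$ together with the uniform convergence lets me evaluate at $\hat\theta_n$ rather than $\theta_0$. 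Hence $\widehat W_n\overset{\pp}{\longrightarrow}W(\theta_0)$, and combining this with the stable, not merely distributional, convergence in \eqref{rate_limit} lets the $\ff$-measurable factor cancel, giving $\sqrt n\,\widehat W_n^{-1}(\hat\theta_n-\theta_0)\overset{\dd}{\longrightarrow}\nn(0,1)$. The main obstacle, and the bulk of the work deferred to the lemmas of Section \ref{sec:mainlemmaproofs}, is establishing the stochastic expansions of $g$, $g^2$ and $\partial_\theta g$ to sufficiently high order in $\Delta_n^{1/2}$ --- which is why Assumption \ref{assumptions_on_g} demands eight $y$-derivatives --- so as to control every Riemann-sum remainder, verify the variance, Lyapunov and orthogonality conditions of the stable CLT, and confirm negligibility of the approximate-martingale remainder; since ergodicity is unavailable here, unlike in \citet{efficient}, these limits must genuinely be handled as Riemann sums along the path.
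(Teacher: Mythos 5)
Your proposal is correct and follows essentially the same route as the paper: a Riemann-sum law of large numbers with uniform-in-$\theta$ convergence for existence, consistency and the denominator (the paper's Lemma \ref{main_lemma_P}, combined with the estimating-function machinery of \citet{MSbog} in place of your explicit mean-value expansion), and a stable martingale CLT for $\sqrt{n}\,G_n(\theta_0)$ verified through exactly the conditions you list --- negligible conditional means via $\kappa\ge 2$, the conditional variance limit, a Lyapunov bound, asymptotic orthogonality to $\W$ (the paper's Lemma \ref{g_delta_w_1}, which requires an It\^{o}-formula argument rather than just vanishing odd Gaussian moments, though you correctly defer this), and the martingale representation theorem to dispose of the remaining orthogonality condition. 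Your identification of the key cancellations producing $A$, $B$ and $C$, and your use of stable (not merely weak) convergence to justify the self-normalisation in part \ref{item:asymp_estvar}, match the paper's argument.
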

The proof of Theorem \ref{est_exist} is given
in Section \ref{sec:main_lemma}.
\medskip

\citet{dohnal1987} and \citet{gobet2001} showed local asymptotic mixed
normality with rate $\sqrt{n}$, so Theorem  \ref{est_exist}
establishes rate optimality of $G_n$-estimators.
\medskip

Observe that the limit distribution in Theorem
\ref{est_exist}.\ref{item:main_asymp} generally depends on not only the unknown
parameter $\thetan$, but also on the concrete realisation of the sample path $t \mapsto X_t$ over
$[0,1]$, which is only partially observed. Note also that a
variance-mixture of normal distributions can be very different from a
Gaussian distribution. It can be much more heavy-tailed and even have
no moments. Theorem \ref{est_exist}.\ref{item:asymp_estvar} is
therefore important as it
yields a standard normal limit distribution, which is more useful in
practical applications. 

\subsection{Efficiency}
\label{sec:asymp:eff}
Under the assumptions of Theorem \ref{est_exist}, the following additional
condition ensures efficiency of a consistent $G_n$-estimator.
\begin{assumption}
Suppose that for each $\theta \in \Theta$, there exists a constant $K_\theta \neq 0$ such that for all $x\in \xx$,
\begin{align*}
\partial^2_y g(0,x,x; \theta) = K_\theta \frac{\partial_\theta
    b^2(x; \theta)}{b^4(x; \theta)}\,.
\end{align*}\asqed
\label{efficiency}
\end{assumption}

\citet{dohnal1987} and \citet{gobet2001} showed
that the local asymptotic mixed normality property holds
within the framework considered here with random Fisher information
$\mathcal{I}(\theta_0)$ given by (\ref{I}). Thus, a $G_n$-estimator
$\hat{\theta}_n$ is efficient if (\ref{rate_limit}) holds with
$W(\thetan) = \mathcal{I}(\theta_0)^{-1/2}$, and the following
Corollary \ref{cor_effi} may easily be verified.
\begin{corollary}
Suppose that the assumptions of Theorem \ref{est_exist} and Assumption
\ref{efficiency} hold. Then, any consistent $G_n$-estimator is also
efficient.\theoqed
\label{cor_effi}
\end{corollary}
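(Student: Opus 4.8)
The plan is to prove Corollary \ref{cor_effi} by a direct computation: under Assumption \ref{efficiency} the path-dependent factor $W(\thetan)$ in (\ref{W}) collapses, up to a sign, to $\ii(\thetan)^{-1/2}$, so that the limit in (\ref{rate_limit}) takes exactly the efficient form. Since Theorem \ref{est_exist}.\ref{item:main_asymp} already supplies $\sqrt{n}(\hat{\theta}_n - \thetan) \overset{\dd}{\longrightarrow} W(\thetan)Z$ with $Z$ standard normal and independent of $W(\thetan)$, no new probabilistic argument is needed; everything reduces to simplifying the explicit expression for $W(\thetan)$ and comparing it with (\ref{I}).

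First I would substitute the efficiency relation $\partial_y^2 g(0,X_s,X_s; \thetan) = K_{\thetan}\,\partial_\theta b^2(X_s; \thetan)/b^4(X_s; \thetan)$ into both the numerator and the denominator of (\ref{W}). In the numerator integrand the factor $b^4$ cancels against the $b^8$ arising from the squared ratio, leaving $\tfrac12 K_{\thetan}^2 (\partial_\theta b^2)^2/b^4$; in the denominator integrand one factor of $\partial_\theta b^2$ cancels, leaving $\tfrac12 K_{\thetan}(\partial_\theta b^2)^2/b^4$. Writing $J = \int_0^1 (\partial_\theta b^2(X_s; \thetan))^2/b^4(X_s; \thetan)\, ds$, which is strictly positive by Assumption \ref{assumptions_for_maintheo}.\ref{item:cnot0} together with the efficiency relation, the numerator of (\ref{W}) becomes $(\tfrac12 K_{\thetan}^2 J)^{1/2} = |K_{\thetan}|(J/2)^{1/2}$ and the denominator becomes $\tfrac12 K_{\thetan} J$, so that $W(\thetan) = \mathrm{sign}(K_{\thetan})\,\sqrt{2}\,J^{-1/2}$.

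Next I would relate $J$ to the Fisher information $\ii(\thetan)$ in (\ref{I}). Using $\partial_\theta b^2 = 2b\,\partial_\theta b$, one has $(\partial_\theta b^2)^2/b^4 = 4(\partial_\theta b/b)^2$, hence $J = 4\int_0^1 (\partial_\theta b(X_s; \thetan)/b(X_s; \thetan))^2\, ds = 2\,\ii(\thetan)$ by (\ref{I}). Substituting back yields $W(\thetan) = \mathrm{sign}(K_{\thetan})\,\sqrt{2}\,(2\ii(\thetan))^{-1/2} = \mathrm{sign}(K_{\thetan})\,\ii(\thetan)^{-1/2}$, so that $|W(\thetan)| = \ii(\thetan)^{-1/2}$.

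Finally I would observe that the sign of $K_{\thetan}$ is immaterial. Because $Z$ is standard normal and independent of $W(\thetan)$, conditionally on $W(\thetan)$ we have $-Z \overset{\dd}{=} Z$, and therefore $W(\thetan)Z$ and $\ii(\thetan)^{-1/2}Z$ share the same normal variance-mixture distribution. Combined with (\ref{rate_limit}), this gives $\sqrt{n}(\hat{\theta}_n - \thetan) \overset{\dd}{\longrightarrow} \ii(\thetan)^{-1/2}Z$, which is precisely the efficient limit, proving the corollary. The computation is entirely routine; the only points requiring a word of care are the sign bookkeeping in the last step and the verification that $J > 0$ so that the ratio and the power $\ii(\thetan)^{-1/2}$ are well defined, both of which are immediate from the assumptions already in force.
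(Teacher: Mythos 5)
Your proposal is correct and is precisely the computation the paper leaves implicit when it says Corollary \ref{cor_effi} ``may easily be verified'': substituting Assumption \ref{efficiency} into (\ref{W}), simplifying to $W(\thetan)=\mathrm{sign}(K_{\thetan})\,\ii(\thetan)^{-1/2}$, and absorbing the sign into the symmetric $Z$. The algebra, the use of Assumption \ref{assumptions_for_maintheo} to guarantee $J>0$, and the sign bookkeeping are all handled correctly.
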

It follows from Theorem \ref{est_exist}  and Lemma \ref{main_lemma_P}
that if Assumption \ref{efficiency} holds, and if $G_n$ is normalized
such that $K_\theta=1$, then
\begin{align*}
\sqrt{n} \,\widehat{\cal I}_n^{\frac12} ( \hat{\theta}_n - \theta_0)
\overset{\dd}{\longrightarrow} \nn (0,1)\,,
\end{align*}
where
\[
\widehat{\cal I}_n = \frac{1}{\Delta_n} \sum_{i=1}^n
 g^2(\Delta_n, \xtr, \xtl; \hat{\theta}_n).
\]
\medskip

It was noted in Section \ref{asymp:amef} that not necessarily all
versions of a particular estimating function satisfy the conditions of 
this paper, even though they lead to the same estimator. Thus, an
estimating function is said to be efficient, if there exists a version
which satisfies the conditions of Corollary \ref{cor_effi}. The same
goes for rate optimality. 
\medskip

Assumption \ref{efficiency} is identical to the condition for
efficiency of estimators of parameters in the diffusion coefficient in
\citet{efficient}, and to one of the conditions for
small $\Delta$-optimality given in \citet{jacobsen2002}. 
\medskip

Under suitable regularity conditions on the diffusion coefficient $b$, the function 
\begin{align}
\bar{g}(t,y,x; \theta) &= \frac{\partial_\theta b^2(x;
  \theta)}{b^4(x; \theta)} \left( (y-x)^2 - t b^2(x; \theta)\right)
\label{effi_g}
\end{align}
yields an example of an efficient estimating function. The approximate
martingale property (\ref{asymp_AMG}) can be verified by Lemma \ref{stoch_taylor}.
\medskip

When adapted to the current framework, the contrast functions
investigated by \citet{gc_jacod} have the form
\begin{align*}
U_n(\theta) &= \frac{1}{n} \sum_{i=1}^n f\left( b^2(\xtl; \theta), \Delta_n^{-1/2}(\xtr-\xtl)\right)\,,
\end{align*}
for functions $f(v,w)$ satisfying certain conditions. For the contrast
function identified as efficient by
\citeauthor{gc_jacod}, $f(v,w) = \log v + w^2/v$. Using that $\Delta_n
= 1/n$, it is then seen that their efficient contrast function is of
the form $\bar{U}_n(\theta) =
\sum_{i=1}^n \bar{u}(\Delta_n,\xtr,\xtl; \theta)$ with 
\begin{align*}
\bar{u}(t,y,x; \theta) &= t\log b^2(x; \theta) + (y-x)^2/b^2(x; \theta)
\end{align*}
and $\partial_\theta
\bar{u}(t,y,x; \theta) = -\bar{g}(t,y,x; \theta)$. In other words, it corresponds
to a version of the efficient approximate martingale estimating function given
by (\ref{effi_g}). The same contrast function was considered by
\citet{uchidayoshida} in the framework of a more general class of
stochastic differential equations.
\medskip

A problem of considerable practical interest is how to construct
estimating functions that are rate optimal and efficient,
i.e. estimating functions satisfying Assumptions
\ref{assumptions_on_g}.\ref{dyg0_cont} and \ref{efficiency}. Being the
same as the conditions for small $\Delta$-optimality, the assumptions
are, for example, satisfied for martingale estimating functions
constructed by \citet{jacobsen2002}.
\medskip

As discussed by \citet{efficient}, the rate optimality and efficiency
conditions are also satisfied by Godambe-Heyde optimal approximate
martingale estimating functions. Consider martingale estimating functions of the form
\begin{align*}
g(t,y,x;\theta)
&= a(x,t;\theta)^* \left(f(y;\theta) - \phi^t_\theta f(x;\theta)\right)\,,
\end{align*}
where $a$ and $f$ are two-dimensional, $*$ denotes transposition, and 
$\phi^t_\theta f(x;\theta)   = \EE_\theta (f(X_t; \theta) \mid
X_0=x)$. Suppose that $f$ satisfies appropriate (weak) 
conditions. Let $\bar{a}$ be the
weight function for which the estimating function is optimal in the
sense of \citet{gh1987}, see e.g. \citet{heyde1997} or
\citet[Section 1.11]{MSbog}. It follows by an argument analogous to
the proof of Theorem $4.5$ in \citet{efficient} that the estimating
function with
\begin{align*}
g(t,y,x;\theta) = t \bar{a}(x,t;\theta)^*[f(y;\theta) -
  \phi^t_\theta f(x;\theta)]
\end{align*}
satisfies Assumptions
\ref{assumptions_on_g}.\ref{dyg0_cont} and \ref{efficiency}, and is
thus rate optimal and efficient. As there is a simple formula for
$\bar{a}$ (see Section 1.11.1 of \citet{MSbog}), this provides a way of
constructing a large number of efficient estimating functions. The result also
holds if $\phi^t_\theta f(x;\theta)$ and the conditional moments in the
formula for $\bar{a}$ are suitably approximated by the help of Lemma
\ref{stoch_taylor}.  
\medskip

\begin{remark}
Suppose for a moment that the diffusion coefficient of (\ref{eqn:SDE})
has the form $b^2(x; \theta) = h(x)k(\theta)$ for strictly
positive functions $h$ and $k$, with Assumption \ref{assumptions_on_X}
satisfied. This holds true, e.g., for a number of Pearson diffusions,
including the (stationary) Ornstein-Uhlenbeck 
and square root processes. (See \citet{forman2008} for more on Pearson
diffusions.) Then $\ii(\theta_0)
= \partial_\theta k(\theta_0)^2/(2k^2(\theta_0))$. In this case, under the assumptions
of  Corollary \ref{cor_effi}, an efficient $G_n$-estimator
$\hat{\theta}_n$ satisfies that $\sqrt{n} (\hat{\theta}_n - \theta_0)
\longrightarrow Y$ in distribution where $Y$ is normal distributed
with mean zero and variance $2k^2(\theta_0)/\partial_\theta
k(\theta_0)^2$, i.e.\ the limit distribution is not a normal
variance-mixture depending on $(X_t)_{t\in [0,1]}$. 
Note also that when $b^2(x; \theta) = h(x)k(\theta)$ and Assumption
\ref{efficiency} holds, then Assumption
\ref{assumptions_for_maintheo} is satisfied when, e.g., $\partial_\theta
k(\theta)>0$ or $\partial_\theta k(\theta)<0$.
\reqed\label{remark:pearson}
\end{remark}
\medskip

\section{Simulation study}\label{sec:asymp:sim}

This section presents a simulation study illustrating the theory in the
previous section. An efficient and an inefficient estimating
function are compared for two models, an ergodic and a non-ergodic
model. For both models the limit distributions of the consistent
estimators are non-degenerate normal variance-mixtures. 
\medskip

First, consider the stochastic differential equation
\begin{align}
dX_t &= -2 X_t \, dt + (\theta+X_t^2)^{-1/2}\, dW_t, \ \ \ X_0=0,
\label{SDEex}
\end{align}
where $\theta \in (0,\infty)$ is an unknown parameter. The solution
$\X$ is ergodic with invariant probability density proportional to
$\exp\left( -2 \theta x^2 - x^4\right)\left(\theta + x^2\right)$, $x \in \RR$.
The process satisfies Assumption \ref{assumptions_on_X}. We compare
the two estimating functions given by 
\begin{align*}
G_n(\theta) = \sum_{i=1}^n g(\Delta_n, \xtr, \xtl; \theta)\quad\text{
  and }\quad H_n(\theta) = \sum_{i=1}^n h(\Delta_n, \xtr, \xtl; \theta)
\end{align*}
where
\begin{align*}
g(t,y,x; \theta) &=  (y-(1-2t)x)^2-(\theta+x^2)^{-1}t  \\
h(t,y,x; \theta) &= (\theta+x^2)^{10} (y-(1-2t)x)^2-(\theta+x^2)^9t \,.
\end{align*} 
Both $g$ and $h$ satisfy Assumptions \ref{assumptions_on_g} and
\ref{assumptions_for_maintheo}. Moreover, $g$ satisfies the condition
for efficiency, while $h$ is not efficient. 
\medskip

Let $W_G(\thetan)$ and $W_H(\thetan)$ be given by
(\ref{W}), that is
\begin{align}
W_G(\theta_0) =
- \left(\tfrac{1}{2}\int_0^1 \frac{1}{(\theta_0+X_s^2)^2}\,
  ds\right)^{-1/2}\hspace{1mm}\text{and}\hspace{2mm} W_H(\theta_0) = -\frac{\left(
  \displaystyle \int_0^1 2 (\theta_0+X_s^2)^{18}\,
  ds\right)^{1/2}}{\displaystyle \int_0^1 (\theta_0+X_s^2)^8\, ds}\,. 
\label{WGH}
\end{align}
Numerical calculations and simulations were done in R
3.1.3 \citep{R}. First, $m=10^4$ trajectories of the process $\X$ given by
(\ref{SDEex}) were simulated over the time-interval $[0,1]$ with 
$\theta_0 = 1$. These simulations were performed using the Milstein
scheme as implemented in the R-package \emph{sde} \citep{sde} with
step size $10^{-5}$. The simulations were subsampled to obtain samples
sizes of $n=10^3$ and $n=10^4$. Let $\hat{\theta}_{G,n}$ and 
$\hat{\theta}_{H,n}$ denote estimates of $\theta$
obtained by solving the equations $G_n(\theta)=0$ and
$H_n(\theta)=0$ numerically, on the interval $[0.01, 1,99]$. Using these estimates,
$\widehat{W}_{G,n}$ and $\widehat{W}_{H,n}$ were calculated by
(\ref{What}). For $n=10^3$, $\hat{\theta}_{H,n}$ could not be computed
for $30$ of the $m=10^4$ sample paths. For $n=10^4$, and for the
efficient estimator $\hat{\theta}_{G,n}$ there were no problems.  
\medskip

Figure \ref{fig:qq} shows QQ-plots of 
\begin{align*}
\widehat{Z}_{G,n} = \sqrt{n}\,\widehat{W}_{G,n}^{-1}(\hat{\theta}_{G,n}-\theta_0)\quad\text{ and
}\quad \widehat{Z}_{H,n} = \sqrt{n}\,\widehat{W}_{H,n}^{-1}(\hat{\theta}_{H,n}-\theta_0)\,,
\end{align*}
compared with a standard normal distribution, for $n=10^3$ and
$n=10^4$ respectively.
\begin{figure}
\begin{center}
\includegraphics[width=0.45\textwidth]{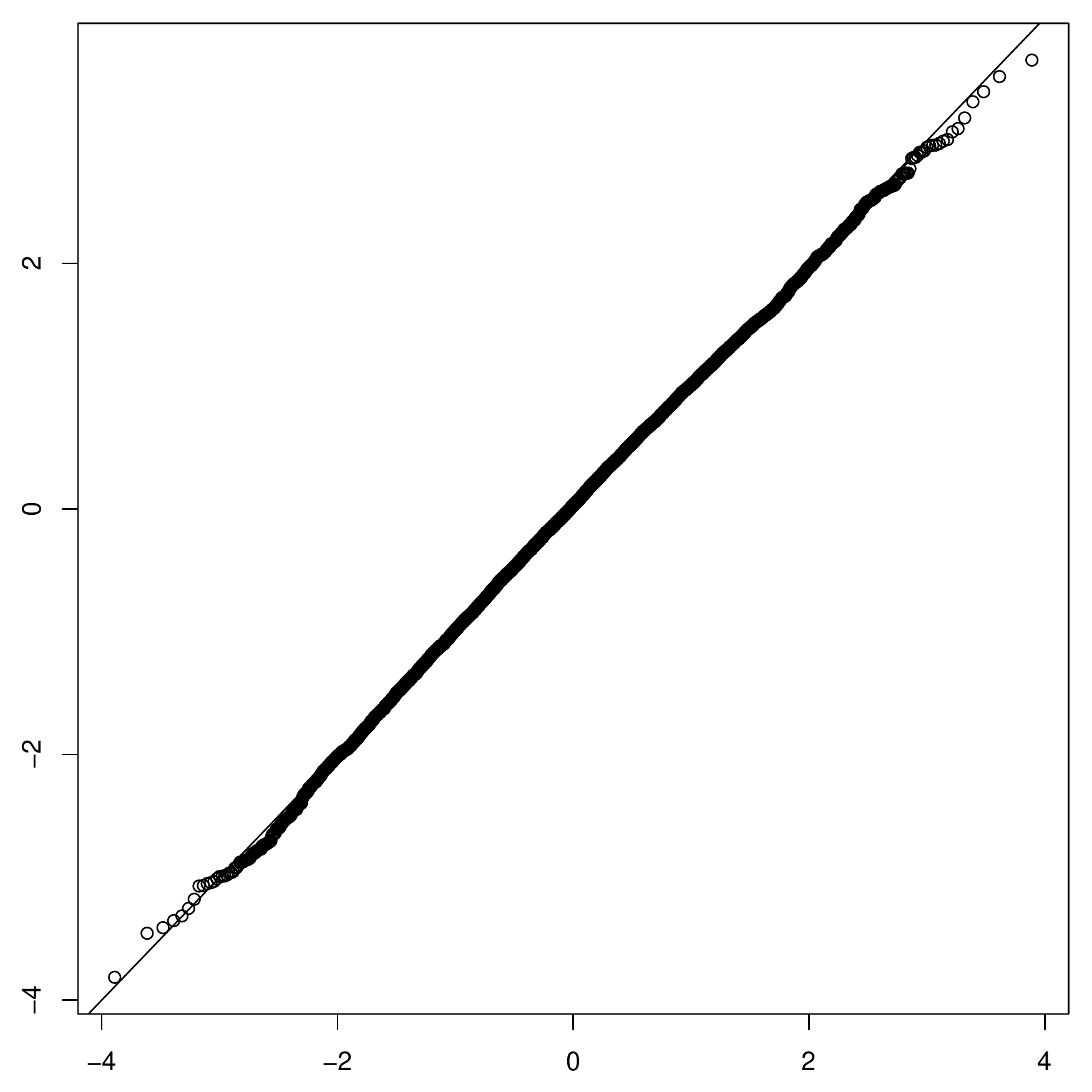}
\includegraphics[width=0.45\textwidth]{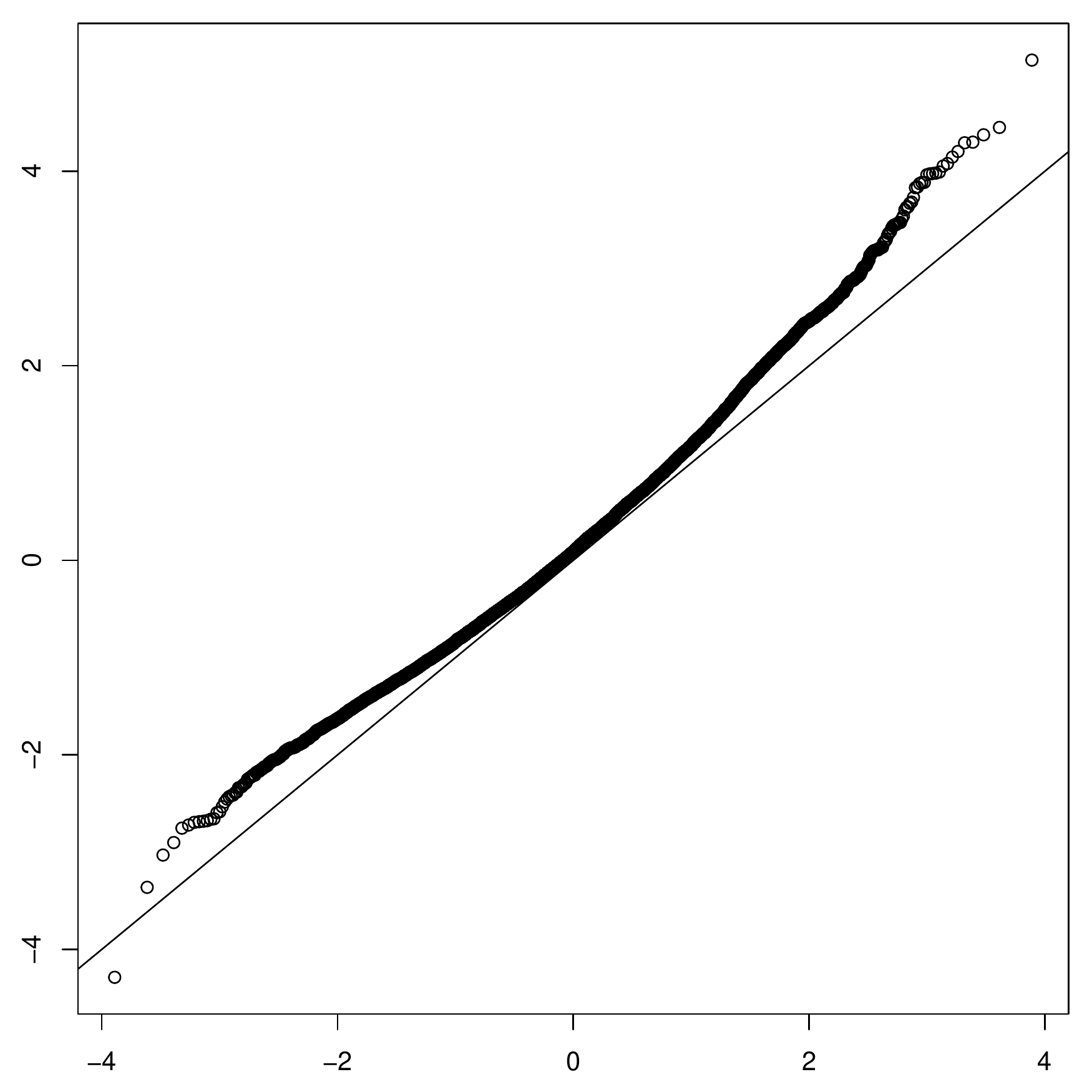}
\includegraphics[width=0.45\textwidth]{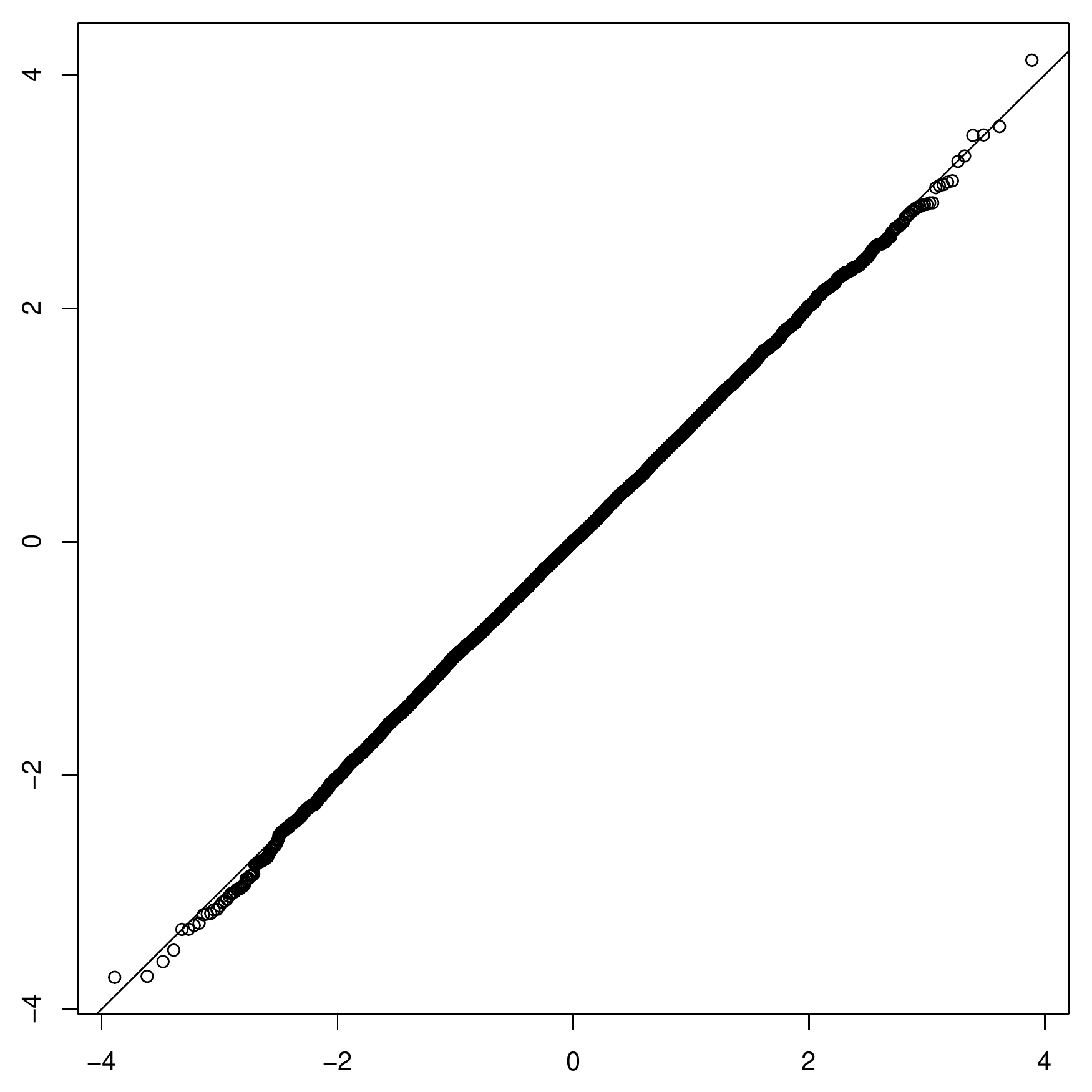}
\includegraphics[width=0.45\textwidth]{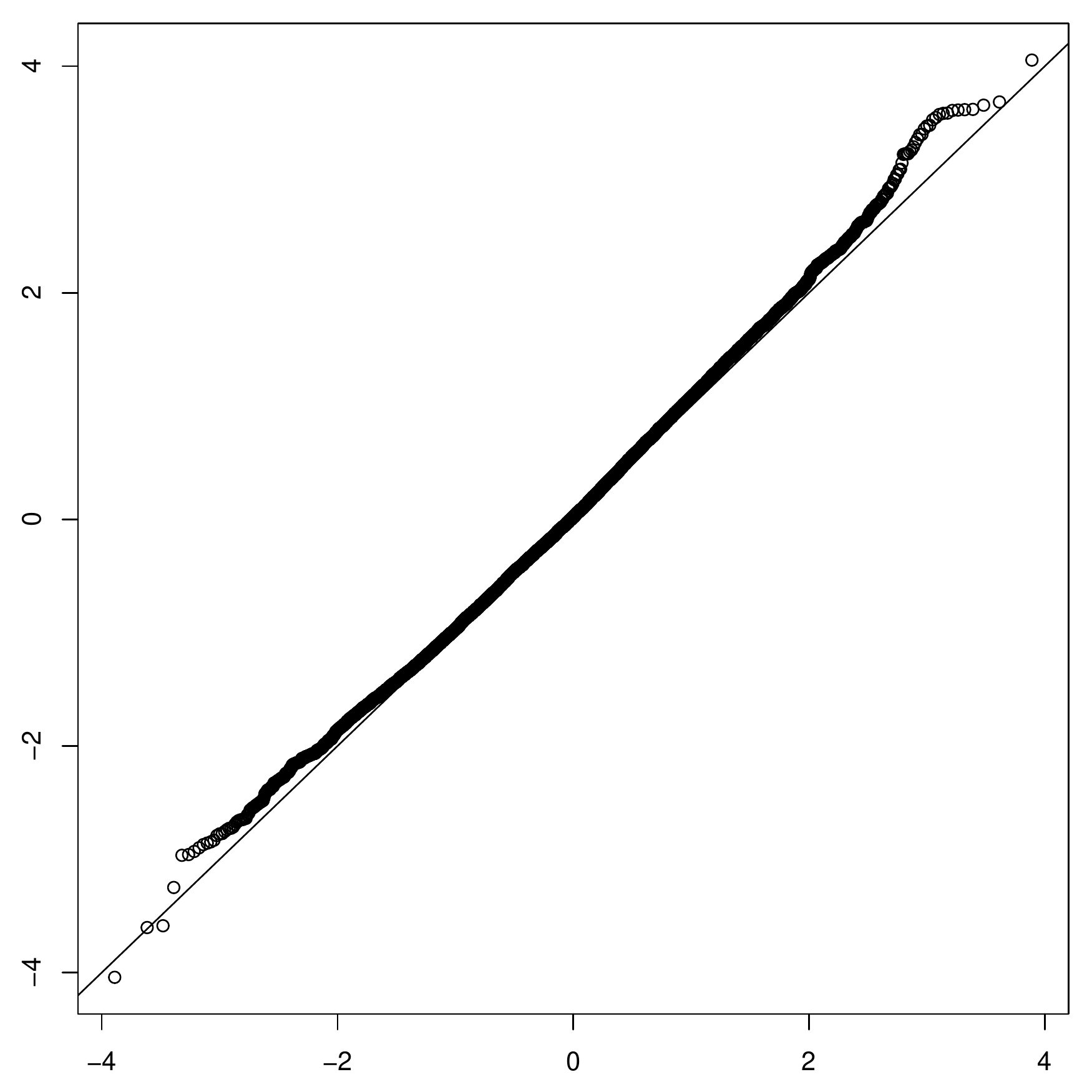}
\end{center}
\caption{QQ-plots comparing $\widehat{Z}_{G,n}$ (left) and
$\widehat{Z}_{H,n}$ (right) to the $\nn(0,1)$ distribution in the case
of the ergodic model (\ref{SDEex}) for
$n=10^3$ (above) and $n=10^4$ (below).} 
\label{fig:qq}
\end{figure} 
These QQ-plots suggest that, as
$n$ goes to infinity, the asymptotic
distribution in Theorem
\ref{est_exist}.\ref{item:asymp_estvar} becomes a good approximation
faster in the efficient case than in the inefficient
case.
\medskip 

Inserting $\thetan=1$ into (\ref{WGH}), the integrals in these
expressions may be approximated by Riemann sums, using each of the 
simulated trajectories of $\X$ (with sample size $n=10^4$ for maximal
accuracy). This method yields a second set of approximations
$\widetilde{W}_{G}$ and $\widetilde{W}_{H}$ to the realisations of the
random variables 
$W_G(\theta_0)$ and $W_H(\theta_0)$, presumed to be more accurate than
$\widehat{W}_{G,10^4}$ and $\widehat{W}_{H,10^4}$ as they utilise the
true parameter value. The \emph{density} function in R was used (with
default arguments) to compute an approximation to 
the densities of $W_G(\theta_0)$ and $W_H(\theta_0)$, using the
approximate realisations $\widetilde{W}_{G}$ and
$\widetilde{W}_{H}$. 
\begin{figure}
\begin{center}
\includegraphics[width=0.45\textwidth]{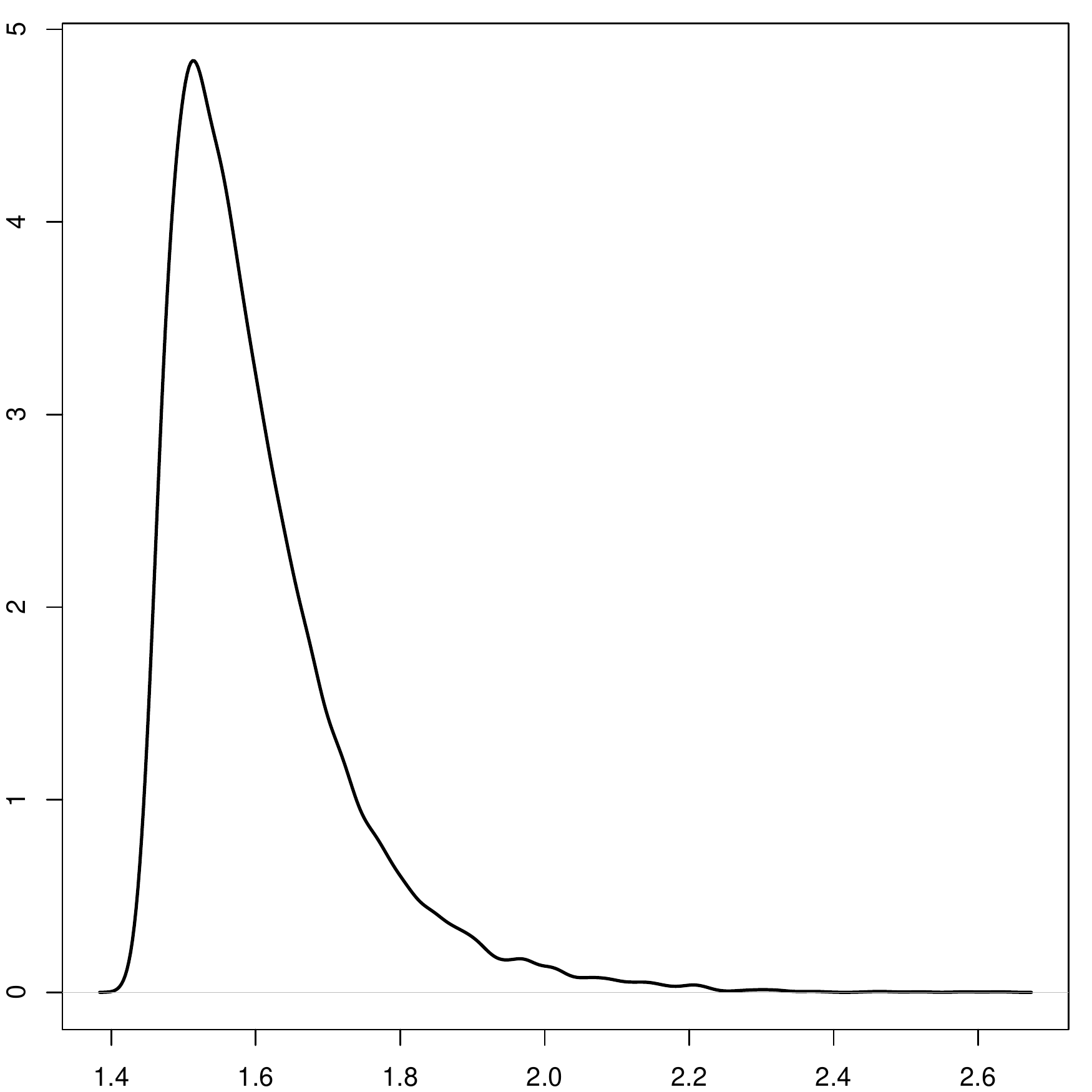}
\includegraphics[width=0.45\textwidth]{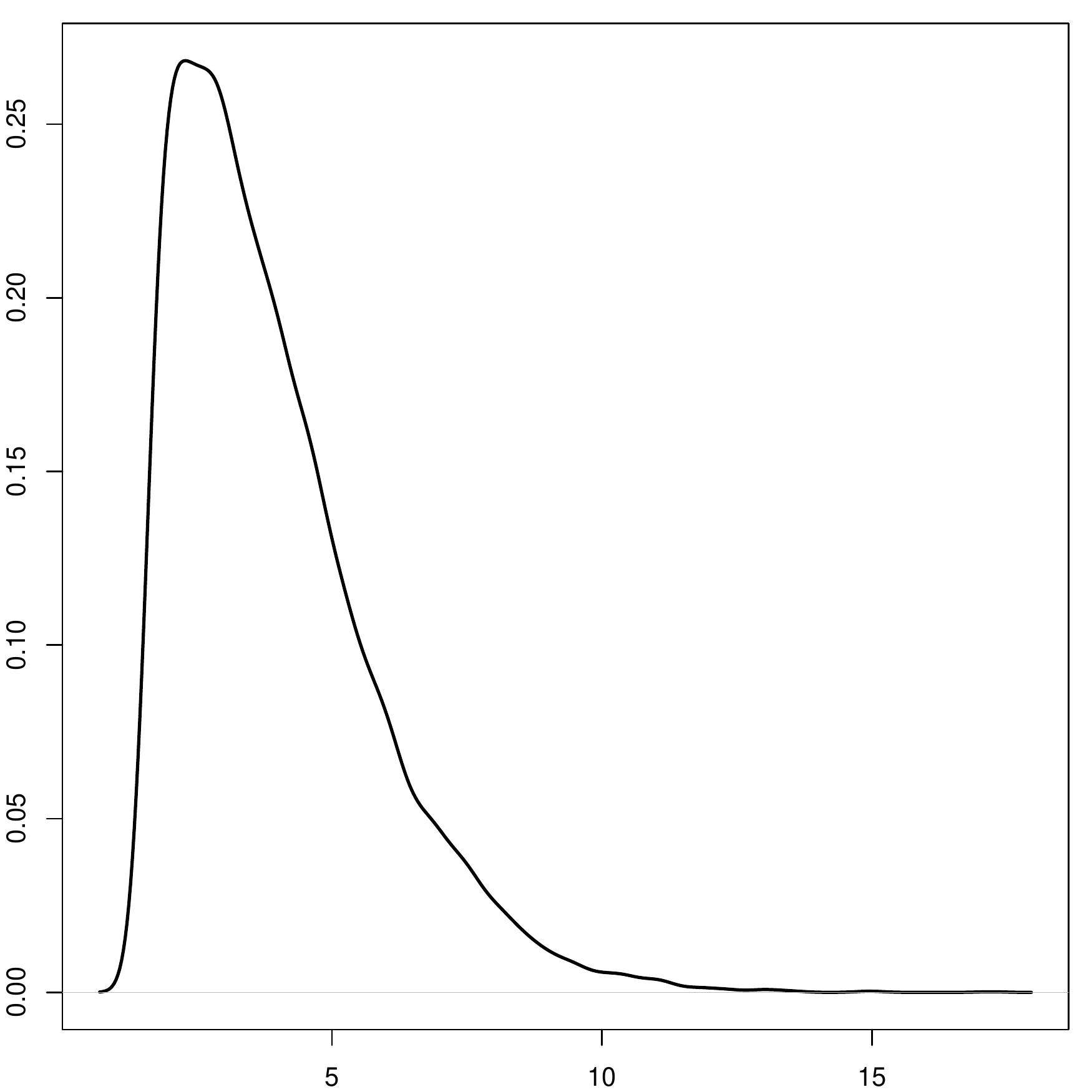}
\end{center}
\caption{Approximation to the densities of $W_G(\theta_0)$ (left) and
$W_H(\theta_0)$ (right) based on $\widetilde{W}_{G}$ and
$\widetilde{W}_{H}$ in the case of the ergodic model (\ref{SDEex}).}
\label{fig:dense}
\end{figure}   
\medskip

It is seen from Figure \ref{fig:dense} 
that the distribution of
$W_H(\theta_0)$ is much more spread out
than the distribution of $W_G(\theta_0)$. This corresponds well to the
limit distribution in Theorem
\ref{est_exist}.\ref{item:main_asymp} being more spread out in the
inefficient case than in the efficient case. 
\begin{figure}
\begin{center}
\includegraphics[width=0.45\textwidth]{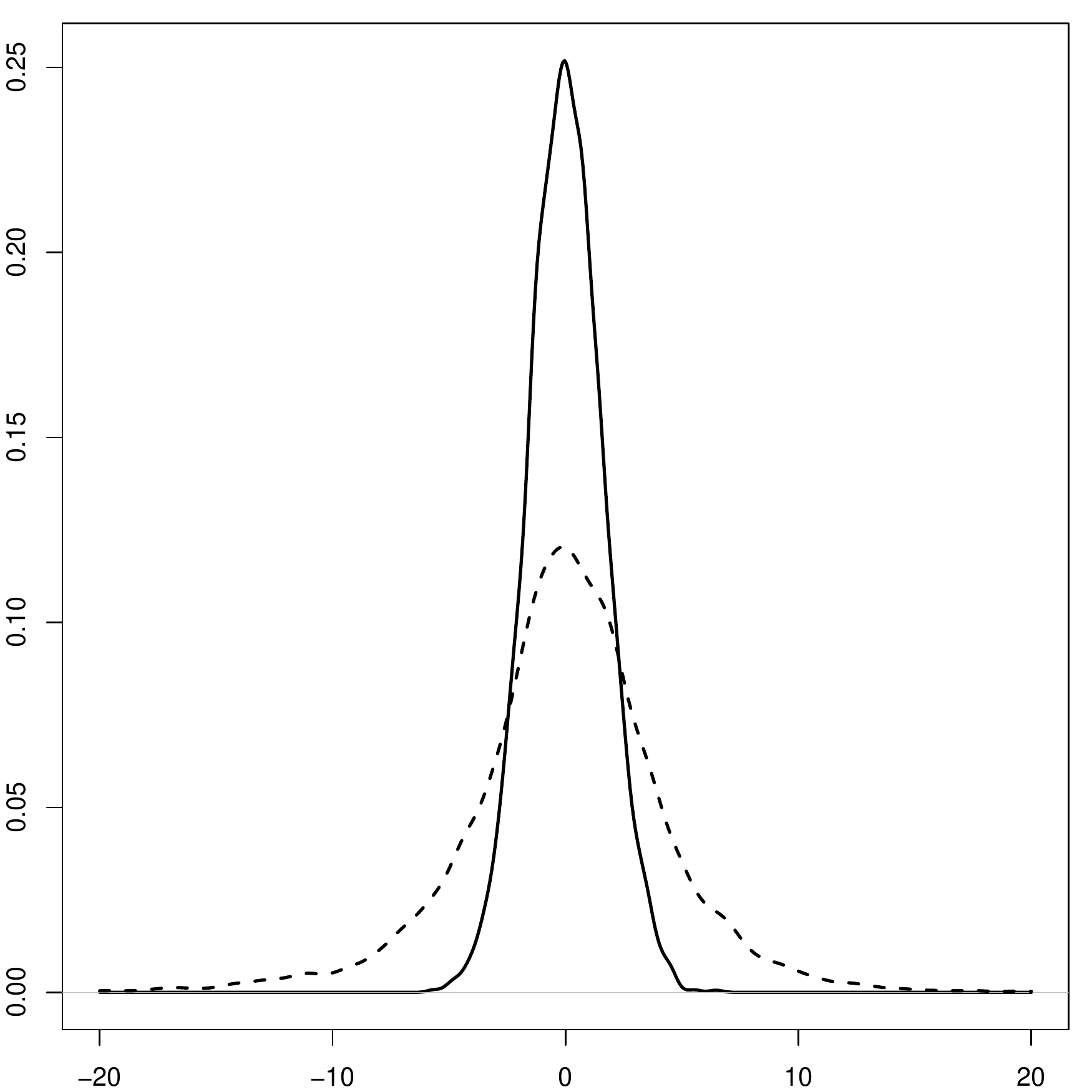}
\end{center}
\caption{Estimated densities of $\sqrt{n}(\hat{\theta}_{G,n}-\theta_0)$ (solid curve) and
  $\sqrt{n}(\hat{\theta}_{H,n}-\theta_0) $ (dashed curve) for $n=10^4$
  in the case of the ergodic model (\ref{SDEex}).} 
\label{fig:visual}
\end{figure}  
Along the same lines, Figure \ref{fig:visual} 
shows similarly
computed densities based on $\sqrt{n}(\hat{\theta}_{G,n}-\theta_0)$ and
$\sqrt{n}(\hat{\theta}_{H,n}-\theta_0)$ for $n=10^4$,
which may be considered approximations to the densities of the normal
variance-mixture limit distributions in Theorem
\ref{est_exist}.\ref{item:main_asymp}. These plots also illustrate
that the limit distribution of the inefficient estimator is more
spread out than that of the efficient estimator. 
\medskip 

Now, consider the stochastic differential equation
\begin{align}
dX_t &= 2 X_t \, dt + (\theta+X_t^2)^{-1/2}\, dW_t, \ \ \ X_0=0.
\label{SDEex2}
\end{align}
For this model,
the solution $\X$ is not ergodic, but again Assumption
\ref{assumptions_on_X} holds. We compare
the two estimating functions given by 
\begin{align*}
g(t,y,x; \theta) &=  (y-(1+2t)x)^2-(\theta+x^2)^{-1}t  \\
h(t,y,x; \theta) &= (\theta+x^2)^{10} (y-(1+2t)x)^2-(\theta+x^2)^9t \,.
\end{align*} 
For both $g$ and $f$ Assumptions \ref{assumptions_on_g} and
\ref{assumptions_for_maintheo} hold, and $g$ is efficient, while $h$
is not. 
\medskip 

Simulations were carried out in the same manner as for the ergodic model. In the
non-ergodic case, an estimator was again found for every sample path when the
efficient estimating function given by $g$ was used. For the inefficient
estimating function given by $h$, there was no solution to the estimating
equation (in $[0.01,1.99]$) in 14\% of the samples for $n=10^4$ and in
39 \% of the samples for $n=10^3$. Figure \ref{fig:qq2} shows QQ-plots of
$\widehat{Z}_{G,n}=\sqrt{n}\,\widehat{W}_{G,n}^{-1}(\hat{\theta}_{G,n}
-\theta_0)$ and $\widehat{Z}_{H,n} = \sqrt{n}\,\widehat{W}_{H,n}^{-1}
(\hat{\theta}_{H,n}-\theta_0)$ compared with a standard normal
distribution, for $n=10^3$ and $n=10^4$ respectively.
\begin{figure}
\begin{center}
\includegraphics[width=0.45\textwidth]{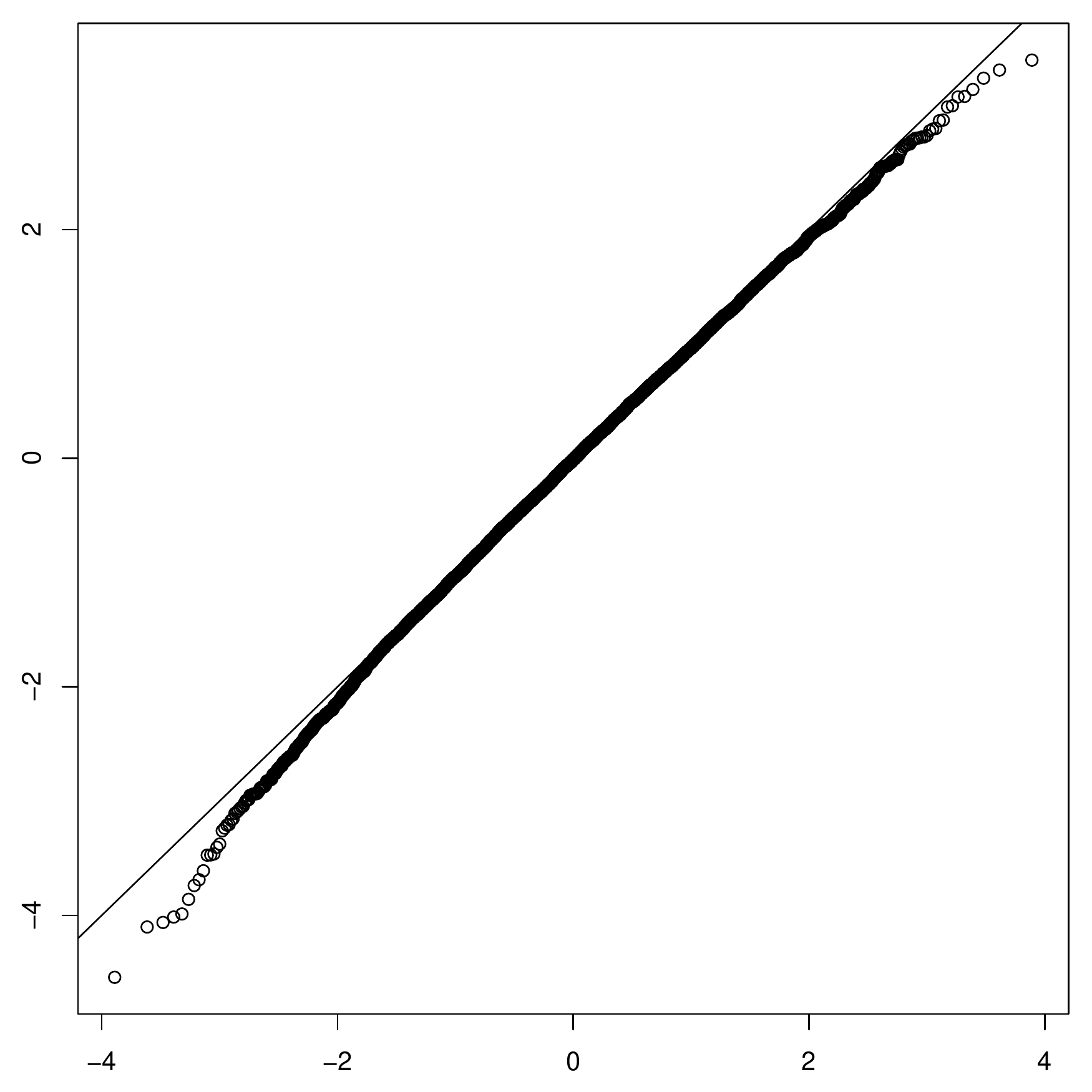}
\includegraphics[width=0.45\textwidth]{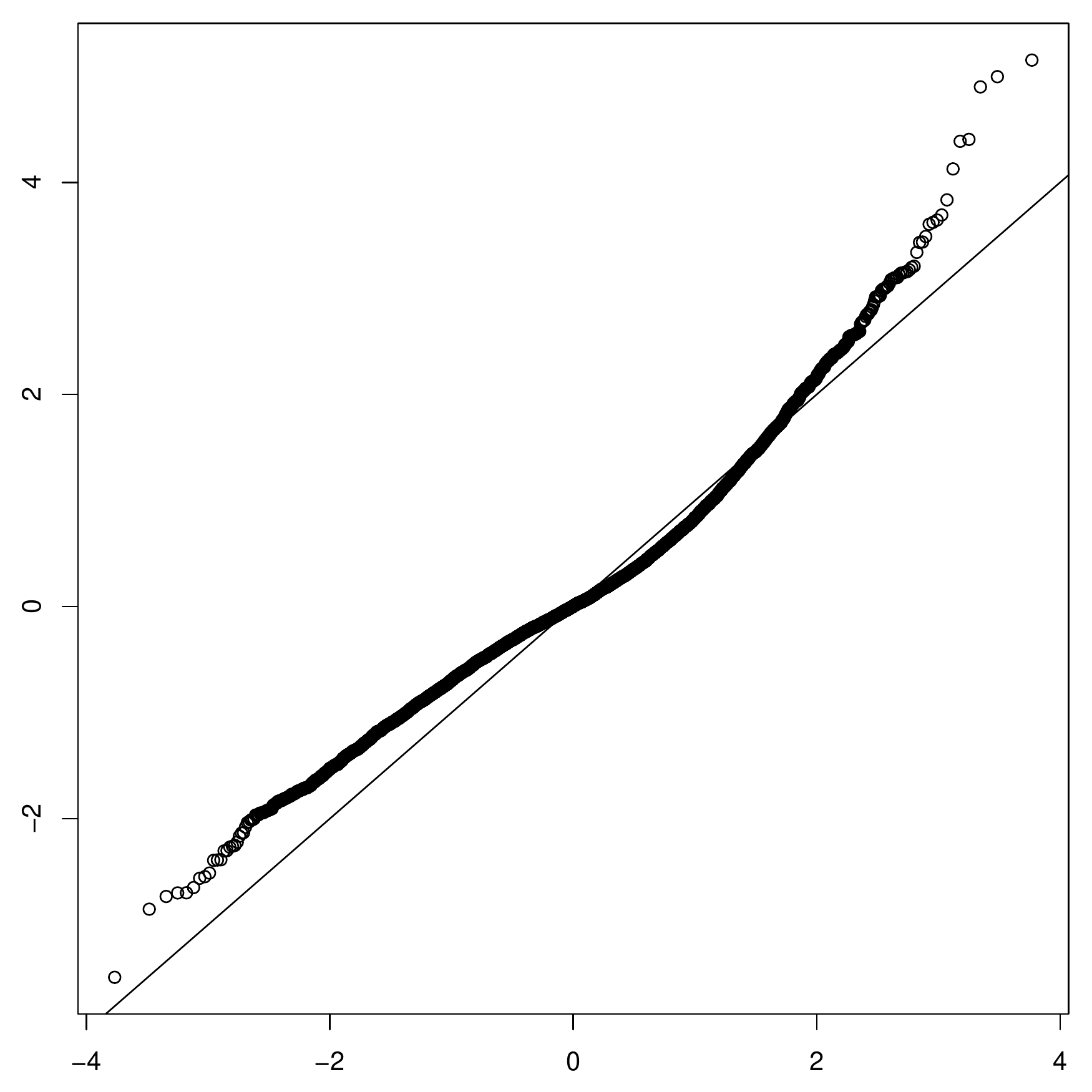}
\includegraphics[width=0.45\textwidth]{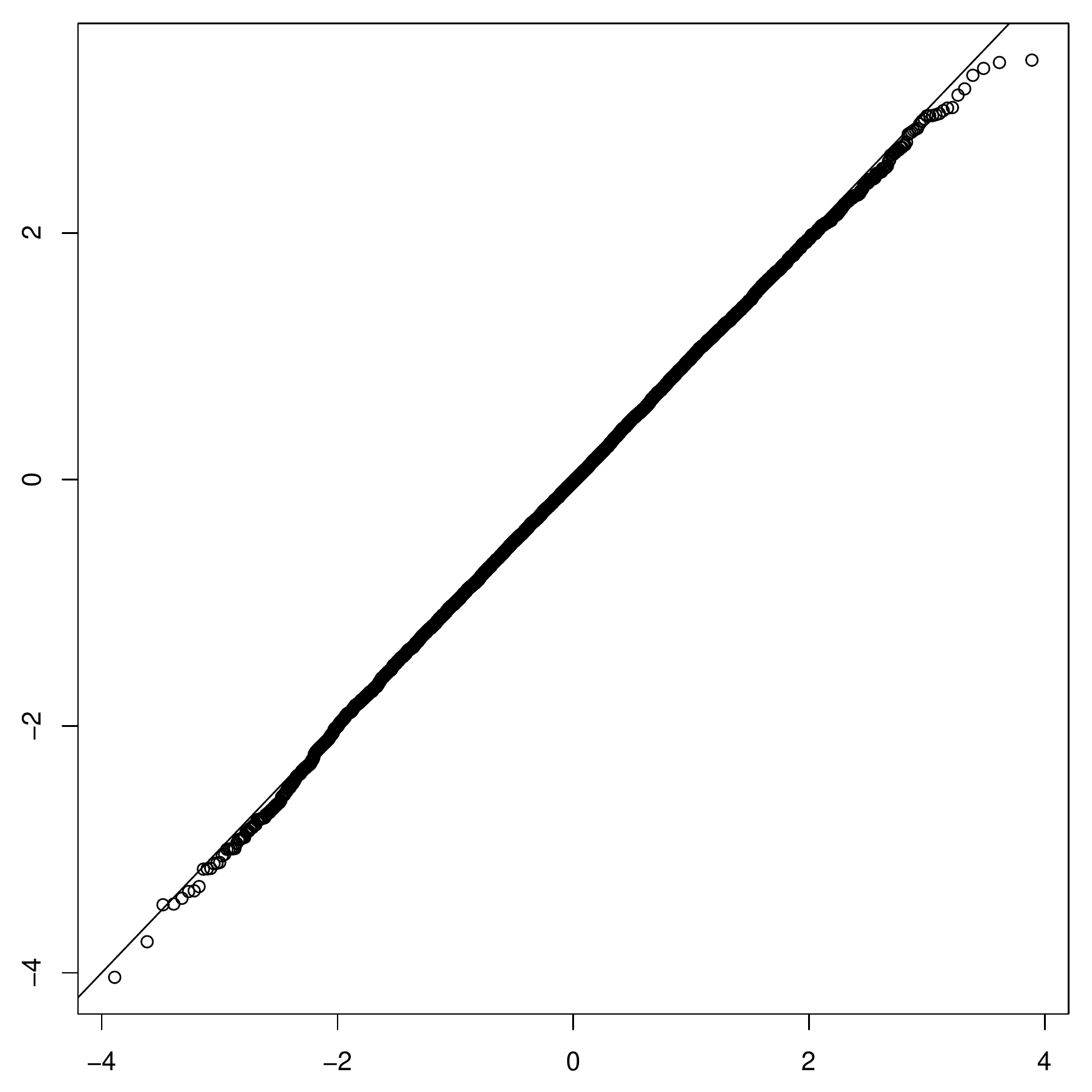}
\includegraphics[width=0.45\textwidth]{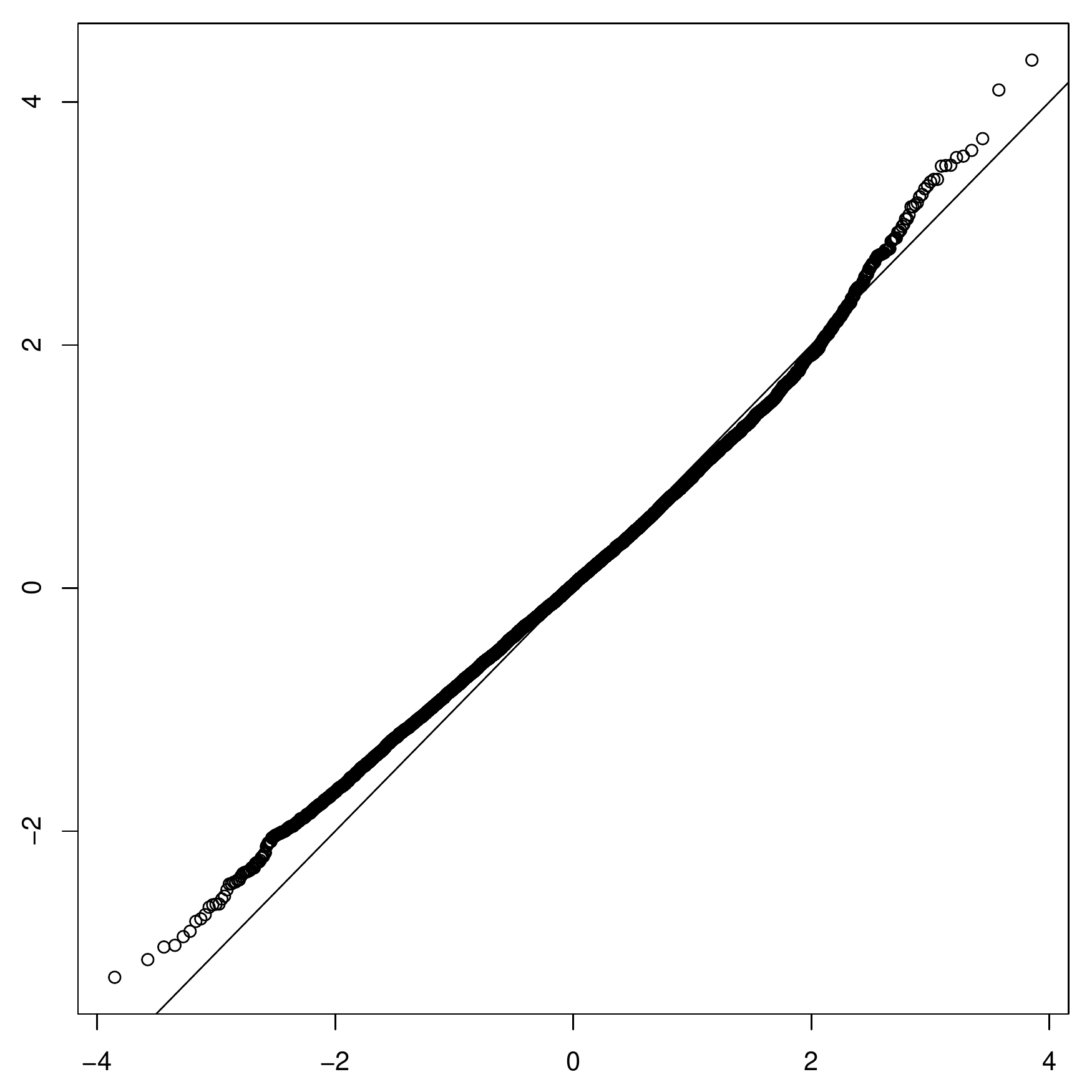}
\end{center}
\caption{QQ-plots comparing $\widehat{Z}_{G,n}$ (left) and
$\widehat{Z}_{H,n}$ (right) to the $\nn(0,1)$ distribution in the case
of the non-ergodic model (\ref{SDEex2}) for
$n=10^3$ (above) and $n=10^4$ (below).} 
\label{fig:qq2}
\end{figure} 
These QQ-plots indicate that in the non-ergodic case there is a
slightly slower convergence to the asymptotic distribution in
Theorem \ref{est_exist}.\ref{item:asymp_estvar} for the efficient
estimating function, and a considerably slower convergence for the
inefficient estimating function, when compared to the ergodic case. 

\section{Proofs}
\label{sec:mainlemmaproofs}
Section \ref{sec:main_lemma} states three main lemmas needed to prove
Theorem \ref{est_exist}, followed by the proof of the theorem.
Section \ref{asymp:lemmaproofs} contains the proofs of the three lemmas.

\subsection{Proof of the Main Theorem}
\label{sec:main_lemma}
In order to prove Theorem \ref{est_exist}, we use the following
lemmas, together with results from \citet{MSandJacod}, and
\citet[Section 1.10]{MSbog}. 
\medskip

\begin{lemma}
Suppose that Assumptions \ref{assumptions_on_X} and
\ref{assumptions_on_g} hold. For $\theta \in \Theta$, let
\begin{align*}
G_n(\theta) &= \sum_{i=1}^n g(\Delta_n, \xtr, \xtl; \theta)\\
G_n^{sq}(\theta) &= \frac{1}{\Delta_n}\sum_{i=1}^n g^2(\Delta_n, \xtr, \xtl; \theta)
\end{align*}
and\
\begin{align*}
A(\theta; \theta_0) &= \tfrac{1}{2} \int_0^1 \left( b^2(X_s;
  \theta_0) - b^2(X_s; \theta)\right) \partial^2_y g(0,X_s,X_s;
  \theta)\, ds\\
B(\theta; \theta_0) &= \tfrac{1}{2} \int_0^1 \left(b^2\left(X_s;\theta_0\right) -
  b^2\left(X_s;
    \theta\right)\right) \partial_y^2 \partial_\theta g(0,X_s,X_s;\theta) \,ds \\
&\hspace{5mm} -
  \tfrac{1}{2}\int_0^1 \partial_\theta b^2(X_s;\theta)\partial_y^2
  g(0,X_s, X_s;\theta)\, ds \\
C(\theta; \theta_0) &= \tfrac{1}{2} \int_0^1\left( b^4(X_s;\theta_0) +
  \tfrac{1}{2}\left( b^2(X_s;\theta_0) -
    b^2(X_s;\theta)\right)^2\right) \left(\partial^2_y
  g(0,X_s,X_s;\theta)\right)^2\, ds\,.
\end{align*}

Then, 
\begin{enumerate}[label=(\roman{*}), ref=(\roman{*})]
\item \label{item:mappings}the mappings $\theta \mapsto A(\theta; \thetan)$,  $\theta
  \mapsto B(\theta; \thetan)$ and  $\theta \mapsto C(\theta; \thetan)$
  are continuous on $\Theta$ ($\PP_\thetan$-almost surely) with $A(\thetan; \thetan)=0$ and
  $\partial_\theta A(\theta; \thetan) = B(\theta; \thetan)$.
\item\label{item:pointwise} for all
$t\in[0,1]$,
\begin{align}
\frac{1}{\sqrt{\Delta_n}}\sum_{i=1}^{[nt]} \left| \EE_{\theta_0}\left( g(\Delta_n, X_{t_i^n},
  X_{t_{i-1}^n}; \theta_0) \mid \xtl\right) \right|&\overset{\pp}{\longrightarrow}
0
\label{uni_prob_g_1}\\
\frac{1}{\Delta_n}\sum_{i=1}^{[nt]} \left( \EE_{\theta_0}\left( g(\Delta_n, X_{t_i^n},
  X_{t_{i-1}^n}; \theta_0) \mid \xtl\right)\right)^2&\overset{\pp}{\longrightarrow} 0
\label{uni_prob_g_2} \\
\frac{1}{\Delta_n^2}\sum_{i=1}^{[nt]} \EE_{\theta_0}\left( g^4(\Delta_n, X_{t_i^n},
  X_{t_{i-1}^n}; \theta_0) \mid \xtl\right)& \overset{\pp}{\longrightarrow} 0
\label{uni_prob_squared_g_2}
\end{align}
and
\begin{align}
\begin{split}\label{uni_prob_squared_g_1}
\frac{1}{\Delta_n}\sum_{i=1}^{[nt]}   \EE_{\theta_0}\left( g^2(\Delta_n, X_{t_i^n},
  X_{t_{i-1}^n}; \theta_0) \mid \xtl\right) \overset{\pp}{\longrightarrow} \tfrac{1}{2} \int_0^{t}b^4(X_s;\theta_0)  \left(\partial^2_y
  g(0,X_s,X_s;\theta_0)\right)^2\, ds\,.
\end{split}
\end{align}
\item\label{item:uniform} for all compact, convex subsets $K \subseteq \Theta$,
\begin{align*}
\sup_{\theta \in K}\left|G_{n}(\theta) - A(\theta; \theta_0)\right|
&\overset{\pp}{\longrightarrow} 0
\nonumber\\
\sup_{\theta \in K} \left|\partial_\theta G_{n}(\theta) -
  B(\theta;\theta_0)\right| &\overset{\pp}{\longrightarrow} 0
\\
\sup_{\theta \in K} \left|G_{n}^{sq}(\theta) -
  C(\theta;\theta_0)\right| &\overset{\pp}{\longrightarrow} 0\,.
\end{align*}
\end{enumerate}
\theoqed
\label{main_lemma_P}
\end{lemma}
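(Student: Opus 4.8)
The overall strategy is to reduce every claim to the convergence of Riemann sums, exploiting the Taylor expansion \eqref{eqn:g_taylor} of $g$ in $\Delta$, Lemma \ref{stoch_taylor} for conditional moments, and the structural identities from Lemma \ref{lemma_properties}. The first step is to establish \ref{item:pointwise}, since the three limits in \ref{item:uniform} will be obtained by integrating the corresponding pointwise statements in $\theta$ and upgrading to uniform convergence. To prove \eqref{uni_prob_g_2} and \eqref{uni_prob_g_1}, I would use the approximate martingale property \eqref{asymp_AMG} directly: since $\EE_{\theta_0}(g(\Delta_n,\xtr,\xtl;\theta_0)\mid\xtl)=\Delta_n^\kappa R_{\theta_0}(\Delta_n,\xtl)$ with $\kappa\geq 2$, each summand in \eqref{uni_prob_g_1} is $O(\Delta_n^\kappa)$, so after the $\Delta_n^{-1/2}$ normalisation and summing $[nt]\leq n$ terms one gets a bound of order $n^{1/2}\Delta_n^\kappa=n^{1/2-\kappa}\to 0$; the polynomial-growth control on $R_{\theta_0}$ together with the uniform moment bounds $\sup_t\EE_{\theta_0}|X_t|^m<\infty$ turns this into convergence in $L^1$, hence in probability. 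The squared version \eqref{uni_prob_g_2} is even more heavily damped and follows the same way.

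The substantive limit is \eqref{uni_prob_squared_g_1}. Here I would expand $\EE_{\theta_0}(g^2(\Delta_n,\xtr,\xtl;\theta_0)\mid\xtl)$ in powers of $\Delta_n$ using Lemma \ref{stoch_taylor} applied to the function $g^2$. Because $g(0,x,x;\theta)=0$ and $\partial_y g(0,x,x;\theta)=0$ (from Lemma \ref{lemma_properties} and Assumption \ref{assumptions_on_g}.\ref{dyg0_cont}), the leading contributions to $g^2$ near the diagonal vanish to sufficiently high order, and the first non-vanishing term in the conditional expectation of $g^2$ is of order $\Delta_n$ with coefficient $\tfrac12 b^4(x;\theta_0)(\partial_y^2 g(0,x,x;\theta_0))^2$; this is exactly the Itô-isometry-type computation that produces the $b^4$ factor. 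After multiplying by $\Delta_n^{-1}$, the sum $\sum_{i=1}^{[nt]}$ becomes a Riemann sum of the continuous integrand $\tfrac12 b^4(X_s;\theta_0)(\partial_y^2 g(0,X_s,X_s;\theta_0))^2$ over $[0,t]$, whose convergence (along the sample path, $\PP_{\theta_0}$-almost surely, and hence in probability) I would justify by the almost-sure continuity of $s\mapsto X_s$ and the polynomial-growth bounds. The remainder terms, of order $\Delta_n^2$ per summand, contribute $O(\Delta_n)$ after normalisation and vanish. The fourth-moment bound \eqref{uni_prob_squared_g_2} is handled analogously: expanding $g^4$ and using that it vanishes to even higher order on the diagonal gives a leading conditional-moment term of order $\Delta_n^2$, so that $\Delta_n^{-2}\sum_{i=1}^{[nt]}$ converges to a finite Riemann integral whose integrand I would then need to check is in fact zero at leading order, or more simply bound the sum by $C\Delta_n\to 0$.

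For part \ref{item:uniform}, the idea is that $G_n(\theta)=\sum_i g(\Delta_n,\xtr,\xtl;\theta)$ can be written as its conditional-expectation part plus a martingale remainder. Replacing $g$ by its Taylor expansion \eqref{eqn:g_taylor} and using Lemma \ref{stoch_taylor} on each deterministic coefficient, the dominant term is $\sum_i \EE_{\theta_0}(g(\Delta_n,\xtr,\xtl;\theta)\mid\xtl)$, which by the same Itô expansion converges to the Riemann integral defining $A(\theta;\theta_0)$ (the factor $b^2(X_s;\theta_0)-b^2(X_s;\theta)$ arising because the generator $\ll_{\theta_0}$ acts with the true diffusion coefficient while $g$ carries $\theta$); the fluctuation term is a martingale whose conditional variance I would bound using \eqref{uni_prob_squared_g_1}-type estimates and which vanishes in probability. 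The expressions for $B$ and $C$ follow by the same mechanism applied to $\partial_\theta g$ and to $G_n^{sq}$ respectively, with $C$ picking up the extra $\tfrac12(b^2(X_s;\theta_0)-b^2(X_s;\theta))^2$ term precisely because $G_n^{sq}$ is evaluated at a general $\theta\neq\theta_0$, so the cross terms in the square do not cancel. Part \ref{item:mappings} is then immediate: continuity in $\theta$ follows from continuity of the integrands and dominated convergence under the polynomial-growth and moment bounds, $A(\theta_0;\theta_0)=0$ because the factor $b^2(X_s;\theta_0)-b^2(X_s;\theta_0)$ vanishes, and $\partial_\theta A=B$ by differentiating under the integral sign, justified again by the uniform growth estimates. \emph{The main obstacle} is the uniform-in-$\theta$ upgrade: pointwise convergence in probability must be promoted to uniform convergence over compact $K$, which requires equicontinuity/tightness-type control on $\theta\mapsto G_n(\theta)-A(\theta;\theta_0)$ and its derivative, obtained by bounding $\sup_{\theta\in K}|\partial_\theta(\,\cdot\,)|$ via the polynomial-growth of the $\theta$-derivatives of $g$ (available because $g\in\cc^{\text{pol}}_{3,8,2}$) together with the moment bounds on $\X$; carrying this uniform remainder control through the martingale fluctuation terms is where the bulk of the technical work lies.
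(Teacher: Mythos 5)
Your overall architecture matches the paper's: explicit $\Delta_n$-expansions of the conditional moments of $g$, $g^2$, $g^4$ and $\partial_\theta g$, convergence of Riemann sums, a conditional-expectation/martingale decomposition for the pointwise limits (the paper packages this as Lemma \ref{lemma_gc_jacod}), and a tightness upgrade for uniformity. However, two of your concrete steps would fail as stated. First, your order counting for the second and fourth conditional moments is off. For \eqref{uni_prob_squared_g_1} you claim the first non-vanishing term of $\EE_{\theta_0}(g^2(\Delta_n,\xtr,\xtl;\theta_0)\mid\xtl)$ is of order $\Delta_n$; it must be (and is, see \eqref{Eg2_1}) of order $\Delta_n^2$, since otherwise $\Delta_n^{-1}\sum_{i=1}^{[nt]}(\cdot)$ has no $\Delta_n$ factor left per summand and is not a Riemann sum --- it diverges like $n$. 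More seriously, for \eqref{uni_prob_squared_g_2} you assert $\EE_{\theta_0}(g^4\mid\xtl)=O(\Delta_n^2)$, which would make $\Delta_n^{-2}\sum_{i=1}^{[nt]}(\cdot)$ of order $n$, not convergent to a finite integral. The Lyapunov condition needs $\EE_{\theta_0}(g^4\mid\xtl)=o(\Delta_n^3)$ per summand, and the paper's Lemma \ref{lemma:expectations}\ref{item:Eg4} establishes the order $\Delta_n^4$ by a lengthy verification that a whole list of iterated-generator terms such as $\ll^i_{\theta_0}(g^4(0;\theta))(x,x)$, $i=1,2,3$, and $\ll_{\theta_0}(g^3(0;\theta)g^{(1)}(\theta))(x,x)$ vanish on the diagonal. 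This computation is one of the genuinely delicate points of the proof and is not recoverable from the crude bound you suggest.

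Second, your mechanism for the uniform-in-$\theta$ upgrade is too weak. Bounding $\sup_{\theta\in K}|\partial_\theta(G_n(\theta)-A(\theta;\theta_0))|$ by the polynomial growth of $\partial_\theta g$ gives only $\sum_{i=1}^n C_K(1+|\xtl|^{C_K}+|\xtr|^{C_K})$, which grows like $n$; the equicontinuity you need must exploit the same near-diagonal cancellations as the pointwise limits. The paper instead proves Kolmogorov-type increment bounds $\EE_{\theta_0}|H_n(\theta)-H_n(\theta')|^{2m}\leq C_{m,K}|\theta-\theta'|^{2m}$ for $H_n=G_n-A$, $\partial_\theta G_n-B$ and $G_n^{sq}-C$ (Lemmas \ref{lemma_moment_diff} and \ref{lemma_limit_moment_diff}, which rest on It\^{o}'s formula and the Burkholder--Davis--Gundy inequality applied to $Dg(\cdot;\theta,\theta')$), and then converts pointwise convergence plus these bounds into tightness in $\cc(K,\RR)$ and hence uniform convergence (Lemma \ref{uniform_convergence}). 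Your part \ref{item:mappings} and your treatment of \eqref{uni_prob_g_1}--\eqref{uni_prob_g_2} via the approximate martingale property with $\kappa\geq 2$ are fine and agree with the paper.
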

\begin{lemma}
Suppose that Assumptions \ref{assumptions_on_X} and
\ref{assumptions_on_g} hold. Then, for all $t\in[0,1]$,
\begin{align}
\frac{1}{\sqrt{\Delta_n}} \sum_{i=1}^{[nt]} \EE_{\theta_0}\left(
  g(\Delta_n, \xtr, \xtl; \theta_0) ( \wtr - \wtl)\mid 
  \ftl \right) \overset{\pp}{\longrightarrow} 0\,.
\label{please_converge}
\end{align}\theoqed
\label{g_delta_w_1}
\end{lemma}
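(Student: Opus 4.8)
The plan is to reduce the claim to a pointwise order estimate on the summands and then finish by a crude $L^1$ bound. Write $\psi_{n,i} = \EE_{\theta_0}\big(g(\Delta_n,\xtr,\xtl;\theta_0)(\wtr-\wtl)\mid\ftl\big)$. Since $\X$ is Markov and the increments $\{W_s-\wtl : s\in[\tminus,\tplus]\}$ are independent of $\ftl$, the variable $\psi_{n,i}$ is $\sigma(\xtl)$-measurable, so it suffices to analyse the one-step conditional expectation given $\xtl$. The heart of the argument is to prove
\[ \psi_{n,i} = \Delta_n^2\, R_{\theta_0}(\Delta_n,\xtl), \]
with a remainder of polynomial growth in $\xtl$. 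Granting this, since $g \in \cc^{\text{pol}}_{3,8,2}$ and $\sup_{s\in[0,1]}\EE_{\theta_0}(|X_s|^m)<\infty$ for every $m$, the factor $\EE_{\theta_0}|R_{\theta_0}(\Delta_n,\xtl)|$ is bounded by a constant uniformly in $i$ and $n$, whence, using $[nt]\le n = \Delta_n^{-1}$,
\[ \EE_{\theta_0}\Big| \frac{1}{\sqrt{\Delta_n}}\sum_{i=1}^{[nt]}\psi_{n,i}\Big| \le \frac{\Delta_n^{2}}{\sqrt{\Delta_n}}\sum_{i=1}^{[nt]}\EE_{\theta_0}\big|R_{\theta_0}(\Delta_n,\xtl)\big| \le C\,\Delta_n^{3/2}\cdot\Delta_n^{-1} = C\sqrt{\Delta_n}\to 0. \]
Convergence in $L^1$ then implies the asserted convergence in probability.

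To establish the key estimate I would Taylor expand $g$ in its first spatial argument about $y=x=\xtl$ up to a sufficiently high order $J$ (permissible since $g$ has eight polynomial-growth $y$-derivatives), writing $\xtr-\xtl$ for the increment, and evaluate the conditional expectations $\EE_{\theta_0}\big((\xtr-\xtl)^j(\wtr-\wtl)\mid\xtl\big)$ term by term. These mixed moments follow from the It\^o--Taylor expansion of $\xtr-\xtl$ in the same spirit as Lemma \ref{stoch_taylor}: to leading order $\xtr-\xtl\approx b(\xtl;\theta_0)(\wtr-\wtl)$, so the $j$-th mixed moment is governed by $\EE_{\theta_0}\big((\wtr-\wtl)^{j+1}\mid\xtl\big)$. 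For even $j$ this leading Gaussian moment is odd and vanishes, pushing the term to order $\Delta_n^2$, and the same holds with room to spare for $j\ge 3$. Two terms require care. The $j=0$ term carries the factor $\EE_{\theta_0}(\wtr-\wtl\mid\ftl)=0$ and drops out. The $j=1$ term equals $\partial_y g(\Delta_n,\xtl,\xtl;\theta_0)\cdot\EE_{\theta_0}\big((\xtr-\xtl)(\wtr-\wtl)\mid\xtl\big)$, where the mixed moment is $\Delta_n b(\xtl;\theta_0)+\Delta_n^2 R_{\theta_0}$; this is exactly where Jacobsen's condition enters, for Assumption \ref{assumptions_on_g}.\ref{dyg0_cont} gives $\partial_y g(0,\xtl,\xtl;\theta_0)=0$, so by the expansion (\ref{eqn:g_taylor}) the coefficient $\partial_y g(\Delta_n,\xtl,\xtl;\theta_0)$ is itself $O(\Delta_n)$ and the product is $O(\Delta_n^2)$. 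The Taylor remainder in $y$ is controlled, for $J$ large, by the polynomial growth of the top derivative together with the moment bounds on $\X$, and is of the required order.

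An equivalent and perhaps cleaner route to the key estimate is to apply It\^o's formula to $s\mapsto g(\Delta_n, X_s,\xtl;\theta_0)(W_s-\wtl)$ on $[\tminus,\tplus]$: the martingale parts vanish under $\EE_{\theta_0}(\cdot\mid\ftl)$, leaving an integral whose leading contribution is $\Delta_n\,b(\xtl;\theta_0)\,\partial_y g(\Delta_n,\xtl,\xtl;\theta_0)$, again $O(\Delta_n^2)$ by Jacobsen's condition, plus a doubly integrated $\ll_{\theta_0}$-term that is manifestly $O(\Delta_n^2)$. I expect the main obstacle to be the bookkeeping in the key estimate: one must verify that, after the two exact cancellations (the vanishing $j=0$ contribution and the extra power of $\Delta_n$ supplied by $\partial_y g(0,x,x;\theta_0)=0$ at $j=1$), every surviving contribution is genuinely of order $\Delta_n^2$ with a polynomial-growth remainder uniform in $i$ and $n$. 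This is precisely the role of Jacobsen's condition here: without it the $j=1$ term would be only $O(\Delta_n)$, and the normalised sum would diverge rather than vanish.
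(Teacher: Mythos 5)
Your proposal is correct and follows essentially the same route as the paper: the same Taylor decomposition of $g$ in $t$ and $y$, the same two cancellations (the vanishing conditional mean of the Wiener increment for the $j=0$ term, and Jacobsen's condition supplying the extra factor $\Delta_n$ at $j=1$), and the same It\^{o}-formula computation showing $\EE_{\theta_0}\big((\xtr-\xtl)^2(\wtr-\wtl)\mid\ftl\big)=O(\Delta_n^2)$, which is exactly the step the paper carries out in detail via two applications of It\^{o}'s formula. The only (harmless) difference is that you conclude by a single $L^1$ bound on the one-step estimate, whereas the paper bounds each piece separately and invokes its Riemann-sum lemma.
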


\begin{lemma}
Suppose that Assumptions \ref{assumptions_on_X} and
\ref{assumptions_on_g} hold, and let 
\begin{align*}
Y_{n,t} &= \frac{1}{\sqrt{\Delta_n}} \sum_{i=1}^{[nt]} g(\Delta_n,\xtr, \xtl;
\theta_0)\,.
\end{align*}
Then the sequence of processes
$(\mathbf{Y}_n)_{n\in \NN}$ given by $\mathbf{Y}_n = (Y_{n,t})_{t\in [0,1]}$ 
converges stably in distribution under $\PP_\thetan$ to the process $\mathbf{Y}  =
(Y_t)_{t\in [0,1]}$
given by
\begin{align*}
Y_t &=  \tfrac{1}{\sqrt{2}}\int_0^t b^2(X_s;
\theta_0) \partial^2_y g(0,X_s,X_s; \theta_0)\,dB_s\,.
\end{align*}
Here $\B = (B_s)_{s\geq 0}$ denotes a standard Wiener process, which
is defined on a filtered extension $(\Omega', \ff', (\ff_t')_{t\geq 0}, 
P_\thetan')$ of $(\Omega, \ff, (\ff_t)_{t\geq 0},
P_\thetan)$, and is independent of $(U, \W)$.
\label{conv_dist}
\theoqed\end{lemma}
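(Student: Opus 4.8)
The plan is to recognise $\mathbf{Y}_n$ as, up to an asymptotically negligible drift, a triangular array of martingale differences, and then to invoke a functional martingale central limit theorem with stable convergence in the form available in \citet[Section 1.10]{MSbog} and \citet{MSandJacod}. The three ingredients such a theorem needs---convergence of the predictable quadratic variation to an $\ff$-adapted limit, a conditional Lindeberg (or Lyapunov) condition, and asymptotic orthogonality to the Wiener process $\W$ driving the underlying filtration---have essentially been prepared already in Lemmas \ref{main_lemma_P} and \ref{g_delta_w_1}, so the work here is to assemble them and identify the limit.

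First I would write $\zeta_i^n = \Delta_n^{-1/2} g(\Delta_n, \xtr, \xtl; \theta_0)$, so that $Y_{n,t} = \sum_{i=1}^{[nt]} \zeta_i^n$, and pass to the centred array $\tilde\zeta_i^n = \zeta_i^n - \EE_{\theta_0}(\zeta_i^n \mid \ftl)$, which by the Markov property is a $(\ff_{t_i^n})$-martingale difference array. By the approximate martingale property (\ref{asymp_AMG}) together with (\ref{uni_prob_g_1}) applied at $t=1$, the accumulated compensators are uniformly negligible, $\sup_{t\in[0,1]} \big| \sum_{i=1}^{[nt]} \EE_{\theta_0}(\zeta_i^n \mid \ftl) \big| \overset{\pp}{\longrightarrow} 0$, so it suffices to prove stable convergence of the martingale array $M_{n,t} = \sum_{i=1}^{[nt]} \tilde\zeta_i^n$ to $\mathbf{Y}$. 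Throughout I identify the conditional expectations given $\ftl$ with those given $\xtl$ appearing in Lemma \ref{main_lemma_P}, which coincide by the Markov property.

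Next I would verify the hypotheses of the stable CLT for $M_{n,\cdot}$. The predictable quadratic variation decomposes as $\sum_{i=1}^{[nt]} \EE_{\theta_0}((\tilde\zeta_i^n)^2\mid\ftl) = \Delta_n^{-1}\sum_{i=1}^{[nt]}\EE_{\theta_0}(g^2 \mid \ftl) - \Delta_n^{-1}\sum_{i=1}^{[nt]}(\EE_{\theta_0}(g\mid\ftl))^2$; by (\ref{uni_prob_squared_g_1}) the first term converges in probability to $V_t := \tfrac12\int_0^t b^4(X_s;\theta_0)(\partial_y^2 g(0,X_s,X_s;\theta_0))^2\,ds$, while (\ref{uni_prob_g_2}) makes the second term negligible, so the predictable quadratic variation tends to $V_t$. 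The conditional Lyapunov condition $\sum_{i=1}^{[nt]}\EE_{\theta_0}((\tilde\zeta_i^n)^4\mid\ftl)\overset{\pp}{\longrightarrow}0$ follows from (\ref{uni_prob_squared_g_2}) after bounding the centred fourth moments by the raw ones, and it implies the Lindeberg condition. Finally, since $\W$ has independent mean-zero increments, $\EE_{\theta_0}(\tilde\zeta_i^n(\wtr-\wtl)\mid\ftl) = \EE_{\theta_0}(\zeta_i^n(\wtr-\wtl)\mid\ftl)$, so the orthogonality condition $\sum_{i=1}^{[nt]}\EE_{\theta_0}(\tilde\zeta_i^n(\wtr-\wtl)\mid\ftl)\overset{\pp}{\longrightarrow}0$ is precisely (\ref{please_converge}) from Lemma \ref{g_delta_w_1}.

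With these three conditions in hand, the stable CLT yields functional stable convergence of $M_{n,\cdot}$, hence of $\mathbf{Y}_n$, to a continuous process that, conditionally on $\ff$, is a centred Gaussian martingale with quadratic variation $V_t$. Since $V_t$ is $\ff_t$-measurable and the limiting array is asymptotically orthogonal to $\W$---which, together with the $\ff_0$-measurable initial value $U$, generates the whole filtration $(\ff_t)$---the limit admits on an extension the representation $Y_t = \tfrac{1}{\sqrt2}\int_0^t b^2(X_s;\theta_0)\partial_y^2 g(0,X_s,X_s;\theta_0)\,dB_s$ with $B$ a standard Wiener process independent of $(U,\W)$; the sign of the integrand is immaterial for the $\ff$-conditional Gaussian law and may be taken as stated. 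I expect the main obstacle to be the bookkeeping of selecting the precise version of the stable martingale CLT and checking that asymptotic orthogonality against the single martingale $\W$ is enough to secure stability and the independence of $B$ from $(U,\W)$; this hinges on the structural fact that $\W$ and $U$ generate $(\ff_t)$, so no further orthogonality relations require verification.
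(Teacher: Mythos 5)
Your proposal is correct and follows essentially the same route as the paper: the paper likewise verifies the drift, conditional variance, Lyapunov/Lindeberg, and orthogonality-to-$\W$ conditions of a stable functional martingale CLT (Theorem IX.7.28 of Jacod--Shiryaev) using (\ref{uni_prob_g_1})--(\ref{uni_prob_squared_g_1}) and Lemma \ref{g_delta_w_1}, and disposes of orthogonality to other martingales exactly as you do, via the martingale representation theorem applied to the filtration generated by $U$ and $\W$.
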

We denote stable convergence in distribution under $\PP_{\theta_0}$
as $n\to\infty$ by $\overset{\dd_{st}}{\longrightarrow}$.
\medskip

\begin{proof}[\textbf{Proof of Theorem \ref{est_exist}}]
Let a compact, convex subset $K\subseteq \Theta$ with $\thetan \in
\text{int}\, K$ be given.
The functions $G_n(\theta)$, $A(\theta,\theta_0)$, $B(\theta,\theta_0)$, and
$C(\theta,\theta_0)$ were defined in Lemma \ref{main_lemma_P}.
\medskip

By Lemma \ref{main_lemma_P}.\ref{item:mappings} and \ref{item:uniform},
\begin{align}
G_n(\theta_0) \overset{\pp}{\longrightarrow}  0\quad\text{ and }\quad
  \sup_{\theta \in K} \left|\partial_\theta G_{n}(\theta) - 
  B(\theta,\theta_0)\right| \overset{\pp}{\longrightarrow} 0
\label{c1}
\end{align}
with $B(\theta_0; \theta_0) \neq 0$ by Assumption
\ref{assumptions_for_maintheo}.\ref{item:bnot0}, so $G_n(\theta)$ 
satisfies the conditions of Theorem 1.58 in \citet{MSbog}.
\medskip

Now, we show (1.161) of Theorem 1.59 in \citet{MSbog}. Let $\varepsilon>0$ be given, and let $\bar{B}_\varepsilon(\thetan)$ and $B_\varepsilon(\thetan)$,
respectively, denote closed and open balls in $\RR$ with
radius $\varepsilon>0$, centered at $\thetan$. The compact set $K
\backslash B_\varepsilon(\theta_0)$ does
not contain $\theta_0$, and so, by Assumption
\ref{assumptions_for_maintheo}.\ref{item:anot0},
$A(\theta,\theta_0)\neq 
0$ for all $\theta \in K 
\backslash B_\varepsilon(\theta_0)$ with probability one under $\PP_\thetan$. 
\medskip

Because
\begin{align*}
\inf_{\theta \in K
\backslash \bar{B}_\varepsilon(\theta_0)} |A(\theta,\theta_0)|& \geq
\inf_{\theta \in K
\backslash B_\varepsilon(\theta_0)} |A(\theta,\theta_0)| > 0
\end{align*}
$\PP_\thetan$-almost surely, by the continuity of $\theta \mapsto A(\theta, \theta_0)$,
it follows that
\begin{align*}
\PP_{\theta_0}\left( \inf_{\theta \in K\backslash \bar{B}_\varepsilon(\theta_0)}
|A(\theta,\theta_0)|>0\right) = 1\,.
\end{align*}
Consequently, by Theorem 1.59 in \citet{MSbog}, for any
$G_n$-estimator $\tilde{\theta}_n$,
\begin{align}
\PP_{\theta_0}\left( \tilde{\theta}_n \in K\backslash
  \bar{B}_\varepsilon(\theta_0)\right) \to 0\quad\text{ as }\quad n\to \infty\,.
\label{propertyi}\end{align}
for any $\varepsilon > 0$.
\medskip

By Theorem 1.58 in \citet{MSbog}, there exists a consistent $G_n$-estimator
$\hat{\theta}_n$, which is eventually
unique, in the sense that if $\bar{\theta}_n$ is another consistent
$G_n$-estimator, then
\begin{align}
\PP_{\theta_0}\left( \hat{\theta}_n
  \neq \bar{\theta}_n\right) \to 0\quad\text{ as }\quad n\to \infty\,.
\label{unique_est}
\end{align}
Suppose that $\tilde{\theta}_n$ is any
$G_n$-estimator which satisfies that
\begin{align}
\PP_{\theta_0}\left( \tilde{\theta}_n \in
  K\right) \to 1\quad\text{ as }\quad n\to \infty\,.
\label{inK}
\end{align}
Combining (\ref{propertyi}) and (\ref{inK}), it follows that
\begin{align}
\PP_\thetan\left( \tilde{\theta}_n \in
  \bar{B}_\varepsilon(\theta_0)\right) \to 1\quad\text{ as }\quad n\to
  \infty\,,
\label{notinKc}
\end{align}

so $\tilde{\theta}_n$ is consistent. Using (\ref{unique_est}), Theorem
\ref{est_exist}.\ref{item:existence_asymp} follows.
\medskip

To prove Theorem
\ref{est_exist}.\ref{item:main_asymp}, recall that $\Delta_n = 1/n$,
and observe that by Lemma \ref{conv_dist},
\begin{align}
\sqrt{n} G_n(\theta_0) \overset{\dd_{st}}{\longrightarrow} S(\theta_0)
\label{stable_limit}
\end{align}
where
\begin{align*}
S(\theta_0) &= \int_0^1\tfrac{1}{\sqrt{2}} b^2(X_s;
\theta_0) \partial^2_y g(0,X_s,X_s; \theta_0)\,dB_s\,,
\end{align*}
and $\B = (B_s)_{s\in [0,1]}$ is a standard Wiener process,
independent of $(U, \W)$. As $\X$ is then also independent
of $\B$, $S(\theta_0)$ is equal in distribution to
$C(\theta_0; \thetan)^{1/2} Z$, where 
$Z$ is standard normal distributed and independent of
$(X_t)_{t\in [0,1]}$. Note that by Assumption
\ref{assumptions_for_maintheo}.\ref{item:cnot0}, the distribution of
$C(\theta_0; \thetan)^{1/2}Z$ is non-degenerate.
\medskip

Let $\hat{\theta}_n$ be a consistent $G_n$-estimator. By (\ref{c1}),
(\ref{stable_limit}) and properties of stable convergence
(e.g. $(2.3)$ in \citet{jacod1997continuous}),
\begin{align*}
\begin{pmatrix} \sqrt{n} G_n(\theta_0) \\  \partial_\theta G_n(\thetan) \end{pmatrix} \overset{\dd_{st}}{\longrightarrow} \begin{pmatrix}
  S(\theta_0) \\ B(\theta_0; \theta_0) \end{pmatrix}\,.
\end{align*}
Stable convergence in distribution implies weak
convergence, so an application of Theorem 1.60 in
\citet{MSbog} yields
\begin{align}
\sqrt{n}(\hat{\theta}_n - \theta_0) &\overset{\dd}{\longrightarrow}
-B(\theta_0,\theta_0)^{-1}S(\theta_0)\,.
\label{stable_tmp}
\end{align}
The limit is equal in distribution to
$W(\theta_0)Z$, where $W(\theta_0) =
-B(\theta_0,\theta_0)^{-1}C(\theta_0; \thetan)^{1/2}$ and $Z$ is
standard normal distributed and independent of 
$W(\theta_0)$. This completes the proof of Theorem \ref{est_exist}.\ref{item:main_asymp}.
\medskip

Finally, Lemma 2.14 in \citet{MSandJacod} is used to write
\begin{align*}
\sqrt{n}(\hat{\theta}_n - \theta_0) &= - B(\theta_0;
\theta_0)^{-1}\sqrt{n} G_n(\theta_0) + \sqrt{n}|\hat{\theta}_n-\thetan|\varepsilon_n(\theta_0)\,,
\end{align*}
where the last term goes to zero in probability under
$\PP_\thetan$. By the stable continuous mapping theorem, 
(\ref{stable_tmp}) holds with stable
convergence in distribution as well. Lemma
\ref{main_lemma_P}.\ref{item:uniform} may be used to conclude that
$\widehat{W}_n \overset{\pp}{\longrightarrow} 
W(\theta_0)$, so Theorem \ref{est_exist}.\ref{item:asymp_estvar}
follows from the stable version of
(\ref{stable_tmp}) by application of standard results for stable convergence.
\end{proof}

\subsection{Proofs of Main Lemmas}
\label{asymp:lemmaproofs}
This section contains the proofs of Lemmas \ref{main_lemma_P},
\ref{g_delta_w_1} and \ref{conv_dist} in Section
\ref{sec:main_lemma}. A number of technical
results are utilised in the proofs, these results are summarised in
Appendix \ref{sec:asymp:auxres}, some of them with a proof.
\begin{proof}[\textbf{Proof of Lemma \ref{main_lemma_P}}]
First, note that for any $f(x; \theta) \in \cc^{\text{pol}}_{0,0}(\xx
\times \Theta)$ and any compact, convex subset
$K\subseteq \Theta$, there exist
constants $C_K > 0$ such that
\begin{align*}
|f(X_s; \theta)| &\leq C_K(1+|X_s|^{C_K})
\end{align*}
for all $s\in [0,1]$ and $\theta \in \text{int}\,K$. With
probability one under $\PP_\thetan$, for fixed $\omega$, 
$C_K(1+|X_s(\omega)|^{C_K})$ is a continuous function and
therefore Lebesgue-integrable over $[0,1]$. Using this method of
constructing integrable upper bounds, Lemma
\ref{main_lemma_P}.\ref{item:mappings} follows by the usual results
for continuity and differentiability of functions given by integrals.
\medskip 

In the rest of this proof, Lemma \ref{lemma:riemann} and
(\ref{rest_limit}) are repeatedly used without reference.
\medskip 

First, inserting $\theta =\theta_0$ into (\ref{Eg}), it is seen that
\begin{align*}
\frac{1}{\sqrt{\Delta_n}}\sum_{i=1}^{[nt]} \left| \EE_{\theta_0}\left( g(\Delta_n, X_{t_i^n}, X_{t_{i-1}^n}; \theta_0) \mid
  X_{t_{i-1}^n}\right) \right|&= \Delta_n^{3/2}\sum_{i=1}^{[nt]}  R(\Delta_n,
X_{t_{i-1}^n}; \theta_0)  \overset{\pp}{\longrightarrow} 0\\
\frac{1}{\Delta_n}\sum_{i=1}^{[nt]} \left( \EE_{\theta_0}\left(
  g(\Delta_n, X_{t_i^n}, X_{t_{i-1}^n}; \theta_0) \mid
  X_{t_{i-1}^n}\right)\right)^2 &= \Delta_n^{3}\sum_{i=1}^{[nt]} R(\Delta_n,
X_{t_{i-1}^n}; \theta_0) \overset{\pp}{\longrightarrow} 0\,,
\end{align*}
proving (\ref{uni_prob_g_1}) and
(\ref{uni_prob_g_2}). Furthermore, using (\ref{Eg}) and (\ref{Eg2_1}),
\begin{align*}
\sum_{i=1}^{n} \EE_{\theta_0}\left( g(\Delta_n, X_{t_i^n}, X_{t_{i-1}^n}; \theta) \mid
  X_{t_{i-1}^n}\right) 
& \overset{\pp}{\longrightarrow} A(\theta; \thetan)\\
\sum_{i=1}^{n} \EE_{\theta_0}\left( g^2(\Delta_n, X_{t_i^n}, X_{t_{i-1}^n}; \theta) \mid
  X_{t_{i-1}^n}\right)
&
\overset{\pp}{\longrightarrow} 0\,,
\end{align*}
so it follows from Lemma \ref{lemma_gc_jacod} that point-wise for $\theta \in \Theta$,
\begin{align}
G_{n}(\theta) - A(\theta;\theta_0)& \overset{\pp}{\longrightarrow} 0
\label{convA}\,.
\end{align}
Using (\ref{Eg2_1}) and
(\ref{Eg4}),
\begin{align*}
&\hspace{-5mm} \frac{1}{\Delta_n}\sum_{i=1}^{[nt]} \EE_{\theta_0}\left( g^2(\Delta_n,X_{t_i^n},X_{t_{i-1}^n}; \theta)\mid
  X_{t_{i-1}^n}\right)  \\
&\overset{\pp}{\longrightarrow} \tfrac{1}{2} \int_0^{t} \left(
  b^4(X_s;\theta_0)  + \tfrac{1}{2}\left( b^2(X_s; \theta_0)
    - b^2(X_s; \theta)\right)^2\right) \left(\partial^2_y
  g(0,X_s,X_s;\theta)\right)^2\, ds
\end{align*}
and
\begin{align*}
\frac{1}{\Delta_n^2}\sum_{i=1}^{[nt]} \EE_{\theta_0}\left( g^4(\Delta_n,X_{t_i^n},X_{t_{i-1}^n}; \theta)\mid
  X_{t_{i-1}^n}\right) &\overset{\pp}{\longrightarrow } 0\,,
\end{align*}
completing the proof of Lemma
\ref{main_lemma_P}.\ref{item:pointwise} when $\theta = \theta_0$
is inserted, and yielding
\begin{align}
G_n^{sq}(\theta) - C(\theta; \theta_0) \overset{\pp}{\longrightarrow} 0
\label{convC}
\end{align}
point-wise for $\theta \in \Theta$ by Lemma \ref{lemma_gc_jacod}, when
$t=1$ is inserted. Also, using (\ref{Edg}) and (\ref{Edg2}), 
\begin{align*}
\sum_{i=1}^{n} \EE_{\theta_0}\left( \partial_\theta
  g(\Delta_n,X_{t_i^n},X_{t_{i-1}^n};\theta) \mid X_{t_{i-1}^n}\right)
&\overset{\pp}{\longrightarrow} B(\theta; \thetan) \\
\sum_{i=1}^{n} \EE_{\theta_0}\left( \left(\partial_\theta
    g(\Delta_n,X_{t_i^n},X_{t_{i-1}^n};\theta)\right)^2\mid
  X_{t_{i-1}^n}\right) &\overset{\pp}{\longrightarrow} 0\,.
\end{align*}
Thus, by Lemma \ref{lemma_gc_jacod}, also
\begin{align}
\partial_\theta G_n(\theta) - B(\theta;
\theta_0) \overset{\pp}{\longrightarrow} 0\,,
\label{convB}
\end{align}
point-wise for $\theta \in \Theta$. Finally, recall that $\partial^j_y
g(0,x,x;
\theta)=0$ for $j=0,1$. Then, using Lemmas
\ref{lemma_moment_diff} and \ref{lemma_limit_moment_diff}, it follows
that for each $m\in \NN$ and compact, convex subset $K\subseteq
\Theta$, there exist constants $C_{m,K} > 0$
such that for all $\theta, \theta' \in K$ and $n\in \NN$,
\begin{align}
\begin{split}
\EE_{\theta_0}|(G_{n}(\theta) - A(\theta;\theta_0)) -
(G_{n}(\theta') - A(\theta';\theta_0))|^{2m} &\leq C_{m,K} \,\,
|\theta-\theta'|^{2m} \label{ABineq}\\
\EE_{\theta_0}|(\partial_\theta G_n(\theta) - B(\theta;
\theta_0)) -
(\partial_\theta G_n(\theta') - B(\theta';
\theta_0))|^{2m} &\leq C_{m,K} \,\, |\theta-\theta'|^{2m} \\
\EE_{\theta_0}|(G_n^{sq}(\theta) - C(\theta;
\theta_0)) -
( G_n^{sq}(\theta') - C(\theta';
\theta_0))|^{2m} &\leq C_{m,K} \,\, |\theta-\theta'|^{2m} 
\,.
\end{split}
\end{align}
By Lemma \ref{main_lemma_P}.\ref{item:mappings}, the functions $\theta \mapsto G_n(\theta)-A(\theta; \theta_0)$,
$\theta \mapsto \partial_\theta G_n(\theta)-B(\theta;
\theta_0)$ and $\theta \mapsto G_n^{sq}(\theta)-C(\theta, \theta_0)$ are continuous on $\Theta$. Thus, using Lemma \ref{uniform_convergence}
together with (\ref{convA}), (\ref{convC}), (\ref{convB}) and
(\ref{ABineq}) completes the proof of Lemma
\ref{main_lemma_P}.\ref{item:uniform}.
\end{proof}
\begin{proof}[\textbf{Proof of Lemma \ref{g_delta_w_1}}]
The overall strategy in this proof is to expand the expression on the
left-hand side of (\ref{please_converge}) in such a manner that all terms
either converge to $0$ by Lemma \ref{lemma:riemann}, or are equal to
$0$ by the martingale properties of stochastic integral terms obtained
by use of It\^{o}'s formula.
\medskip

By Assumption \ref{assumptions_on_g} and  Lemma
\ref{lemma_properties}, the formulae
\begin{equation}
\begin{aligned}
g(0,y,x; \theta) &= \tfrac{1}{2}(y-x)^2 \partial_y^2 g(0,x,x; \theta) + (y-x)^3R(y,x; \theta) \\
g^{(1)}(y,x; \theta) &= g^{(1)}(x,x; \theta) + (y-x)R(y,x; \theta)
\end{aligned}
\label{eqn:g0_taylor}
\end{equation}
may be obtained. Using (\ref{eqn:g_taylor}) and (\ref{eqn:g0_taylor}),
\begin{equation}
\begin{split}
&\hspace{-5mm} \EE_{\theta_0}\left(
  g(\Delta_n, \xtr, \xtl; \theta_0) ( \wtr - \wtl)\mid 
  \ftl\right) \\ 
&= \EE_{\theta_0}\left( \tfrac{1}{2}(\xtr-\xtl)^2\partial_y^2 g(0,\xtl, \xtl; \theta_0) ( \wtr - \wtl)\mid 
  \ftl\right)  \\
&\hspace{5mm} + \EE_{\theta_0}\left(
  (\xtr-\xtl)^3R(\xtr,\xtl; \theta_0) ( \wtr - \wtl)\mid 
  \ftl\right)  \\
&\hspace{5mm} + \Delta_n \EE_{\theta_0}\left(
 g^{(1)}(\xtl,\xtl; \theta_0)  ( \wtr - \wtl)\mid 
  \ftl\right) \\
&\hspace{5mm} +\Delta_n \EE_{\theta_0}\left(
    (\xtr-\xtl)R(\xtr,\xtl; \theta_0)  ( \wtr - \wtl)\mid 
  \ftl\right) \\
&\hspace{5mm} +\Delta^2 \EE_{\theta_0}\left(
    R(\Delta_n,\xtr,\xtl; \theta_0)  ( \wtr - \wtl)\mid 
  \ftl\right)\,.
\end{split}
\label{W_diff_assumption}
\end{equation}
Note that
\begin{align*}
\Delta_n g^{(1)}(\xtl,\xtl; \theta_0) \EE_{\theta_0}\left(
  \wtr - \wtl\mid 
  \ftl\right) &= 0\,,
\end{align*}
and that by repeated use of the Cauchy-Schwarz inequality,
Lemma \ref{diff_growth} and Corollary \ref{f_growth},
\begin{align*}
\left| \EE_{\theta_0}\left(
  (\xtr-\xtl)^3R(\xtr,\xtl; \theta_0) ( \wtr - \wtl)\mid 
  \ftl\right) \right|
&\leq \Delta_n^{2} C ( 1 + |\xtl|^C) \\
\Delta_n \left| \EE_{\theta_0}\left(
    (\xtr-\xtl)R(\xtr,\xtl; \theta_0)  ( \wtr - \wtl)\mid 
  \ftl\right) \right| 
&\leq \Delta_n^{2} C ( 1 + |\xtl|^C)\\
\Delta_n^2  \left|\EE_{\theta_0}\left(
    R(\Delta_n,\xtr,\xtl; \theta_0)  ( \wtr - \wtl)\mid 
  \ftl\right)\right| 
&\leq \Delta_n^{5/2} C( 1 + |\xtl|^C)
\end{align*}
for suitable constants $C>0$,
with
\begin{align*}
\frac{1}{\sqrt{\Delta_n}} \sum_{i=1}^{[nt]} \Delta_n^{m/2} C( 1 +
  |\xtl|^C) &\overset{\pp}{\longrightarrow} 0
\end{align*}
for $m=4,5$ by Lemma \ref{lemma:riemann}. Now, by
(\ref{W_diff_assumption}), it only remains to show that 
\begin{align}
\frac{1}{\sqrt{\Delta_n}} \sum_{i=1}^{[nt]}  \partial_y^2 g(0,\xtl,
\xtl; \theta_0)  \EE_{\theta_0}\left( (\xtr-\xtl)^2( \wtr - \wtl)\mid 
  \ftl\right) \overset{\pp}{\longrightarrow} 0\,.
\label{remaining_prob}
\end{align}
Applying  It\^{o}'s formula with the function
\begin{align*}
f(y,w) &= (y-x_{t_{i-1}^n})^2(w- w_{t_{i-1}^n})
\end{align*}
to the process $(X_t,W_t)_{t\geq t_{i-1}^n}$, conditioned on $(\xtl,
\wtl) = (x_{t_{i-1}^n}, w_{t_{i-1}^n})$, it follows that
\begin{equation}
\begin{split}
& \hspace{-5mm} (\xtr-\xtl)^2(\wtr- \wtl) \\
&=2 \int_{t_{i-1}^n}^{t_i^n} (X_s-\xtl)(W_s-
\wtl)a(X_s)\, ds+ \int_{t_{i-1}^n}^{t_i^n}(W_s- \wtl)
b^2(X_s; \theta_0)\,ds \\
&\hspace{5mm} + 2\int_{t_{i-1}^n}^{t_i^n}(X_s-\xtl) b(X_s;
\theta_0)\, ds + 2\int_{t_{i-1}^n}^{t_i^n}(X_s-\xtl)(W_s- \wtl)b(X_s;
\theta_0)\, dW_s \\
&\hspace{5mm} + \int_{t_{i-1}^n}^{t_i^n} (X_s-\xtl)^2\, dW_s\,.
\end{split}
\label{Ito1}
\end{equation}
By the martingale property of the It\^{o} integrals in (\ref{Ito1}), 
\begin{align}
\begin{split}\label{almost_converged}
& \hspace{-5mm} \EE_{\theta_0}\left( (\xtr-\xtl)^2( \wtr - \wtl)\mid 
  \ftl\right) \\
&=  2 \int_{t_{i-1}^n}^{t_i^n} \EE_{\theta_0}\left((X_s-\xtl)(W_s-
\wtl)a(X_s) \mid 
  \ftl\right) \, ds\\
&\hspace{5mm} +\int_{t_{i-1}^n}^{t_i^n} \EE_{\theta_0}\left(  (W_s- \wtl)
b^2(X_s; \theta_0) \mid 
  \ftl\right)\,ds  \\
&\hspace{5mm} + 2\int_{t_{i-1}^n}^{t_i^n} \EE_{\theta_0}\left(  (X_s-\xtl) b(X_s;
\theta_0) \mid 
  \xtl  \right)\, ds\,.
\end{split}
\end{align}
Using the Cauchy-Schwarz inequality, Lemma \ref{diff_growth} and Corollary
\ref{f_growth} again, 
\begin{align*}
\left|\int_{t_{i-1}^n}^{t_i^n} \EE_{\theta_0}\left((X_s-\xtl)(W_s-
\wtl)a(X_s) \mid 
  \ftl\right) \, ds\right|
&\leq C\Delta_n^2 ( 1+|\xtl|^C)\,,
\end{align*}
and by Lemma \ref{stoch_taylor}
\begin{align*}
\EE_{\theta_0}\left(  (X_s-\xtl) b(X_s;
\theta_0) \mid 
  \xtl  \right) &= (s-t_{i-1}^n)R(s-t_{i-1}^n, \xtl; \theta_0)\,,
\end{align*}
so also
\begin{align*}
\left|\int_{t_{i-1}^n}^{t_i^n} \EE_{\theta_0}\left(  (X_s-\xtl) b(X_s;
\theta_0) \mid 
  \xtl  \right)\, ds \right|
&\leq C\Delta_n^2( 1+|\xtl|^C)\,.
\end{align*}
Now
\begin{align*}
&\hspace{-5mm}\left|\frac{1}{\sqrt{\Delta_n}} \sum_{i=1}^{[nt]}  \partial_y^2 g(0,\xtl,
\xtl; \theta_0) \int_{t_{i-1}^n}^{t_i^n} \EE_{\theta_0}\left((X_s-\xtl)(W_s-
\wtl)a(X_s) \mid 
  \ftl\right) \, ds\right| \\
&+\left|\frac{1}{\sqrt{\Delta_n}} \sum_{i=1}^{[nt]}  \partial_y^2 g(0,\xtl,
\xtl; \theta_0) \int_{t_{i-1}^n}^{t_i^n} \EE_{\theta_0}\left(  (X_s-\xtl) b(X_s;
\theta_0) \mid 
  \xtl  \right)\, ds \right|\\
&\leq \Delta_n^{3/2}C\sum_{i=1}^{[nt]}  \left|\partial_y^2 g(0,\xtl,
\xtl; \theta_0) \right| ( 1+|\xtl|^C) 
\overset{\pp}{\longrightarrow} 0
\end{align*}
by Lemma \ref{lemma:riemann}, so by (\ref{remaining_prob}) and
(\ref{almost_converged}), it remains to show that
\begin{align*}
\frac{1}{\sqrt{\Delta_n}} \sum_{i=1}^{[nt]}  \partial_y^2 g(0,\xtl,
\xtl; \theta_0)  \int_{t_{i-1}^n}^{t_i^n} \EE_{\theta_0}\left(  (W_s- \wtl)
b^2(X_s; \theta_0) \mid 
  \ftl\right)\,ds &\overset{\pp}{\longrightarrow} 0\,.
\end{align*}
Applying It\^{o}'s formula with the function
\begin{align*}
f(y,w) &= (w- w_{t_{i-1}^n}) b^2(y; \theta_0)\,,
\end{align*}
and making use of the martingale properties of the stochastic integral terms, yields
\begin{align*}
& \int_{t_{i-1}^n}^{t_i^n}\EE_{\theta_0}\left( (W_s-\wtl) b^2(X_s; \theta_0)
  \mid \ff_{t_{i-1}^n}\right)\, ds\\
&= \int_{t_{i-1}^n}^{t_i^n}\int_{t_{i-1}^n}^s\EE_{\theta_0}\left( a(X_u) \partial_yb^2(X_u; \theta_0) (W_u-
\wtl)\mid \ff_{t_{i-1}^n}\right)\, du\,ds\\
&\hspace{5mm} + \tfrac{1}{2}\int_{t_{i-1}^n}^{t_i^n}\int_{t_{i-1}^n}^s\EE_{\theta_0}\left(b^2(X_u; \theta_0)  \partial^2_y b^2(X_u;
    \theta_0)  (W_u- \wtl)\mid
\ff_{t_{i-1}^n}\right)\,du\,ds\\
&\hspace{5mm} + \int_{t_{i-1}^n}^{t_i^n}\int_{t_{i-1}^n}^s\EE_{\theta_0}\left(b(X_u;\theta_0)\partial_yb^2(X_u;
\theta_0) \mid \ff_{t_{i-1}^n}\right)\, du\,ds\,.
\end{align*}
Again, by repeated use of the Cauchy-Schwarz inequality and
Corollary \ref{f_growth},
\begin{align*}
 \left|\int_{t_{i-1}^n}^{t_i^n}\EE_{\theta_0}\left(  (\wtr-\wtl) b^2(X_s; \theta_0)
  \mid \ff_{t_{i-1}^n}\right)\, ds \right|
&\leq C( 1 + |\xtl|^C)( \Delta_n^2 + \Delta_n^{5/2})\,.
\end{align*}
Now
\begin{align*}
&\hspace{-5mm} \left| \frac{1}{\sqrt{\Delta_n}} \sum_{i=1}^{[nt]}  \partial_y^2 g(0,\xtl,
\xtl; \theta_0)  \int_{t_{i-1}^n}^{t_i^n} \EE_{\theta_0}\left(  
 (W_s- \wtl) b^2(X_s; \theta_0)\mid 
  \ftl\right)\,ds\right| \\
&\leq \left( \Delta_n^{3/2} + \Delta_n^2\right) \sum_{i=1}^{[nt]}  \left| \partial_y^2 g(0,\xtl,
\xtl; \theta_0) \right| C( 1 + |\xtl|^C)
\overset{\pp}{\longrightarrow} 0\,,
\end{align*}
thus completing the proof.
\end{proof}
\begin{proof}[\textbf{Proof of Lemma \ref{conv_dist}}]
The aim of this proof is to establish that the conditions of Theorem
IX.7.28 in \citet{JS_2003} hold, by which the desired result
follows directly.
\medskip

For all $t\in[0,1]$,
\begin{align*}
\sup_{s\leq t} \left| \frac{1}{\sqrt{\Delta_n}}\sum_{i=1}^{[ns]} \EE_{\theta_0}\left( g(\Delta_n, X_{t_i^n},
  X_{t_{i-1}^n}; \theta_0) \mid \xtl\right) \right| 
&\leq \frac{1}{\sqrt{\Delta_n}}\sum_{i=1}^{[nt]}\left|  \EE_{\theta_0}\left( g(\Delta_n, X_{t_i^n},
  X_{t_{i-1}^n}; \theta_0) \mid \xtl\right) \right| 
\end{align*}

and since the right-hand side converges to $0$ in probability under
$\PP_{\theta_0}$ by (\ref{uni_prob_g_1}) of Lemma \ref{main_lemma_P},
so does the left-hand side, i.e.\ condition 7.27 of Theorem IX.7.28
holds. From (\ref{uni_prob_g_2}) and
(\ref{uni_prob_squared_g_1}) of Lemma \ref{main_lemma_P}, it follows
that for all $t\in[0,1]$,
\begin{align*}
&\hspace{-5mm} \frac{1}{\Delta_n} \sum_{i=1}^{[nt]} \left( \EE_{\theta_0}\left(
    g^2(\Delta_n, \xtr, \xtl; \theta_0)\mid \xtl \right) - \EE_{\theta_0}\left(
    g(\Delta_n, \xtr, \xtl; \theta_0) \mid \xtl\right)^2\right) \nonumber\\
&\overset{\pp}{\longrightarrow} \tfrac{1}{2}\int_0^t  b^4(X_s;
\theta_0)\left(\partial_y^2 g(0,X_s,X_s; \theta_0)\right)^2\, ds\,,
\end{align*}
establishing that condition 7.28 of Theorem IX.7.28 is
satisfied. Lemma \ref{g_delta_w_1} implies condition 7.29, while the
Lyapunov condition (\ref{uni_prob_squared_g_2}) of Lemma
\ref{main_lemma_P} implies the Lindeberg condition 7.30 of Theorem
IX.7.28 in \citet{JS_2003}, from which  the desired result now follows. 
\medskip

Theorem IX.7.28 contains an
additional condition 7.31. This condition has
the same form as (\ref{please_converge}), but with $\wtr -
\wtl$ replaced by $N_{t_i^n}-N_{t_{i-1}^n}$, where $\N = (N_t)_{t\geq
  0}$ is any bounded martingale on $(\Omega, \ff, (\ff_t)_{t\geq 0},
\PP_\thetan)$, which is orthogonal to $\W$. However, since
$(\ff_t)_{t\geq 0}$ is generated by U and $\W$, it follows from the
martingale representation theorem \citep[Theorem III.4.33]{JS_2003}
that every martingale on $(\Omega, \ff, (\ff_t)_{t\geq 0},
\PP_\thetan)$ may be written as the sum of a constant term and a stochastic
integral with respect to $\W$, and therefore cannot be orthogonal to
$\W$.
\end{proof}
\appendix

\section{Auxiliary Results}\label{sec:asymp:auxres}
This section contains a number of technical results used in the
proofs in Section \ref{asymp:lemmaproofs}.
 
\begin{lemma}{\cite[Lemma 9]{gc_jacod}}
For $i,n \in \NN$, let $\ff_{n,i} = \ff_{t_i^n}$ (with $\ff_{n,0}=\ff_0$), and
let $F_{n,i}$ be an $\ff_{n,i}$-measurable, real-valued random variable. If
\begin{align*}
\sum_{i=1}^n \EE_{\thetan}(F_{n,i} \mid \ff_{n,i-1}) 
\overset{\pp}{\longrightarrow} F\quad\text{ and }\quad \sum_{i=1}^n \EE_{\thetan}(F_{n,i}^2 \mid \ff_{n,i-1}) 
\overset{\pp}{\longrightarrow} 0\,,
\end{align*}

for some random variable $F$, then
\begin{align*}
\sum_{i=1}^n F_{n,i} \overset{\pp}{\longrightarrow} F\,.
\end{align*}\theoqed
\label{lemma_gc_jacod}
\end{lemma}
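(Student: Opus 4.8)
The plan is to split each $F_{n,i}$ into its $\ff_{n,i-1}$-conditional mean and a martingale-difference remainder, so that the first hypothesis absorbs the ``drift'' part while the second controls the fluctuations. Writing $\mu_{n,i} = \EE_{\thetan}(F_{n,i}\mid \ff_{n,i-1})$ and $M_{n,i} = F_{n,i} - \mu_{n,i}$, one has $\EE_{\thetan}(M_{n,i}\mid\ff_{n,i-1}) = 0$ and $\sum_{i=1}^n F_{n,i} = \sum_{i=1}^n \mu_{n,i} + \sum_{i=1}^n M_{n,i}$. The first sum converges to $F$ in probability by assumption, so it suffices to show that $\sum_{i=1}^n M_{n,i}\overset{\pp}{\longrightarrow} 0$. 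The second hypothesis feeds directly into the fluctuation part: since $\EE_{\thetan}(M_{n,i}^2\mid\ff_{n,i-1}) = \EE_{\thetan}(F_{n,i}^2\mid\ff_{n,i-1}) - \mu_{n,i}^2 \leq \EE_{\thetan}(F_{n,i}^2\mid\ff_{n,i-1})$, summing and using the hypothesis gives $\sum_{i=1}^n \EE_{\thetan}(M_{n,i}^2\mid\ff_{n,i-1}) \overset{\pp}{\longrightarrow} 0$.

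Next I would fix $n$ and regard $N_{n,k} = \sum_{i=1}^k M_{n,i}$, $k=0,1,\ldots,n$, as a zero-mean $(\ff_{n,k})$-martingale whose predictable quadratic variation is $\langle N_n\rangle_k = \sum_{i=1}^k \EE_{\thetan}(M_{n,i}^2\mid\ff_{n,i-1})$. Each summand is $\ff_{n,i-1}$-measurable, so $\langle N_n\rangle$ is predictable; the target quantity is $N_{n,n}$, and the previous step says $\langle N_n\rangle_n \overset{\pp}{\longrightarrow} 0$.

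The delicate point — and the main obstacle — is that $\langle N_n\rangle_n\to 0$ only \emph{in probability}, not in $L^1$, so one cannot simply estimate $\EE_{\thetan}(N_{n,n}^2)$ directly. I would get around this by truncating at the level of the predictable bracket. Given $\eta>0$, set $\tau_n = \inf\{k\geq 0: \langle N_n\rangle_{k+1} > \eta\}$, which is an $(\ff_{n,k})$-stopping time precisely because $\langle N_n\rangle$ is predictable. By construction $\langle N_n\rangle_{n\wedge\tau_n}\leq \eta$, and since $N_{n,k}^2 - \langle N_n\rangle_k$ is a martingale, optional stopping gives $\EE_{\thetan}(N_{n,n\wedge\tau_n}^2) = \EE_{\thetan}(\langle N_n\rangle_{n\wedge\tau_n}) \leq \eta$. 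Doob's $L^2$ maximal inequality then yields $\PP_{\thetan}(\max_{k\leq n}|N_{n,k\wedge\tau_n}| > \varepsilon)\leq \eta/\varepsilon^2$. Because the stopped and unstopped martingales coincide on $\{\tau_n\geq n\}$, while $\{\tau_n < n\}\subseteq\{\langle N_n\rangle_n > \eta\}$ by monotonicity of the bracket, I obtain
\[
\PP_{\thetan}\big(|N_{n,n}| > \varepsilon\big) \leq \frac{\eta}{\varepsilon^2} + \PP_{\thetan}\big(\langle N_n\rangle_n > \eta\big).
\]
Letting $n\to\infty$ kills the second term for each fixed $\eta$, and then letting $\eta\to 0$ forces $N_{n,n}\overset{\pp}{\longrightarrow} 0$, which is the required conclusion. (Equivalently, this last inequality is an instance of Lenglart's inequality applied to the submartingale $N_{n,k}^2$ dominated by its compensator $\langle N_n\rangle_k$; invoking Lenglart directly would shorten the write-up but relies on the same mechanism.)
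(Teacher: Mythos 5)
Your proof is correct. Note that the paper itself offers no proof of this lemma; it is quoted verbatim from Genon-Catalot and Jacod, and the standard proof of that result is precisely your argument: centre to get a martingale-difference array, observe that the predictable bracket tends to zero in probability, and conclude via Lenglart's inequality (or, as you do, the underlying truncation-by-the-predictable-bracket mechanism, which is valid since $\tau_n$ is a stopping time exactly because the bracket is predictable). The only point worth flagging is that the hypotheses do not assert integrability of $F_{n,i}$ or finiteness of $\EE_{\thetan}(F_{n,i}^2\mid\ff_{n,i-1})$ everywhere, so the identity $\EE_{\thetan}(N_{n,n\wedge\tau_n}^2)=\EE_{\thetan}(\langle N_n\rangle_{n\wedge\tau_n})$ should be justified by noting that on $\{i\le \tau_n\}$ the accumulated bracket is bounded by $\eta$, which makes the stopped martingale genuinely square-integrable; your write-up implicitly relies on this, and making it explicit closes the argument completely.
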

\begin{lemma}
Suppose that Assumptions \ref{assumptions_on_X} and
\ref{assumptions_on_g} hold. Then, for all $\theta \in \Theta$, 
\begin{enumerate}[label=(\roman{*}), ref=(\roman{*})]
\item \label{item:Eg}
\begin{align}
\begin{split}
& \hspace{-5mm} \EE_{\theta_0}\left( g(\Delta_n, X_{t_i^n}, X_{t_{i-1}^n}; \theta) \mid
  X_{t_{i-1}^n}\right) \label{Eg}\\
&= \tfrac{1}{2}\Delta_n\left( b^2(\xtl; \theta_0) - b^2(\xtl;
  \theta)\right) \partial_y^2 g(0,\xtl,\xtl; \theta) + \Delta_n^2 R(\Delta_n,
X_{t_{i-1}^n}; \theta)\,,
\end{split}
\end{align}
\item\label{item:Edg}
\begin{align}
\begin{split}
& \hspace{-5mm}  \EE_{\theta_0}\left( \partial_\theta
  g(\Delta_n,\xtr,\xtl;\theta) \mid X_{t_{i-1}^n}\right) \label{Edg}\\
&= \tfrac{1}{2} \Delta_n\left(b^2(\xtl; \theta_0)-b^2(\xtl;
  \theta)\right)\partial^2_y \partial_\theta g(0,\xtl,\xtl; \theta) \\
&\hspace{5mm} - \tfrac{1}{2} \Delta_n \partial_\theta
  b^2(\xtl; \theta)\partial^2_y g(0,\xtl,\xtl; \theta) + \Delta_n^2  R(\Delta_n,
  X_{t_{i-1}^n}; \theta) \,,
\end{split}
\end{align}
\item \label{item:Eg2_1}
\begin{align}
\begin{split}
& \hspace{-5mm} \EE_{\theta_0}\left( g^2(\Delta_n, \xtr, \xtl; \theta)
  \mid X_{t_{i-1}^n}\right) \label{Eg2_1}\\
&= \tfrac{1}{2}\Delta_n^2 \left( b^4(\xtl; \theta_0) +
  \tfrac{1}{2}\left( b^2(\xtl; \theta_0) - b^2(\xtl;
    \theta)\right)^2\right) \left(\partial^2_y
g(0,\xtl,\xtl; \theta)\right)^2 \\
&\hspace{5mm} + \Delta_n^3R(\Delta_n,
\xtl; \theta)\,,
\end{split}
\end{align}
\item\label{item:Edg2}
\begin{align}
\begin{split}
\hspace{-5mm} \EE_{\theta_0}\left( \left(\partial_\theta
    g(\Delta_n,X_{t_i^n},X_{t_{i-1}^n};\theta)\right)^2\mid X_{t_{i-1}^n}\right) 
&= \Delta_n^2  R(\Delta_n,X_{t_{i-1}^n};
  \theta) \,,\label{Edg2} 
\end{split} 
\end{align}
\item\label{item:Eg4}
\begin{align}
\begin{split}
\hspace{-5mm} \EE_{\theta_0}\left( g^4(\Delta_n, \xtr, \xtl; \theta)
  \mid X_{t_{i-1}^n}\right) &= \Delta_n^4 R(\Delta_n, \xtl; \theta) \label{Eg4}\,.
\end{split}
\end{align}
\end{enumerate}
\theoqed
\label{lemma:expectations}
\end{lemma}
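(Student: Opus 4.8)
The plan is to reduce each of the five conditional expectations to two elementary Taylor expansions combined with Lemma~\ref{stoch_taylor}. The first is the expansion of $g$ in powers of $\Delta$ supplied by Assumption~\ref{assumptions_on_g} in~(\ref{eqn:g_taylor}); the second is the expansion of $g(0,y,x;\theta)$ and of $g^{(1)}(y,x;\theta)$ in powers of $(y-x)$ around the diagonal $y=x$, of the type already recorded in~(\ref{eqn:g0_taylor}). The latter is available precisely because $g(0,x,x;\theta)=0$ and $\partial_y g(0,x,x;\theta)=0$ for all $\theta$, by Lemma~\ref{lemma_properties} and Assumption~\ref{assumptions_on_g}.\ref{dyg0_cont}; differentiating these diagonal identities in $\theta$ shows that $\partial_\theta g(0,x,x;\theta)=0$ and $\partial_y\partial_\theta g(0,x,x;\theta)=0$ as well, which is what I will need for the $\partial_\theta g$ statements. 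Throughout, every conditional expectation of a fixed smooth function of $\xtr$ given $\xtl$ is expanded by Lemma~\ref{stoch_taylor} applied with an appropriate $k$; in particular the increment moments $\EE_{\thetan}((\xtr-\xtl)^m\mid\xtl)$ are obtained by applying that lemma to $f(y,x)=(y-x)^m$, and all leftover terms are collected into factors $\Delta_n^{\,p}R(\Delta_n,\xtl;\theta)$ using the remainder characterisation in Lemma~\ref{stoch_taylor} together with Corollary~\ref{f_growth}.

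For~\ref{item:Eg} I would substitute~(\ref{eqn:g_taylor}) and take the conditional expectation term by term. Applying Lemma~\ref{stoch_taylor} to $g(0,\cdot,\cdot;\theta)$ with $k=1$ gives $\Delta_n\ll_{\thetan}(g(0;\theta))(\xtl,\xtl)+\Delta_n^2R$, since the zeroth-order term vanishes on the diagonal; applying it to $g^{(1)}$ with $k=0$ and using $g^{(1)}(x,x;\theta)=-\ll_\theta(g(0;\theta))(x,x)$ from Lemma~\ref{lemma_properties} gives $-\Delta_n\ll_\theta(g(0;\theta))(\xtl,\xtl)+\Delta_n^2R$, while the remaining terms of~(\ref{eqn:g_taylor}) are manifestly $\Delta_n^2R$. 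The key simplification is that the drift parts cancel in the difference of generators, so that
\begin{align*}
(\ll_{\thetan}-\ll_\theta)(g(0;\theta))(x,x)=\tfrac12\big(b^2(x;\thetan)-b^2(x;\theta)\big)\partial_y^2 g(0,x,x;\theta),
\end{align*}
which produces exactly the stated $\Delta_n$-coefficient. Statement~\ref{item:Edg} then follows by differentiating the expansion of~\ref{item:Eg} in $\theta$, the interchange of $\partial_\theta$ with $\EE_{\thetan}(\cdot\mid\xtl)$ being justified by the polynomial-growth and moment hypotheses.

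For the product statements~\ref{item:Eg2_1}, \ref{item:Edg2} and~\ref{item:Eg4} I would first insert~(\ref{eqn:g_taylor}) and the diagonal expansions of $g(0,\cdot,x;\theta)$ and $g^{(1)}$, then multiply out and take conditional expectations using the increment moments, which are of order $\Delta_n^{\lceil m/2\rceil}$. The main work is~\ref{item:Eg2_1}: writing $g^2=g(0)^2+2\Delta_n g(0)g^{(1)}+\Delta_n^2(g^{(1)})^2+\cdots$, the $\Delta_n^2$ coefficient receives three contributions — from $g(0)^2$ via $\EE_{\thetan}((\xtr-\xtl)^4\mid\xtl)=3\Delta_n^2 b^4(\xtl;\thetan)+\Delta_n^3R$, from the cross term via $\EE_{\thetan}((\xtr-\xtl)^2\mid\xtl)=\Delta_n b^2(\xtl;\thetan)+\Delta_n^2R$ together with $g^{(1)}(x,x;\theta)=-\tfrac12 b^2(x;\theta)\partial_y^2 g(0,x,x;\theta)$ (which uses Jacobsen's condition to drop the drift term in $g^{(1)}$), and from $(g^{(1)})^2$ — and these combine into $\tfrac12 b^4(\xtl;\thetan)+\tfrac14\big(b^2(\xtl;\thetan)-b^2(\xtl;\theta)\big)^2$ times $\big(\partial_y^2 g(0,\xtl,\xtl;\theta)\big)^2$, with everything else absorbed into $\Delta_n^3R$. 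Statements~\ref{item:Edg2} and~\ref{item:Eg4} require only orders of magnitude: $\partial_\theta g(0,\cdot,x;\theta)$ and $g(0,\cdot,x;\theta)$ each vanish to second order in $(\xtr-\xtl)$ on the diagonal, so $(\partial_\theta g)^2$ and $g^4$ vanish to orders four and eight respectively, giving $\Delta_n^2R$ and $\Delta_n^4R$ after taking expectations. I expect the bookkeeping in~\ref{item:Eg2_1} — verifying that no further $\Delta_n^2$ terms are hidden in the cross and higher products, and that every discarded term is genuinely of the stated order with polynomial growth uniform on compacts — to be the main obstacle, whereas the remaining statements follow from the same expansions with coarser order counting.
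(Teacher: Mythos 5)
Your proposal is correct and follows essentially the same route as the paper: expand $g$ in powers of $\Delta_n$ via (\ref{eqn:g_taylor}) and around the diagonal $y=x$, then apply Lemma \ref{stoch_taylor}, with the orders in parts \ref{item:Edg2} and \ref{item:Eg4} coming from the fact that the relevant products ($g^4(0;\theta)$, $g^3(0;\theta)g^{(1)}(\theta)$, $\partial_\theta g(0;\theta)^2$, etc.) vanish to high enough order at $y=x$ that the iterated generators $\ll_{\theta_0}^i$ annihilate them on the diagonal --- exactly the paper's mechanism (the paper cites \citet{efficient} for parts \ref{item:Eg}--\ref{item:Eg2_1} and only details \ref{item:Edg2} and \ref{item:Eg4}). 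The one point to tighten is your derivation of \ref{item:Edg} by differentiating \ref{item:Eg} in $\theta$: the generic remainder $R$ in \ref{item:Eg} is only asserted to be of polynomial growth, so you must either track through its construction to check that its $\theta$-derivative is again an $R$-function (which $g\in\cc^{\text{pol}}_{3,8,2}$ does guarantee, one $\theta$-derivative being left to spare), or, more cleanly, run the same direct expansion on $\partial_\theta g$ itself using $\partial_\theta g(0,x,x;\theta)=0$ and $\partial_y\partial_\theta g(0,x,x;\theta)=0$, as you note.
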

\begin{proof}[\textbf{Proof of Lemma \ref{lemma:expectations}}]
The formulae (\ref{Eg}), (\ref{Edg}) and (\ref{Eg2_1}) are implicitly
given in the proofs of \citet[Lemmas $3.2$ \& $3.4$]{efficient}. To
prove the two remaining formulae, note first that using (\ref{L4var}),
Assumption \ref{assumptions_on_g}.\ref{dyg0_cont} and 
Lemma \ref{lemma_properties}, 
\begin{align*}
\ll^i_{\theta_0}( g^4(0;\theta))(x,x) & = 0\,,\quad i= 1,2,3
\\
\ll^i_{\theta_0}(
  g^3(0,\theta)g^{(1)}(\theta))(x,x) &= 0 \,, \quad i = 1,2\\
\ll_{\theta_0}(
  g^2(0,\theta)g^{(1)}(\theta)^2)(x,x) &= 0\\
\ll_{\theta_0}(
  g^3(0,\theta)g^{(2)}(\theta))(x,x) &= 0 \\
\ll_{\theta_0}( \partial_\theta
    g(0,\theta)^2)(x,x) &= 0\,.
\end{align*}
The verification of these formulae may be
simplified by using e.g.\ the Leibniz formula for the $n$'th derivative
of a product to see that partial derivatives are zero when evaluated
in $y=x$. These results, as well as Lemmas \ref{stoch_taylor} and
\ref{lemma_properties}, and (\ref{eqn:expectation_remainder}) are used
without reference in the following.

\begin{align*}
& \hspace{-5mm} \EE_{\theta_0}\left( \left(\partial_\theta
    g(\Delta_n ,X_{t_i^n},X_{t_{i-1}^n};\theta)\right)^2\mid X_{t_{i-1}^n}\right) \\
&= \EE_{\theta_0}\left(\partial_\theta
    g(0,X_{t_i^n},X_{t_{i-1}^n}; \theta)^2\mid
  X_{t_{i-1}^n}\right) \\
&\hspace{5mm} + 2  \Delta_n \EE_{\theta_0}\left(\partial_\theta g(0,X_{t_i^n},X_{t_{i-1}^n};\theta)\partial_\theta g^{(1)}(X_{t_i^n},X_{t_{i-1}^n};\theta)\mid
X_{t_{i-1}^n}\right) \\
&\hspace{5mm} +
\Delta_n^2\EE_{\theta_0}\left(R(\Delta_n,X_{t_i^n},X_{t_{i-1}^n};
  \theta)\mid X_{t_{i-1}^n}\right) \\
&=  \partial_\theta
    g(0,X_{t_{i-1}^n},X_{t_{i-1}^n}; \theta)^2 + \Delta_n
  \ll_{\theta_0}( \partial_\theta
    g(0, \theta)^2) (X_{t_{i-1}^n},X_{t_{i-1}^n}) +\Delta_n^2  R(\Delta_n,X_{t_{i-1}^n};
  \theta) \\ 
&\hspace{5mm} + 2\Delta_n\left( \partial_\theta g(0,X_{t_{i-1}^n},X_{t_{i-1}^n};\theta)\partial_\theta g^{(1)}(X_{t_{i-1}^n},X_{t_{i-1}^n};\theta) + \Delta_n R(\Delta_n,X_{t_{i-1}^n};
  \theta)\right)  \\
&= \Delta_n^2  R(\Delta_n,X_{t_{i-1}^n};
  \theta)\,,
\end{align*}
proving (\ref{Edg2}). Similarly,
\begin{align*}
& \hspace{-5mm} \EE_{\theta_0}\left( g^4(\Delta_n, \xtr, \xtl; \theta)
  \mid X_{t_{i-1}^n}\right) \\
&= \EE_{\theta_0}\left( g^4(0,\xtr,\xtl; \theta) \mid \xtl\right) \\
&\hspace{5mm}+
4\Delta_n   \EE_{\theta_0}\left( g^3(0,\xtr,\xtl; \theta)g^{(1)}(\xtr,\xtl; \theta)\mid
  \xtl\right) \\
&\hspace{5mm} + 6\Delta_n ^2\EE_{\theta_0}\left(  g^2(0,\xtr,\xtl; \theta)g^{(1)}(\xtr,\xtl;
  \theta)^2 \mid
\xtl\right) \\
&\hspace{5mm} + 2\Delta_n^2\EE_{\theta_0}\left( g^3(0,\xtr,\xtl; \theta) g^{(2)}(\xtr,\xtl; \theta)\mid
\xtl\right) \\
&\hspace{5mm} + 4\Delta_n^3 \EE_{\theta_0}\left( g(0,\xtr,\xtl; \theta)g^{(1)}(\xtr,\xtl;
  \theta)^{3} \mid \xtl\right) \\
&\hspace{5mm} + 6\Delta_n^3 \EE_{\theta_0}\left( g^2(0,\xtr,\xtl; \theta)g^{(1)}(\xtr,\xtl;
  \theta)g^{(2)}(\xtr,\xtl;
  \theta) \mid \xtl \right)\\
&\hspace{5mm} + \tfrac{2}{3}\Delta_n^3 \EE_{\theta_0}\left( g^3(0,\xtr,\xtl; \theta)g^{(3)}(\xtr,\xtl;
  \theta)\mid \xtl\right) \\
&\hspace{5mm} + \Delta_n ^4  \EE_{\theta_0}\left( R(\Delta_n ,\xtr,\xtl; \theta)\mid
  \xtl \right) \\
&= g^4(0,\xtl,\xtl; \theta) + \Delta_n \ll_{\theta_0}( g^4(0;
  \theta) )(\xtl,\xtl) + \tfrac{1}{2}\Delta_n^2 \ll^2_{\theta_0}( g^4(0;
  \theta) )(\xtl,\xtl) \\
& \hspace{5mm} + \tfrac{1}{6}\Delta_n^3 \ll^3_{\theta_0}( g^4(0;
  \theta) )(\xtl,\xtl)  + 4\Delta_n g^3(0,\xtl,\xtl;
  \theta)g^{(1)}(\xtl,\xtl; \theta)  \\
&\hspace{5mm} + 4\Delta_n^2 \ll_{\theta_0}( g^3(0;
  \theta)g^{(1)}(\theta))(\xtl, \xtl) + 2\Delta^3_n \ll^2_{\theta_0}( g^3(0;
  \theta)g^{(1)}(\theta))(\xtl, \xtl) \\
&\hspace{5mm} + 6\Delta_n ^2 g^2(0,\xtl,\xtl; \theta)g^{(1)}(\xtl,\xtl;
  \theta)^2 + 6\Delta_n^3 \ll_{\theta_0}( g^2(0;
    \theta)g^{(1)}(\theta)^2) (\xtl,\xtl)\\
&\hspace{5mm} +2\Delta_n^2 g^3(0,\xtl,\xtl; \theta) g^{(2)}(\xtl,\xtl;
\theta) + 2\Delta_n^3 \ll_{\theta_0}(  g^3(0; \theta) g^{(2)}(
\theta) )(\xtl, \xtl) \\
&\hspace{5mm}+ 4\Delta_n^3  g(0,\xtl,\xtl; \theta)g^{(1)}(\xtl,\xtl;
  \theta)^{3}\\
&\hspace{5mm} + 6\Delta_n^3 g^2(0,\xtl,\xtl; \theta)g^{(1)}(\xtl,\xtl;
  \theta)g^{(2)}(\xtl,\xtl;
  \theta) \\
&\hspace{5mm} + \tfrac{2}{3}\Delta_n^3 g^3(0,\xtl,\xtl; \theta)g^{(3)}(\xtl,\xtl;
  \theta) \\
&\hspace{5mm} + \Delta_n^4 R(\Delta_n, \xtl; \theta) \\
&= \Delta_n^4 R(\Delta_n, \xtl; \theta)\,,
\end{align*}
which proves (\ref{Eg4}).
\end{proof}
\begin{lemma}
Let $x \mapsto f(x)$ be a continuous, real-valued
function, and let $t\in[0,1]$ be given. Then
\begin{align*}
\Delta_n\sum_{i=1}^{[nt]} f( X_{t_{i-1}^n})
\overset{\pp}{\longrightarrow} \int_0^{t} f(X_s)\,
ds \,.
\end{align*}
\theoqed
\label{lemma:riemann}
\end{lemma}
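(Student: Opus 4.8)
The plan is to prove the stronger statement of $\PP_{\theta_0}$-almost sure convergence, pathwise in $\omega$, and then invoke the elementary fact that almost sure convergence implies convergence in probability. The key observation is that $\Delta_n\sum_{i=1}^{[nt]} f(X_{t_{i-1}^n})$ is nothing but a left-endpoint Riemann sum for the integral of a continuous integrand along the (continuous) sample path of $\X$, so the result is essentially a pathwise Riemann-sum convergence statement dressed up in probabilistic language.

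First, I would fix an $\omega$ belonging to the set of full $\PP_{\theta_0}$-measure on which $s\mapsto X_s(\omega)$ is continuous on $[0,1]$; such a set exists because, by Assumption \ref{assumptions_on_X}, $\X$ is a continuous process. For such an $\omega$, the composition $s\mapsto f(X_s(\omega))$ is continuous on the compact interval $[0,1]$, hence Riemann integrable and bounded, say by $M_\omega = \sup_{s\in[0,1]}|f(X_s(\omega))| < \infty$ (the path image is the continuous image of a compact set).

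Next, I would recognise $\Delta_n\sum_{i=1}^{[nt]} f(X_{t_{i-1}^n}(\omega))$ as the left Riemann sum associated with the uniform partition $0 = t_0^n < t_1^n < \cdots < t_{[nt]}^n = [nt]/n$ of the interval $[0, [nt]/n]$, whose mesh is $\Delta_n = 1/n \to 0$. By the standard convergence of Riemann sums of a Riemann integrable function, this sum converges to $\int_0^{[nt]/n} f(X_s(\omega))\, ds$ as $n\to\infty$. Since $[nt]/n \to t$ and the integrand is bounded by $M_\omega$, the leftover piece satisfies $\big|\int_{[nt]/n}^{t} f(X_s(\omega))\, ds\big| \leq M_\omega\,|t - [nt]/n| \leq M_\omega/n \to 0$, so the Riemann sum in fact converges to the desired limit $\int_0^{t} f(X_s(\omega))\, ds$.

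Assembling these steps yields convergence for $\PP_{\theta_0}$-almost every $\omega$, and hence convergence in probability, as claimed. There is essentially no substantive obstacle: the argument is entirely pathwise and classical once continuity of the sample paths is invoked. The only point requiring a little care is that the summation upper limit $[nt]$ makes the natural integration endpoint $[nt]/n$ rather than $t$, so one must absorb the short tail over $[\,[nt]/n,\, t\,]$ using the pathwise boundedness $M_\omega$ of the integrand.
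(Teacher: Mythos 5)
Your argument is correct and is exactly the route the paper takes: the paper simply remarks that the lemma ``follows easily by the convergence of Riemann sums,'' and your pathwise write-up (continuity of the sample paths, left Riemann sums on the mesh $\Delta_n \to 0$, and absorbing the tail over $[\,[nt]/n,\,t\,]$ via pathwise boundedness) is the standard elaboration of that one-line proof. No discrepancy to report.
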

Lemma \ref{lemma:riemann} follows easily by the convergence of
Riemann sums.
\begin{lemma}
Suppose that Assumption \ref{assumptions_on_X} holds, and let $m\geq 2$. Then, there exists a constant $C_{m}>0$, such that for $0\leq t \leq t+\Delta \leq 1$,
\begin{align}
\EE_{\theta_0}\left( \vert
  X_{t+\Delta}-X_t\vert^{m}\mid X_t\right) &\leq C_m\Delta^{m/2}\left(
  1+\vert X_t\vert^m\right)\,.
\label{asymp_ineq}
\end{align}\theoqed
\label{diff_growth}
\end{lemma}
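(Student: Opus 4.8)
The plan is to express the increment $X_{t+\Delta}-X_t$ through the integral form of the stochastic differential equation (\ref{eqn:SDE}), split it into a drift part and a diffusion part, and bound each $m$-th conditional moment separately using the conditional Burkholder--Davis--Gundy inequality together with the linear growth of the coefficients. By the Markov property of $\X$, conditioning on $X_t$ coincides with conditioning on $\ff_t$ for functionals of $(X_s)_{s\geq t}$, so I would work with the $\ff_t$-conditional expectation and treat the increment as generated by restarting the diffusion at $X_t$. Writing
\begin{align*}
X_{t+\Delta}-X_t = \int_t^{t+\Delta} a(X_s)\,ds + \int_t^{t+\Delta} b(X_s;\theta_0)\,dW_s
\end{align*}
and using $|u+v|^m\leq 2^{m-1}(|u|^m+|v|^m)$, it suffices to bound each of the two integrals by $C_m\Delta^{m/2}(1+|X_t|^m)$.

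For the drift term, H\"older's inequality in the time variable gives $\left|\int_t^{t+\Delta} a(X_s)\,ds\right|^m\leq \Delta^{m-1}\int_t^{t+\Delta}|a(X_s)|^m\,ds$, where the global Lipschitz condition of Assumption \ref{assumptions_on_X}.\ref{lips_cont} yields the linear growth bound $|a(x)|\leq C(1+|x|)$ and hence $|a(X_s)|^m\leq C(1+|X_s|^m)$. For the diffusion term, I would first invoke the conditional Burkholder--Davis--Gundy inequality for the continuous martingale $s\mapsto \int_t^{s} b(X_u;\theta_0)\,dW_u$, reducing the $m$-th moment to $C_m\EE_{\theta_0}((\int_t^{t+\Delta} b^2(X_s;\theta_0)\,ds)^{m/2}\mid X_t)$, and then apply H\"older's inequality again to obtain $\Delta^{m/2-1}\int_t^{t+\Delta}|b(X_s;\theta_0)|^m\,ds$, with $|b(x;\theta_0)|\leq C(1+|x|)$ by the same Lipschitz condition.

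The single ingredient that makes both estimates close is the \emph{conditional moment bound}
\begin{align*}
\EE_{\theta_0}\left(|X_s|^m\mid X_t\right)\leq C_m(1+|X_t|^m),\qquad t\leq s\leq t+\Delta,
\end{align*}
uniformly in $0\leq t\leq t+\Delta\leq 1$. Granting this, the drift contribution is at most $\Delta^{m-1}\cdot C\Delta(1+|X_t|^m)=C\Delta^m(1+|X_t|^m)$ and the diffusion contribution at most $\Delta^{m/2-1}\cdot C\Delta(1+|X_t|^m)=C\Delta^{m/2}(1+|X_t|^m)$; since $m\geq 2$ and $\Delta\leq 1$ give $\Delta^m\leq\Delta^{m/2}$, both are dominated by $C_m\Delta^{m/2}(1+|X_t|^m)$, which is the claim. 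I expect the conditional moment bound to be the only genuinely nontrivial step: it is not literally the statement $\sup_{t\in[0,1]}\EE_{\theta_0}(|X_t|^m)<\infty$ quoted after Assumption \ref{assumptions_on_X}, but rather its conditional counterpart, established by the standard Gronwall argument applied to $s\mapsto \EE_{\theta_0}(\sup_{t\leq u\leq s}|X_u|^m\mid X_t)$, using linear growth of $a$ and $b$ together with $\Delta\leq 1$. All remaining manipulations are routine applications of BDG, H\"older, and the growth bounds.
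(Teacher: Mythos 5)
Your proof is correct and follows essentially the same route as the paper, which does not prove this lemma itself but refers to Lemma 6 of Kessler (1997); that proof is exactly your argument (integral decomposition, H\"older and the conditional Burkholder--Davis--Gundy inequality, linear growth from the Lipschitz condition, and the conditional moment bound $\EE_{\theta_0}(|X_s|^m\mid X_t)\leq C_m(1+|X_t|^m)$ obtained by Gronwall). You also correctly identify the conditional moment bound as the one genuinely nontrivial ingredient.
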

\begin{corollary}
Suppose that Assumption \ref{assumptions_on_X} holds. Let a compact, convex
set $K\subseteq \Theta$ be given, and suppose that $f(y,x; \theta)$
is of polynomial growth in $x$ and $y$, uniformly for $\theta$ in $K$. Then, there exist
constants $C_K>0$ such that for $0\leq t \leq t+\Delta \leq 1$,
\begin{align*}
\EE_{\theta_0}\left( |f(X_{t+\Delta}, X_t, \theta)|\mid X_t\right) &\leq
C_K\left(1+|X_t|^{C_K}\right)
\end{align*}
for all $\theta \in K$.\theoqed
\label{f_growth}
\end{corollary}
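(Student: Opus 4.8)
The plan is to combine the polynomial growth bound on $f$ with the increment moment bound supplied by Lemma \ref{diff_growth}. First I would invoke the definition of polynomial growth uniformly for $\theta \in K$ to obtain a constant $C_K > 0$ with
\[
|f(X_{t+\Delta}, X_t; \theta)| \leq C_K\big(1 + |X_t|^{C_K} + |X_{t+\Delta}|^{C_K}\big)
\]
for all $\theta \in K$. Taking $\EE_{\theta_0}(\,\cdot \mid X_t)$ and noting that the factors depending only on $X_t$ pass through the conditional expectation unchanged, the problem reduces to bounding the single conditional moment $\EE_{\theta_0}(|X_{t+\Delta}|^{C_K} \mid X_t)$ by an expression of the form $C(1 + |X_t|^{C})$.

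To control this moment I would write $X_{t+\Delta} = X_t + (X_{t+\Delta} - X_t)$ and apply the elementary convexity inequality $|a+b|^m \leq 2^{m-1}(|a|^m + |b|^m)$, valid for $m \geq 1$, to get
\[
\EE_{\theta_0}\big(|X_{t+\Delta}|^m \mid X_t\big) \leq 2^{m-1}\Big(|X_t|^m + \EE_{\theta_0}\big(|X_{t+\Delta} - X_t|^m \mid X_t\big)\Big).
\]
For $m \geq 2$, Lemma \ref{diff_growth} bounds the increment moment by $C_m \Delta^{m/2}(1 + |X_t|^m)$, and since $0 \leq \Delta \leq 1$ we have $\Delta^{m/2} \leq 1$; hence $\EE_{\theta_0}(|X_{t+\Delta}|^m \mid X_t) \leq C(1 + |X_t|^m)$ for a suitable constant $C$ depending on $m$.

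The one point requiring care is that the growth exponent $C_K$ need not satisfy $C_K \geq 2$, which is the range where Lemma \ref{diff_growth} applies. I would circumvent this by setting $m = \max(C_K, 2)$, so that $m \geq 2$ and $m \geq C_K$, and using the elementary inequality $|x|^{C_K} \leq 1 + |x|^m$, which holds trivially for $|x| \leq 1$ and follows from monotonicity of powers for $|x| > 1$. This yields $\EE_{\theta_0}(|X_{t+\Delta}|^{C_K} \mid X_t) \leq 1 + \EE_{\theta_0}(|X_{t+\Delta}|^m \mid X_t)$, which is then controlled by the bound from the previous step. Collecting the constants and enlarging $C_K$ if necessary gives the claimed inequality. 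There is no genuine obstacle here; the argument is routine, and the only real bookkeeping is ensuring that the final constant and exponent absorb both the original $C_K$ and the auxiliary $m = \max(C_K, 2)$ arising from the application of Lemma \ref{diff_growth}.
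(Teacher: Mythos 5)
Your argument is correct and is precisely the standard one: the paper gives no proof of this corollary, deferring to Lemma 6 of \citet{kessler_ergodic}, and the underlying argument there is exactly your combination of the uniform polynomial growth bound, the convexity inequality $|a+b|^m\leq 2^{m-1}(|a|^m+|b|^m)$, and the increment moment bound of Lemma \ref{diff_growth}, with the same bookkeeping to ensure the exponent lies in the range $m\geq 2$ where that lemma applies.
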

Lemma \ref{diff_growth} and Corollary \ref{f_growth}, 
correspond to Lemma 6 of \citet{kessler_ergodic}, adapted to the present
assumptions.
For use in the following, observe that for any $\theta \in \Theta$, there
exist constants $C_\theta>0$ such that
\begin{align*}
\Delta_n \sum_{i=1}^{[nt]} \left| R_\theta(\Delta_n, \xtl) \right|
&\leq C_\theta \Delta_n \sum_{i=1}^{[nt]} \left(1 + | \xtl|^{C_\theta}\right) \,,
\end{align*}
so it follows from Lemma \ref{lemma:riemann} that for any deterministic, real-valued
sequence $(\delta_n)_{n\in {\NN}}$ with $\delta_n \to 0$ as $n\to
\infty$, 
\begin{align}
\delta_n \Delta_n \sum_{i=1}^{[nt]} \left|R_\theta(\Delta_n, \xtl)\right| \overset{\pp}{\longrightarrow} 0\,.
\label{rest_limit}
\end{align}
Note that by Corollary \ref{f_growth}, it holds that under
Assumption \ref{assumptions_on_X},
\begin{align}
\EE_{\theta_0}\left( R\left(\Delta, X_{t+\Delta}, X_t;
      \theta\right) \mid X_{t}\right) &= R(\Delta, X_{t}; \theta)\,.
\label{eqn:expectation_remainder}
\end{align}

\begin{lemma}
Suppose that Assumption \ref{assumptions_on_X} holds, and that the function
$f(t,y,x; \theta)$ satisfies that
\begin{align}
f(t,y,x; \theta) \in \cc^{\text{pol}}_{1,2,1}( [0,1]\times \xx^2
  \times \Theta)\quad\text{ with }\quad f(0,x,x;
\theta) = 0
\label{eqn1new}
\end{align}
for all $x\in \xx$ and $\theta \in \Theta$.
Let $m\in \NN$ be given, and let $Dk(\,\cdot\,; \theta,
\theta') = k(\,\cdot\,; \theta) - k(\,\cdot\,; \theta')$. Then, there exist constants $C_m > 0$ such that
\begin{align}
\begin{split}\label{xref}
&\hspace{-5mm} \EE_{\theta_0}\left( \left| Df(t-s, X_t, X_s; \theta,
    \theta')\right|^{2m}\right) \\
&\leq C_{m} (t-s)^{2m-1} \int_s^t \EE_{\theta_0}\left( \left| Df_1(u-s,
    X_u, X_s; \theta, \theta')\right|^{2m}\right)\, du \\
&\hspace{5mm} + C_{m} (t-s)^{m-1} \int_s^t \EE_{\theta_0}\left( \left|
    Df_2(u-s, X_u, X_s; \theta, \theta')\right|^{2m}\right)\, du 
\end{split}
\end{align}
for  $0\leq s < t
\leq 1$ and $\theta, \theta' \in \Theta$, where $f_1$ and $f_2$ are given by
\begin{align*}
f_{1}(t, y,x; \theta) &= \partial_t f\left(t, y,x; \theta\right)
+ a(y)\partial_y f\left(t, y, x; \theta\right) +
\tfrac{1}{2}  b^2(y; \theta_0)\partial^2_y f\left(t, y, x; \theta\right)\\
f_{2}(t, y,x; \theta) &=  b(y; \theta_0)\partial_y f\left(t, y,x ; \theta\right)\,.
\end{align*}
Furthermore, for each
compact, convex set $K\subseteq \Theta$, there exists a constant $C_{m,K}>0$ such that
\begin{align*}
\EE_{\theta_0}\left( |Df_j(t-s, X_t, X_s; \theta, \theta')|^{2m}\right) &\leq C_{m,K} |\theta-\theta'|^{2m}
\end{align*}
for $j=1,2$, $0\leq s < t
\leq 1$ and all $\theta, \theta' \in K$. \theoqed
\label{triv_ineq}
\end{lemma}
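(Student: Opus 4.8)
The plan is to represent $Df$ as the sum of a Lebesgue integral of $Df_1$ and a stochastic integral of $Df_2$, and then to estimate the two pieces separately. Fixing $s$ and treating $X_s$ as a constant, I would apply It\^o's formula under $\PP_{\theta_0}$ to the process $t\mapsto f(t-s, X_t, X_s; \theta)$, using that under $\PP_{\theta_0}$ the process solves $dX_t = a(X_t)\,dt + b(X_t; \theta_0)\,dW_t$. The hypothesis $f\in\cc^{\text{pol}}_{1,2,1}([0,1]\times \xx^2 \times \Theta)$ supplies exactly one time-derivative and two $y$-derivatives, so It\^o's formula applies, and since $f(0, X_s, X_s; \theta)=0$ it yields
\begin{align*}
f(t-s, X_t, X_s; \theta) &= \int_s^t f_1(u-s, X_u, X_s; \theta)\, du + \int_s^t f_2(u-s, X_u, X_s; \theta)\, dW_u\,,
\end{align*}
where $f_1$ and $f_2$ are precisely the drift and diffusion coefficients in the statement. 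As $f_1$ and $f_2$ depend linearly on $f$ through operations that do not involve $\theta$, subtracting the identity at $\theta'$ from that at $\theta$ gives the same representation with $f, f_1, f_2$ replaced by $Df, Df_1, Df_2$.

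Next I would bound $\EE_{\theta_0}|Df|^{2m}$. Using $|a+b|^{2m}\leq 2^{2m-1}(|a|^{2m}+|b|^{2m})$, the expectation splits into a contribution from the Lebesgue integral of $Df_1$ and one from the stochastic integral of $Df_2$. For the former, Jensen's inequality with respect to the normalised measure $du/(t-s)$ on $[s,t]$ gives $|\int_s^t Df_1\,du|^{2m}\leq (t-s)^{2m-1}\int_s^t |Df_1|^{2m}\,du$, and Tonelli's theorem moves the expectation inside. For the latter, the Burkholder--Davis--Gundy inequality bounds $\EE_{\theta_0}|\int_s^t Df_2\,dW_u|^{2m}$ by $C_m\,\EE_{\theta_0}(\int_s^t (Df_2)^2\,du)^m$, and a second use of Jensen gives $(\int_s^t (Df_2)^2\,du)^m\leq (t-s)^{m-1}\int_s^t |Df_2|^{2m}\,du$. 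Combining the two estimates and absorbing numerical factors into a single constant $C_m$ produces the first displayed inequality; the polynomial growth of $f_1, f_2$ together with the uniform moment bounds on $\X$ furnishes the integrability required to justify Tonelli and the Burkholder--Davis--Gundy step.

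For the final statement, I would use that $f\in\cc^{\text{pol}}_{1,2,1}$ provides one $\theta$-derivative of each of $\partial_t f$, $\partial_y f$ and $\partial^2_y f$, so that $\partial_\theta f_1$ and $\partial_\theta f_2$ exist and are again of polynomial growth in $x$ and $y$, uniformly for $t\in[0,1]$ and $\theta$ in compact, convex sets. Fixing such a set $K$ and using its convexity, the mean value theorem yields, for $\theta, \theta'\in K$, an intermediate point $\tilde\theta\in K$ with $|Df_j(t-s,X_t,X_s;\theta,\theta')| \leq |\theta-\theta'|\,|\partial_\theta f_j(t-s, X_t, X_s; \tilde\theta)|$, hence $|Df_j|^{2m}\leq |\theta-\theta'|^{2m}\sup_{\tilde\theta\in K}|\partial_\theta f_j(t-s, X_t, X_s; \tilde\theta)|^{2m}$. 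Bounding the supremum by $C_K(1+|X_t|^{C_K}+|X_s|^{C_K})$ through polynomial growth and taking expectations, the uniform bound $\sup_{t\in[0,1]}\EE_{\theta_0}(|X_t|^p)<\infty$ (valid for every $p$ under Assumption \ref{assumptions_on_X}) controls $\EE_{\theta_0}[(1+|X_t|^{C_K}+|X_s|^{C_K})^{2m}]$ by a constant $C_{m,K}$ independent of $s$ and $t$, which is the claimed estimate.

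I expect the only genuinely delicate point to be the clean derivation of the It\^o representation and the verification that $Df_1, Df_2$ inherit the regularity and polynomial-growth properties needed downstream; once this is secured, the inequalities are standard applications of Jensen's and Burkholder--Davis--Gundy inequalities combined with the moment bounds on $\X$.
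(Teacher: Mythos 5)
Your proposal is correct and follows essentially the same route as the paper: It\^o's formula applied to $t\mapsto f(t-s,X_t,X_s;\theta)$ under $\PP_{\theta_0}$ to obtain the representation in terms of $f_1$ and $f_2$, then Jensen's inequality for the Lebesgue term, the Burkholder--Davis--Gundy inequality combined with Jensen for the stochastic integral term, and the mean value theorem (with the polynomial growth of $\partial_\theta f_j$ guaranteed by $f\in\cc^{\text{pol}}_{1,2,1}$ and the uniform moment bounds on $\X$) for the final estimate. The only difference is that you spell out details the paper leaves implicit, such as the splitting constant $2^{2m-1}$ and the use of Tonelli's theorem.
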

\begin{proof}[\textbf{Proof of Lemma \ref{triv_ineq}}]
A simple application of  It\^{o}'s formula (when conditioning on $X_s =
x_s$) yields that for all $\theta \in \Theta$, 
\begin{align}
f(t-s, X_t, X_s; \theta) &=
\int_{s}^{t} f_{1}\left(u-s, X_u, X_{s};
  \theta\right)\, du + \int_{s}^{t}
f_{2}\left(u-s, X_u, X_s; \theta\right)\, dW_u
\label{itof}
\end{align}
under $\PP_{\theta_0}$.
\medskip

By Jensen's inequality, it holds that for any $k\in \NN$,
\begin{align}
\EE_{\theta_0}\left( \left| \int_s^t Df_j(u-s, X_u, X_s; \theta,
    \theta')^j\, du\right|^{k} \right) 
&\leq (t-s)^{k-1} \int_s^t \EE_{\theta_0}\left( \left|  Df_j(u-s, X_u, X_s; \theta, \theta')\right|^{jk}\right)\, du\label{f1ineq}
\end{align}
for $j=1,2$, and by the martingale properties of the second term in
(\ref{itof}), the Burkholder-Davis-Gundy inequality may be used to
show that
\begin{align}
\EE_{\theta_0}\left( \left| \int_s^t Df_2(u-s, X_u, X_s; \theta,
    \theta')\, dW_u\right|^{2m}\right)
\leq C_{m}\EE_{\theta_0}\left( \left| \int_s^t Df_2(u-s, X_u, X_s; \theta,
    \theta')^2\, du\right|^{m}\right)\,.\label{f2ineq}
\end{align}
Now, (\ref{itof}), (\ref{f1ineq}) and (\ref{f2ineq}) may be combined to
show (\ref{xref}).
The last result of the lemma follows by an application of the mean value
theorem.
\end{proof}
\begin{lemma}
Suppose that Assumption \ref{assumptions_on_X} holds, and let
$K\subseteq \Theta$ be compact and convex. Assume that 
$f(t,y,x; \theta)$ satisfies (\ref{eqn1new}) for all $x \in \xx$ and
$\theta \in \Theta$, and define
\begin{align*}
F_n(\theta) &= \sum_{i=1}^{n} f(\Delta_n, X_{t_i^n}, X_{t_{i-1}^n}; \theta)\,.
\end{align*}
Then, for each $m\in \NN$, there exists a constant $C_{m,K}>0$, such that 
\begin{align*}
\EE_{\theta_0}\left| F_n(\theta) -
  F_n(\theta')\right|^{2m} &\leq C_{m,K}\, |\theta-\theta '|^{2m} 
\end{align*}
for all $\theta, \theta' \in K$ and $n\in\NN$. Define
$\widetilde{F}_n(\theta) = \Delta_n^{-1} F_n(\theta)$, and suppose,
moreover, that the functions 
\begin{align*}
h_{1}(t, y,x; \theta) &= \partial_t f\left(t, y,x; \theta\right)
+ a(y)\partial_y f\left(t, y, x; \theta\right) +
\tfrac{1}{2} b^2(y; \theta_0)\partial^2_y f\left(t, y, x; \theta\right)\\
h_{2}(t, y,x; \theta) &= b(y; \theta_0)\partial_y f\left(t, y,x ; \theta\right)\\
h_{j2}(t,y,x; \theta) &= b(y; \theta_0)\partial_y h_j(t,y,x, \theta)
\end{align*}
satisfy (\ref{eqn1new}) for $j=1,2$.
Then, for each $m\in \NN$, there exists a constant $C_{m,K}>0$, such that 
\begin{align*}
\EE_{\theta_0}\left| \widetilde{F}_n(\theta) -
  \widetilde{F}_n(\theta')\right|^{2m} &\leq C_{m,K}\, |\theta-\theta '|^{2m} 
\end{align*}
for all $\theta, \theta' \in K$ and $n\in \NN$.
\theoqed
\label{lemma_moment_diff}
\end{lemma}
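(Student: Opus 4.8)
The plan is to prove both moment bounds by reducing them, via the single-increment estimate in Lemma \ref{triv_ineq}, to a summation over the $n$ sampling intervals, and then to exploit the telescoping factor of $\Delta_n$ and the equidistant spacing $t_i^n - t_{i-1}^n = \Delta_n$ to absorb the sum into a uniform constant. Throughout, write $Df = f(\,\cdot\,; \theta) - f(\,\cdot\,; \theta')$ so that $F_n(\theta) - F_n(\theta') = \sum_{i=1}^n Df(\Delta_n, \xtr, \xtl; \theta, \theta')$. The first claim concerns $F_n$ itself; the second, harder claim concerns the rescaled $\widetilde F_n = \Delta_n^{-1} F_n$, where the extra factor $\Delta_n^{-1}$ must be paid for by gaining an additional power of $\Delta_n$ from each increment, which is exactly why the hypotheses in the second part demand that the iterated generator-type functions $h_{j2}$ also vanish on the diagonal and satisfy (\ref{eqn1new}).

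First I would handle the bound on $F_n(\theta) - F_n(\theta')$. The natural route is to pass from the $2m$-th moment of the sum to a sum of $2m$-th moments. I would invoke the martingale structure: setting $M_{n,i} = Df(\Delta_n, X_{t_i^n}, X_{t_{i-1}^n}; \theta, \theta') - \EE_{\theta_0}(Df(\Delta_n, X_{t_i^n}, X_{t_{i-1}^n}; \theta, \theta') \mid \mathcal F_{t_{i-1}^n})$ gives centered increments, and the conditional means are themselves $O(\Delta_n)$ objects controllable by Lemma \ref{stoch_taylor}. Applying the Burkholder--Davis--Gundy (or Rosenthal) inequality to the resulting discrete martingale converts the $2m$-th moment of the sum into roughly $n^{m-1}\sum_i \EE_{\theta_0}|M_{n,i}|^{2m}$ plus lower-order cross terms. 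Each single increment is then estimated by Lemma \ref{triv_ineq}, which for $s = t_{i-1}^n$, $t = t_i^n$ gives
\begin{align*}
\EE_{\theta_0}\big( |Df(\Delta_n, X_{t_i^n}, X_{t_{i-1}^n}; \theta, \theta')|^{2m}\big) \leq C_{m,K}\, \Delta_n^{m}\, |\theta - \theta'|^{2m},
\end{align*}
where the $\Delta_n^{m}$ comes from the factor $(t-s)^{m-1}\int_s^t du = \Delta_n^m$ in the dominant (stochastic-integral) term of (\ref{xref}), and the final $|\theta-\theta'|^{2m}$ from the mean-value bound on $Df_j$. Summing $n$ such terms against the $n^{m-1}$ prefactor yields $n^{m-1}\cdot n \cdot \Delta_n^m = n^m \Delta_n^m = 1$, since $\Delta_n = 1/n$, giving the desired constant bound uniformly in $n$.

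For $\widetilde F_n = \Delta_n^{-1} F_n$ the same architecture applies, but now there is an extra $\Delta_n^{-2m}$ multiplying the whole moment, so I need each increment to contribute $\Delta_n^{3m}$ rather than $\Delta_n^m$. This is where the second set of hypotheses enters: because $f$ vanishes on the diagonal, the It\^o representation (\ref{itof}) expresses each increment through $f_1, f_2$; iterating It\^o's formula once more on $f_2$ (whose relevant pieces are the $h_{j2}$) and using that $h_1, h_2, h_{j2}$ all satisfy (\ref{eqn1new}) — hence vanish at $s=0$ on the diagonal — gives one further factor of $(t-s)$ per increment, upgrading the single-increment estimate to $\Delta_n^{3m}|\theta-\theta'|^{2m}$. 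Then $\Delta_n^{-2m}\cdot n^{m-1}\cdot n\cdot \Delta_n^{3m} = \Delta_n^{-2m} n^m \Delta_n^{3m} = n^m \Delta_n^m = 1$ again closes the bound. \textbf{The main obstacle} I anticipate is precisely the bookkeeping in this second, iterated expansion: one must carefully track which derivatives of $f$ vanish on the diagonal at $t=0$ (so that Lemma \ref{triv_ineq}-type reasoning can be reapplied to $h_1, h_2$ and their images under the generator), and must verify that the cross terms and conditional-mean corrections — not just the leading martingale term — also achieve the required $\Delta_n^{3m}$ order, so that no term spoils the uniform-in-$n$ constant.
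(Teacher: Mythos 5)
Your argument is correct and rests on the same engine as the paper's proof, namely the It\^{o} representation of each increment combined with Lemma \ref{triv_ineq}, applied once for the $F_n$ bound and iterated (using that $h_1$, $h_2$ and $h_{j2}$ vanish on the diagonal at $t=0$) for the $\widetilde{F}_n$ bound; the difference is purely in how you reassemble the $n$ increments. The paper never passes to per-increment $2m$-th moments: it multiplies the sum by $\Delta_n$, keeps the stochastic-integral pieces of all $n$ increments glued together as one continuous-time martingale $r\mapsto \Delta_n\sum_i\int_0^r \mathbf{1}_{(t_{i-1}^n,t_i^n]}(u)\,Dh_2(u-t_{i-1}^n,X_u,\xtl;\theta,\theta')\,dW_u$, applies the Burkholder--Davis--Gundy inequality once to that, and treats the drift pieces by Jensen's inequality, landing on $\EE_{\theta_0}\vert\Delta_n\sum_i Df(\Delta_n,\xtr,\xtl;\theta,\theta')\vert^{2m}\leq C\Delta_n^{2m}\vert\theta-\theta'\vert^{2m}$ (resp.\ $C\Delta_n^{4m}$). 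You instead center the discrete increments and apply a discrete BDG/Rosenthal inequality, which obliges you to bound each $\EE_{\theta_0}\vert Df(\Delta_n,\xtr,\xtl;\theta,\theta')\vert^{2m}$ separately (by $C\Delta_n^{m}$, resp.\ $C\Delta_n^{3m}$, times $\vert\theta-\theta'\vert^{2m}$ --- both of which Lemma \ref{triv_ineq} and its iteration do deliver) and, in addition, to control the compensator $\sum_i\EE_{\theta_0}(Df\mid\ftl)$; that extra term does close, since the conditional mean of each increment equals $\int_{\tminus}^{\tplus}\EE_{\theta_0}(Dh_1(u-\tminus,X_u,\xtl;\theta,\theta')\mid\ftl)\,du$, whose $2m$-th moment is $O(\Delta_n^{2m})$, resp.\ $O(\Delta_n^{4m})$, exactly enough against the crude $n^{2m}$ from the triangle/Jensen step. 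So your route is a legitimate, slightly more laborious rearrangement of the same proof. The only imprecision is the phrase ``one further factor of $(t-s)$ per increment'': at the level of $2m$-th moments the gain is $(t-s)^{2m}$ (from $\Delta_n^{m}$ to $\Delta_n^{3m}$), coming from the dominant term $(t-s)^{m-1}\int_s^t\EE_{\theta_0}\vert Dh_2\vert^{2m}\,du$ improving from order $(t-s)^{m}$ to $(t-s)^{3m}$ once Lemma \ref{triv_ineq} is reapplied to $h_2$ and then to $h_{22}$.
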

\begin{proof}[\textbf{Proof of Lemma \ref{lemma_moment_diff}}]
For use in the following, define, in addition to $h_1$, $h_2$ and
$h_{j2}$, the functions
\begin{align*}
h_{j1}(t,y,x; \theta) &= \partial_t h_j(t,y,x;\theta) + a(y)\partial_y
h_j(t,y,x;\theta) + \tfrac{1}{2} b^2(y; \theta_0) \partial_y^2
h_j(t,y,x;\theta) \\
h_{j21}(t,y,x; \theta) &= \partial_t h_{j2}(t,y,x; \theta)
+ a(y)\partial_y h_{j2}(t,y,x; \theta) + \tfrac{1}{2}  b^2(y;
\theta_0)\partial^2_y h_{j2}(t,y,x; \theta) \\
h_{j22}(t,y,x; \theta) &=  b(y; \theta_0)\partial_y h_{j2}(t,y,x; \theta)
\end{align*}
for $j=1,2$, and, for ease of notation, let
\begin{align*}
H^{n,i}_j(u; \theta, \theta') = Dh_j(u-t_{i-1}^n, X_u, \xtl;
\theta, \theta')
\end{align*}
for $j \in \{1,2,11,12,21,22,121,122,221,222\}$, where $Dk(\,\cdot\,; \theta,
\theta') = k(\,\cdot\,; \theta) - k(\,\cdot\,; \theta')$. Recall that $\Delta_n = 1/n$.
\medskip

First, by the martingale
properties of 
\begin{align*}
\Delta_n \sum_{i=1}^n \int_0^r \mathbf{1}_{( t_{i-1}^n,
  t_i^n]}(u) H^{n,i}_2(u; \theta, \theta')\, dW_u\,,
\end{align*}
the Burk\-hol\-der-Da\-vis-Gun\-dy inequality is used to
establish the existence of a constant $C_{m}> 0$ such that
\begin{align*}
\EE_{\theta_0} \left( \left| \Delta_n \sum_{i=1}^n
    \int_{t_{i-1}^n}^{t_i^n} H^{n,i}_2(u; \theta, \theta')\, dW_u\right|^{2m}\right) &\leq C_{m} \EE_{\theta_0}\left( \left|\Delta_n^2 \sum_{i=1}^n
  \int_{t_{i-1}^n}^{t_i^n} H^{n,i}_2(u; \theta, \theta')^2\, du\right|^m\right)\,.
\end{align*}
Now, using also Ito's formula, Jensen's inequality and Lemma \ref{triv_ineq},
\begin{align}
&\hspace{5mm} \EE_{\theta_0}\left( \left| \Delta_n \sum_{i=1}^n Df(\Delta_n, \xtr,
    \xtl; \theta, \theta')\right|^{2m}\right)\nonumber\\
&\leq C_m\EE_{\theta_0}\left( \left| \Delta_n \sum_{i=1}^n
      \int_{t_{i-1}^n}^{t_i^n} H^{n,i}_1(u; \theta, \theta') \, du \right|^{2m}\right) + C_m \EE_{\theta_0}\left( \left| \Delta_n \sum_{i=1}^n
      \int_{t_{i-1}^n}^{t_i^n} H^{n,i}_2(u; \theta,\theta') \, dW_u
    \right|^{2m}\right) \nonumber\\
&\leq C_{m} \Delta_n\sum_{i=1}^n
  \EE_{\theta_0}\left(\left|\int_{t_{i-1}^n}^{t_i^n} H^{n,i}_1(u; \theta,\theta') \, du
    \right|^{2m}  \right)  + C_m \EE_{\theta_0}\left( \left|\Delta_n^2 \sum_{i=1}^n
  \int_{t_{i-1}^n}^{t_i^n} H^{n,i}_2(u; \theta, \theta')^2\, du\right|^m\right) \nonumber\\
&\leq C_{m}\Delta_n^{2m+1}\sum_{i=1}^n \left(  
  \EE_{\theta_0}\left(\left|\frac{1}{\Delta_n}\int_{t_{i-1}^n}^{t_i^n}
  H^{n,i}_1(u; \theta, \theta') \, du
    \right|^{2m}  \right)  + \EE_{\theta_0}\left( \left|
  \frac{1}{\Delta_n}\int_{t_{i-1}^n}^{t_i^n} H^{n,i}_2(u; \theta,\theta')^2\,
  du\right|^m\right)\right) \nonumber\\
&\leq C_{m}\Delta_n^{2m}\sum_{i=1}^n \left( \int_{t_{i-1}^n}^{t_i^n}
  \EE_{\theta_0}\left( |H^{n,i}_1(u; \theta,\theta')|^{2m}\right) \, du +
  \int_{t_{i-1}^n}^{t_i^n} \EE_{\theta_0}\left( |H^{n,i}_2(u; \theta,\theta')|^{2m}\right)\,
  du\right) \label{eqn:H1H2} \\
&\leq C_{m,K}|\theta-\theta'|^{2m}\Delta_n^{2m} \nonumber\,,
\end{align}
thus
\begin{align*}
\EE_{\theta_0}\left( |DF_n(\theta, \theta')|^{2m} \right) 
&= \Delta_n^{-2m} \EE_{\theta_0}\left(  \left| \Delta_n\sum_{i=1}^n Df(\Delta_n, \xtr,
  \xtl; \theta, \theta')\right|^{2m}\right) \leq C_{m,K} |\theta-\theta'|^{2m}
\end{align*}
for all $\theta, \theta' \in K$ and $n\in \NN$. In the case where also $h_j$ and $h_{j2}$ satisfy
(\ref{eqn1new}) for all $x\in\xx$, $\theta \in \Theta$ and $j=1,2$, use Lemma \ref{triv_ineq} to write
\begin{align*}
&\hspace{-5mm} \EE_{\theta_0}\left( |H^{n,i}_1(u; \theta, \theta')|^{2m}\right)\\
 &\leq
C_{m}(u-t_{i-1}^n)^{2m-1} \int_{t_{i-1}^n}^u \EE_{\theta_0}\left(
  |H^{n,i}_{11}(v; \theta, \theta')|^{2m}\right)\, dv \\
&\hspace{5mm}
+ C_{m} (u-t_{i-1}^n)^{m-1} \int_{t_{i-1}^n}^u \EE_{\theta_0}\left(
  |H^{n,i}_{12}(v; \theta,\theta')|^{2m}\right)\, dv \\
&\leq C_{m}(u-t_{i-1}^n)^{2m-1} \int_{t_{i-1}^n}^u \EE_{\theta_0}\left(
  |H^{n,i}_{11}(v; \theta, \theta')|^{2m}\right)\, dv \\
&\hspace{5mm} +  C_{m} 
(u-t_{i-1}^n)^{m-1} \int_{t_{i-1}^n}^u \left(
    (v-t_{i-1}^n)^{2m-1}\int_{t_{i-1}^n}^v
  \EE_{\theta_0}\left(\left|H^{n,i}_{121}(w; \theta,\theta')\right|^{2m}\right)\,
  dw\right) \, dv \\
&\hspace{5mm} +C_{m} (u-t_{i-1}^n)^{m-1} \int_{t_{i-1}^n}^u \left(
    (v-t_{i-1}^n)^{m-1}\int_{t_{i-1}^n}^v
  \EE_{\theta_0}\left(\left|H^{n,i}_{122}(w; \theta,\theta')\right|^{2m}\right)\,
  dw\right) \, dv \\
&\leq C_{m,K} |\theta-\theta'|^{2m} \left( (u-t_{i-1}^n)^{2m} + (u-t_{i-1}^n)^{3m} \right) \,,
\end{align*}
and similarly obtain
\begin{align*}
\EE_{\theta_0}\left( |H^{n,i}_2(u; \theta,\theta')|^{2m}\right)
 &\leq  C_{m,K} |\theta-\theta'|^{2m} \left( (u-t_{i-1}^n)^{2m}
 +(u-t_{i-1}^n)^{3m} \right)\,.
\end{align*}
Now, inserting into (\ref{eqn:H1H2}),
\begin{align*}
&\hspace{-5mm} \EE_{\theta_0}\left( \left| \Delta_n \sum_{i=1}^n Df(\Delta_n, \xtr,
    \xtl; \theta, \theta')\right|^{2m}\right)\\
&\leq C_{m,K}\Delta_n^{2m}\sum_{i=1}^n \left( \int_{t_{i-1}^n}^{t_i^n}
  \EE_{\theta_0}\left( |H^{n,i}_1(u; \theta,\theta')|^{2m}\right) \, du +
  \int_{t_{i-1}^n}^{t_i^n} \EE_{\theta_0}\left( |H^{n,i}_2(u; \theta,\theta')|^{2m}\right)\,
  du\right) \\
&\leq C_{m,K} |\theta-\theta'|^{2m}\Delta_n^{2m}   \sum_{i=1}^n
\int_{t_{i-1}^n}^{t_i^n} \left( (u-t_{i-1}^n)^{2m} +(u-t_{i-1}^n)^{3m} \right)\,
  du \\
&\leq C_{m,K} |\theta-\theta'|^{2m}\left(  \Delta_n^{4m}  + \Delta_n^{5m} \right)\,,
\end{align*}
and, ultimately,
\begin{align*}
\EE_{\theta_0}\left( |D\tilde{F}_n(\theta, \theta')|^{2m} \right) &=\EE_{\theta_0}\left(  \left| \Delta_n^{-1}\sum_{i=1}^n Df(\Delta_n, \xtr,
  \xtl; \theta, \theta')\right|^{2m}\right) \\
&= \Delta_n^{-4m} \EE_{\theta_0}\left(  \left| \Delta_n\sum_{i=1}^n Df(\Delta_n, \xtr,
  \xtl; \theta, \theta')\right|^{2m}\right) \\
&\leq C_{m,K} |\theta-\theta'|^{2m}\left(  1  + \Delta_n \right) \\
&\leq C_{m,K} |\theta-\theta'|^{2m}\,.\qedhere
\end{align*}
\end{proof}
\begin{lemma}
Suppose that Assumption \ref{assumptions_on_X} is satisfied. Let $f
\in \cc_{0,1}^{\text{pol}}\left( \xx \times \Theta\right) $. Define
\begin{align*}
F(\theta) &= \int_0^1 f(X_s; \theta)\, ds
\end{align*}
and let $K\subseteq \Theta$ be compact and convex. Then, for each $m\in
\NN$, there exists a constant $C_{m,K} > 0$ such that for all $\theta, \theta' \in K$,
\begin{align*}
\EE_{\theta_0}|F(\theta) -
F(\theta')|^{2m} \leq C_{m,K} \,\, |\theta-\theta'|^{2m}\,.
\end{align*}
\theoqed
\label{lemma_limit_moment_diff}
\end{lemma}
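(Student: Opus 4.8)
The plan is to reduce the bound on $F$ to a pointwise-in-$s$ bound on the integrand and then exploit the polynomial growth of $\partial_\theta f$ together with the uniform moment bounds guaranteed by Assumption \ref{assumptions_on_X}. The starting point is the elementary observation that, since $ds$ is a probability measure on $[0,1]$, Jensen's inequality applied to the convex function $x \mapsto |x|^{2m}$ gives
\begin{align*}
|F(\theta) - F(\theta')|^{2m} = \left| \int_0^1 \big( f(X_s;\theta) - f(X_s;\theta')\big)\, ds\right|^{2m} \leq \int_0^1 \big| f(X_s;\theta) - f(X_s;\theta')\big|^{2m}\, ds.
\end{align*}
Taking $\EE_{\theta_0}$ and interchanging expectation and the (nonnegative) $ds$-integral by Tonelli reduces the claim to a uniform-in-$s$ bound of the form $\EE_{\theta_0}|f(X_s;\theta) - f(X_s;\theta')|^{2m} \leq C_{m,K}|\theta - \theta'|^{2m}$.

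For this pointwise bound I would write, using that $f \in \cc^{\text{pol}}_{0,1}(\xx \times \Theta)$ so that $\partial_\theta f$ exists and is continuous, the fundamental theorem of calculus in the $\theta$-variable,
\begin{align*}
f(X_s;\theta) - f(X_s;\theta') = \int_{\theta'}^{\theta} \partial_\theta f(X_s; u)\, du,
\end{align*}
where the segment from $\theta'$ to $\theta$ stays inside $K$ by convexity. A second application of Jensen, now over the interval of length $|\theta - \theta'|$, yields
\begin{align*}
\big| f(X_s;\theta) - f(X_s;\theta')\big|^{2m} \leq |\theta - \theta'|^{2m-1} \left| \int_{\theta'}^{\theta} \big| \partial_\theta f(X_s;u)\big|^{2m}\, du\right|.
\end{align*}

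The remaining ingredient is control of $\EE_{\theta_0}|\partial_\theta f(X_s;u)|^{2m}$, uniformly in $s \in [0,1]$ and $u \in K$. Since $\partial_\theta f$ is of polynomial growth in $y$ uniformly for $\theta$ in the compact, convex set $K$, there are constants $C_K$ with $|\partial_\theta f(X_s;u)| \leq C_K(1 + |X_s|^{C_K})$ for all $u \in K$; raising to the power $2m$ and invoking the fact noted after Assumption \ref{assumptions_on_X} that $\sup_{s \in [0,1]} \EE_{\theta_0}|X_s|^p < \infty$ for every $p \in \NN$, I obtain $\EE_{\theta_0}|\partial_\theta f(X_s;u)|^{2m} \leq C_{m,K}$ uniformly in $s$ and $u$. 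Integrating this over $u \in [\theta', \theta]$ contributes a further factor $|\theta - \theta'|$, and combining with the displays above and integrating in $s$ gives the claimed bound $C_{m,K}|\theta - \theta'|^{2m}$.

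There is no serious obstacle here: unlike the analogous Lemma \ref{lemma_moment_diff}, no stochastic-integral or It\^{o}-expansion machinery is needed, because $F$ involves only an ordinary Lebesgue integral of the path. The only points requiring (minor) care are the double use of Jensen's inequality with the correct exponents (in particular the factor $|\theta-\theta'|^{2m-1}$) and the justification, via Tonelli and the uniform moment bounds, of the interchange of $\EE_{\theta_0}$ with the $ds$- and $du$-integrations.
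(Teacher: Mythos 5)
Your argument is correct and is essentially the paper's own proof written out in full: the paper simply remarks that the lemma ``follows from a simple application of the mean value theorem,'' and your fundamental-theorem-of-calculus representation of $f(X_s;\theta)-f(X_s;\theta')$ combined with Jensen's inequality, the polynomial growth of $\partial_\theta f$ on $K$, and the uniform moment bounds is precisely that argument made explicit. No gaps.
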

Lemma
\ref{lemma_limit_moment_diff} follows from a simple application of the
mean value theorem.
\begin{lemma}
Let $K \subseteq \Theta$ be compact. 
Suppose that $\mathbf{H}_n = (H_n(\theta))_{\theta \in K}$
defines a sequence $(\mathbf{H}_n)_{n\in\NN}$ of continuous,
real-valued stochastic processes such that
\begin{align*}
H_n(\theta) \overset{\pp}{\longrightarrow} 0
\end{align*}
point-wise for all $\theta \in K$. Furthermore, assume that for some $m\in \NN$, there exists a constant
$C_{m,K}>0$ such that for all $\theta, \theta ' \in K$ and $n\in \NN$,
\begin{align}
\EE_{\theta_0}\left| H_n(\theta) -
  H_n(\theta')\right|^{2m} &\leq C_{m,K} |\theta-\theta '|^{2m}\,.
\label{mean_eq_orig}
\end{align}
Then,
\begin{align*}
\sup_{\theta \in K} \left|H_n(\theta)\right| \overset{\pp}{\longrightarrow} 0\,.
\end{align*}
\theoqed
\label{uniform_convergence}
\end{lemma}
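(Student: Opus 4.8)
The plan is to upgrade the pointwise convergence $H_n(\theta)\overset{\pp}{\longrightarrow}0$ to uniform convergence by showing that the family $(\mathbf{H}_n)_{n\in\NN}$ is stochastically equicontinuous \emph{uniformly in $n$}, which is precisely the content of the moment bound (\ref{mean_eq_orig}) via a Kolmogorov--Chentsov--type argument. Since compact convex subsets of $\RR$ are closed intervals (this is the form in which the lemma is applied), I may take $K=[a,b]$. For $\delta>0$ define the modulus of continuity $w_n(\delta)=\sup\{|H_n(\theta)-H_n(\theta')| : \theta,\theta'\in K,\ |\theta-\theta'|\le\delta\}$, which is a genuine random variable because $H_n$ is continuous on the compact set $K$. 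Fixing a finite $\delta$-net $\theta_1,\dots,\theta_N\in K$, every $\theta\in K$ lies within $\delta$ of some $\theta_j$, so
\[
\sup_{\theta\in K}|H_n(\theta)| \le \max_{1\le j\le N}|H_n(\theta_j)| + w_n(\delta).
\]

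The key step is a bound on $w_n(\delta)$ that is uniform in $n$, and I would obtain it by dyadic chaining driven by (\ref{mean_eq_orig}). For $k\in\NN_0$ put $\theta^{(k)}_j=a+(b-a)j2^{-k}$, $j=0,\dots,2^k$, and $M_{n,k}=\max_{0\le j<2^k}|H_n(\theta^{(k)}_{j+1})-H_n(\theta^{(k)}_j)|$. By (\ref{mean_eq_orig}),
\[
\EE_{\theta_0} M_{n,k}^{2m} \le \sum_{j=0}^{2^k-1}\EE_{\theta_0}\bigl|H_n(\theta^{(k)}_{j+1})-H_n(\theta^{(k)}_j)\bigr|^{2m} \le 2^k C_{m,K}\bigl((b-a)2^{-k}\bigr)^{2m} = C'\,2^{-(2m-1)k},
\]
with $C'$ independent of $n$. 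The standard telescoping of a dyadic chain gives $|H_n(\theta)-H_n(\theta')|\le 2\sum_{k\ge k_0}M_{n,k}$ for dyadic $\theta,\theta'$ with $|\theta-\theta'|\le(b-a)2^{-k_0}$, and by continuity this extends to all $\theta,\theta'\in K$, so $w_n\!\bigl((b-a)2^{-k_0}\bigr)\le 2\sum_{k\ge k_0}M_{n,k}$. Since $2m\ge2$, Minkowski's inequality and the geometric decay above yield
\[
\EE_{\theta_0}\Bigl(\sum_{k\ge k_0}M_{n,k}\Bigr) \le \sum_{k\ge k_0}\bigl(\EE_{\theta_0}M_{n,k}^{2m}\bigr)^{1/(2m)} \le (C')^{1/(2m)}\sum_{k\ge k_0}2^{-(2m-1)k/(2m)},
\]
a convergent series whose tail tends to $0$ as $k_0\to\infty$, uniformly in $n$. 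Hence there is a deterministic $\omega$ with $\omega(\delta)\to0$ as $\delta\downarrow0$ and $\sup_n\EE_{\theta_0}w_n(\delta)\le\omega(\delta)$.

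With the modulus bound in hand the proof is finished by a routine $\varepsilon$--$\eta$ argument. Given $\varepsilon,\eta>0$, I first use Markov's inequality and the uniform bound to pick $\delta>0$ (and thereby the finite net $\theta_1,\dots,\theta_N$) with $\sup_n\PP_{\theta_0}(w_n(\delta)>\varepsilon)\le\varepsilon^{-1}\sup_n\EE_{\theta_0}w_n(\delta)\le\varepsilon^{-1}\omega(\delta)<\eta/2$. For this \emph{fixed} net, pointwise convergence gives $H_n(\theta_j)\overset{\pp}{\longrightarrow}0$ for each $j$, so $\PP_{\theta_0}(\max_j|H_n(\theta_j)|>\varepsilon)<\eta/2$ for all large $n$. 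Combining with the displayed decomposition,
\[
\limsup_{n\to\infty}\PP_{\theta_0}\Bigl(\sup_{\theta\in K}|H_n(\theta)|>2\varepsilon\Bigr) \le \sup_n\PP_{\theta_0}\bigl(w_n(\delta)>\varepsilon\bigr) + \limsup_{n\to\infty}\PP_{\theta_0}\Bigl(\max_{1\le j\le N}|H_n(\theta_j)|>\varepsilon\Bigr) < \eta,
\]
and since $\varepsilon,\eta$ are arbitrary, $\sup_{\theta\in K}|H_n(\theta)|\overset{\pp}{\longrightarrow}0$.

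The main obstacle is the uniform-in-$n$ control of the modulus of continuity, i.e.\ the chaining estimate; the crucial feature that makes it work is that the constant $C_{m,K}$ in (\ref{mean_eq_orig}) does \emph{not} depend on $n$, so every bound on $\EE_{\theta_0}M_{n,k}^{2m}$ and on the tail series is uniform in $n$. The remaining ingredients---the net decomposition, Markov's inequality, and the finite-dimensional convergence---are standard. One could equally invoke the Garsia--Rodemich--Rumsey inequality with $\Psi(x)=x^{2m}$ and $p(u)=u^\alpha$ for $1/m<\alpha<1$ to bound the $2m$-th moment of a H\"older seminorm of $H_n$ uniformly in $n$, which yields the same modulus estimate; the dyadic chaining above is essentially this argument made explicit.
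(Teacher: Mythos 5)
Your proof is correct, but it takes a more hands-on route than the paper. The paper deduces from the moment bound (\ref{mean_eq_orig}) together with the pointwise tightness of $(H_n(\theta))_{n\in\NN}$ that the sequence $(\mathbf{H}_n)_{n\in\NN}$ is tight in $\cc(K,\RR)$ (citing Kallenberg's moment criterion for tightness of continuous processes), combines this with the convergence of all finite-dimensional distributions to $0$ to conclude $\mathbf{H}_n \overset{\dd}{\longrightarrow} 0$ in $\cc(K,\RR)$ with the uniform metric, and finishes with the continuous mapping theorem applied to $h\mapsto\sup_{\theta\in K}|h(\theta)|$, using that convergence in distribution to a constant is convergence in probability. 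You instead unpack exactly what sits inside that tightness criterion: your dyadic chaining estimate $\EE_{\theta_0} M_{n,k}^{2m}\le C'2^{-(2m-1)k}$ is the Kolmogorov--Chentsov computation, and your $\varepsilon$-net decomposition plus Markov's inequality replaces the appeal to weak convergence in function space. Both arguments rest on the same inequality and on the same crucial observation that $C_{m,K}$ does not depend on $n$; yours is self-contained and elementary, while the paper's is shorter because it delegates the chaining to cited results. One small caveat: you specialise to $K=[a,b]$ so that your dyadic grid lies in $K$. The lemma as stated assumes only that $K$ is compact (not convex), and the paper's function-space argument covers that generality, whereas your chaining needs the grid points to belong to $K$ (or an extension of $H_n$ to the convex hull). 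Since the lemma is invoked in the paper only for compact convex $K\subseteq\Theta\subseteq\RR$, i.e.\ closed intervals, this costs nothing in the application, but it is worth flagging that your version is formally slightly weaker.
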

\begin{proof}[\textbf{Proof of Lemma \ref{uniform_convergence}}]
$(H_n(\theta))_{n \in \NN}$ is tight in $\RR$ for all
$\theta \in K$, so, using
(\ref{mean_eq_orig}), it follows from \citet[Corollary
16.9 \& Theorem 16.3]{Kallenberg97} that the sequence of processes $(\mathbf{H}_n)_{n \in \NN}$ is tight
in $\cc(K, \RR)$, the space of continuous (and bounded) real-valued
functions on $K$, and thus relatively compact in distribution. Also, for all
$d\in \NN$ and $(\theta_1, \ldots, \theta_d) \in K^d$,
\begin{align*}
\begin{pmatrix} H_n(\theta_1) \\ \vdots \\ H_n(\theta_d)\end{pmatrix}&
\overset{\dd}{\longrightarrow} \begin{pmatrix} 0 \\ \vdots \\
  0 \end{pmatrix}\,,
\end{align*}
so by \citet[Lemma
16.2]{Kallenberg97}, $\mathbf{H}_n \overset{\dd}{\longrightarrow} 0$
in $\cc(K,\RR)$ equipped with the uniform metric. Finally, by the
continuous mapping theorem, 
$\sup_{\theta \in K} |H_n(\theta)| \overset{\dd}{\longrightarrow} 0\,$,
%
and the desired result follows.
\end{proof}
\section*{Acknowledgement} 

We are grateful to the referees for their insightful comments and
suggestions that have improved the paper. Nina Munkholt Jakobsen was
supported by the Danish Council for 
Independent Research | Natural Science through a grant to Susanne Ditlevsen.
Michael S\o rensen was supported by the Center for Research in
Econometric Analysis of Time Series funded by the Danish National
Research Foundation. The research is part of the
Dynamical Systems Interdisciplinary Network
funded by the University of Copenhagen Programme of Excellence.
\bibliographystyle{apalike}
\bibliography{bibliography} 
\end{document}